\DeclareMathOperator*{\argmax}{argmax}
\DeclareMathOperator*{\Var}{Var}
\DeclareMathOperator{\E}{\mathbb{E}}
\newtheorem{theorem}{Theorem}
\newtheorem*{theorem*}{Theorem}
\newtheorem{assumption}{Assumption}[section]
\newtheorem{proposition}{Proposition}
\newtheorem{lemma}{Lemma}[section]
\newtheorem{definition}{Definition}
\newcommand{\vvec}{\mathbf v}
\newcommand{\onevec}{\mathbf 1}
\newcolumntype{L}[1]{>{\raggedright\let\newline\\arraybackslash\hspace{0pt}}m{#1}}
\newcolumntype{C}[1]{>{\centering\let\newline\\arraybackslash\hspace{0pt}}m{#1}}
\newcolumntype{R}[1]{>{\raggedleft\let\newline\\arraybackslash\hspace{0pt}}m{#1}}
\begin{document}

\begin{titlepage}

\title{Coarse Q-learning: \\ Indifference, Indeterminacy, and Instability\thanks{We thank Evan Friedman, Johannes Hörner, Margarita Pavlova, Ariel Rubinstein, Jean-Marc Tallon, Olivier Tercieux, and Giacomo Weber for valuable comments. All remaining errors are our own.}}
\author{Philippe Jehiel\thanks{Paris School of Economics \& University College London; \href{mailto:jehiel@enpc.fr}{jehiel@enpc.fr}} \and Aviman Satpathy\thanks{Paris School of Economics; \href{mailto:aviman.satpathy@psemail.eu}{aviman.satpathy@psemail.eu}}}
\date{May 1, 2026 \\ \small{\href{https://drive.google.com/file/d/1dMmRfR6TUBI4rosUwU-t5WKMOiXYDh5p/view?usp=share_link}{Click here for the latest version}}}
\maketitle

\thispagestyle{empty} 
\setcounter{page}{0} 

\begin{abstract}
\noindent We introduce Coarse Q-learning (CQL), a reinforcement-learning model for bandit problems with stochastically varying menus. Alternatives are exogenously partitioned into similarity classes, and feedback from sampled alternatives is pooled within classes into class-level valuations. Choices follow multinomial logit over class valuations, and valuations update toward realized payoffs as in Q-learning. Using stochastic approximation, we derive the mean-field dynamics and characterize the steady states as smooth analogues of Valuation Equilibria. The model yields novel long-run phenomena in the high payoff-sensitivity limit: depending on the environment, CQL may exhibit multiple stable strict equilibria, a unique globally stable mixed equilibrium with indifference across classes, or no stable equilibrium at all, with valuations and choice probabilities converging instead to a stable limit cycle. These outcomes are driven by coarse aggregation and do not arise in the standard alternative-level benchmark. \medskip 

\noindent\textbf{Keywords:} Reinforcement Learning, Coarse Inference, Misspecified Learning \\
\noindent\textbf{JEL Codes:} C62, C73, D83, D91
\medskip
\end{abstract}

\end{titlepage}
\pagebreak
\newpage

\section{Introduction}
\label{sec:Intro}

Consider a consumer, Alice, who has recently moved to Paris and dines out regularly. To discover her taste in wine, she orders a bottle to share with friends on each occasion. Facing a vast and unfamiliar selection across different restaurants, Alice simplifies her choice by relying on coarse categories based on region of origin (e.g., Bordeaux, Chianti) and price range. While she typically favors wines from the category that yields the highest expected net benefit based on her current estimates, she also occasionally explores other categories to broaden her palate. After each bottle, she observes her realized utility and updates her valuation of the chosen category in the direction of the experienced pleasure.

We are interested in understanding how Alice would compare these various categories in the long run. The mechanism just described shares many features with reinforcement learning models of trial-and-error adaptation in humans and animals \citep{Skinner,Caplin2008,Niv2009,Glimcher2011}. However, our framework introduces a crucial twist: while standard reinforcement learning evaluates each alternative separately, Alice learns at the level of coarse categories. This coarsening affects both the decision stage (as distinct alternatives within a category are treated alike) and the updating stage (since the realized payoff from a single item updates the estimate of the entire category). Our focus is on how this coarsening alters the learning process, specifically the long-run ranking of the categories.

Our main finding is that coarse, category-level reinforcement generates novel long-run phenomena whenever menus exogenously vary across periods. More precisely, we consider environments where not all categories are available in every menu - a natural benchmark in many settings (just as few restaurants offer wines from every region and every price range). Under these conditions, we identify three distinct long-run outcomes. First, we observe \textit{indeterminacy}: despite continuous, systematic exploration, the long-run ranking of categories can remain highly sensitive to initial conditions, as the system may admit multiple strict rankings as distinct long-run attractors. Second, we find \textit{indifference}: for open sets of underlying objective payoff distributions, Alice may inevitably end up evaluating several categories as equally attractive. Finally, the model can generate \textit{instability}, where Alice's ranking may never settle down, resulting in perpetual fluctuations in her revealed patterns of choice.

Formally, we study a repeated decision problem in which the decision-maker (DM) faces a randomly drawn menu (a subset of alternatives) in each period, generated independently from a stationary distribution. As in the standard multi-armed bandit literature \citep{Robbins52,Gittins}, the DM does not a priori know the payoff distributions of the alternatives and must learn the expected payoffs through sampled experience. We adopt a model-free learning framework, specifically Q-learning \citep{watkins1989learning} or, more broadly, temporal difference learning \citep{Sutton1998}. Formalizing our earlier intuition, we endow the DM with categories defined as \textit{similarity classes} that partition the grand set of alternatives, with learning occurring at this category level. In each period, the DM assigns valuations to the different categories and selects from those available in the drawn menu using a logit (or softmax) choice rule, which probabilistically favors categories with higher valuations.\footnote{The logit formulation has the advantage that it is the unique stochastic choice rule satisfying several desirable properties, such as continuity, independence of irrelevant alternatives, and translation invariance.} If the chosen category contains multiple available alternatives, the DM selects one uniformly at random (our analysis trivially extends to any exogenous tie-breaking rule). Upon realizing the payoff from the chosen alternative, the DM updates the valuation of the corresponding category via a convex combination of the prior estimate and the experienced payoff. The weight placed on the new observation diminishes as the number of times a category is sampled grows, mirroring the standard temporal difference updates in Q-learning.

We first note as a benchmark that if each alternative were treated as a separate category in our model, the DM would eventually learn to select the best available alternative in each menu almost surely (assuming a high payoff-sensitivity level where she overwhelmingly favors the highest-valued option). Under this benchmark, the long-run ranking is independent of initial conditions, features no instability, and exhibits no indifference outside of non-generic cases where mean payoffs are exactly identical. This follows as a corollary of standard convergence results of single-agent Q-learning in the context of general finite Markov decision processes \citep{Watkins1992}. By contrast, when the DM relies on coarse categories, the long-run dynamics are fundamentally altered. To characterize these novel long-run outcomes, we apply stochastic approximation to derive the underlying continuous-time mean-field learning dynamics, a framework we refer to as Coarse Q-Learning (CQL).

We establish that the set of steady states under Coarse Q-Learning is non-empty and, generically, finite with odd parity. As the payoff sensitivity parameter approaches infinity, the accumulation points of these steady states select a subset of the Valuation Equilibria (VE) of \citet{Jehiel2007}. We subsequently use different examples to illustrate the long-run dynamics. In settings with two similarity classes, we demonstrate both the possibility of multiple steady states and the existence of a unique steady state where valuations of the two classes are equalized in the high-sensitivity limit (indifference).\footnote{While the observation that multiple (strict) VE can arise in decision problems appears in \citet{Jehiel2007}, the finding that mixed VE may occur, requiring the equalization of valuations, is new to this paper. Moreover, we show that while the set of mixed VE may be infinite in generic decision trees with three or more classes, the corresponding set of limiting CQL steady states for diverging sensitivity remains finite.} In these two-class examples, the system always converges to a steady state, though the specific attractor depends on the initial conditions when multiple exist (indeterminacy). Finally, using a decision problem with three similarity classes, we show that the learning dynamics may never settle down, instead cycling perpetually through all possible rankings (instability).

We extend these insights by allowing for general payoff distributions and an arbitrary (finite) number of similarity classes. We first prove that when a strict valuation equilibrium (with a strict total order on classes) exists, there is a unique steady state in its neighborhood that is locally asymptotically stable under the Coarse Q-Learning dynamics in the high-sensitivity limit. Conversely, when no strict valuation equilibrium exists, it is possible that no steady state is asymptotically stable, leading to the cyclical dynamics observed in the three-class example. To proceed further, we parameterize decision problems by the magnitude of (class-level) expected payoffs obtained in \say{singleton menus} where all available alternatives belong to the same similarity class. When these singleton menu expected payoffs are uniformly large, we show that there is a unique steady state involving the equalization of valuations over several classes in the high-sensitivity limit; furthermore, the learning dynamics converge to this steady state independent of initial conditions, making it globally asymptotically stable. By contrast, when singleton menu expected payoffs are uniformly small, multiple steady states emerge, with at least one being strict and, hence, locally stable. Finally, we show that instability may become unavoidable when singleton menu payoffs are intermediate.\footnote{Singleton menus are especially tractable because the agent’s class choice is degenerate and valuation-independent there. They provide a clean way to shift payoffs exogenously without directly affecting the interaction among classes in non-singleton menus. For an economic motivation, see the application in \ref{sec:salience}.}

The fundamental mechanism driving these novel dynamics is the endogeneity of the learning process: under Coarse Q-Learning, the valuation updates for a given category depend endogenously on how frequently that category is chosen across different menus. Ultimately, a category's long-run valuation is a weighted average of the payoffs from its constituent alternatives, but crucially, these weights are determined by the DM's endogenous and evolving strategy.\footnote{If menus were invariant, this weighting would be independent of the choice strategy. This underscores the critical role that menu variation (coupled with coarse categorization) plays in generating these rich dynamics. Bandits with stochastically varying menus or \say{sleeping bandits} have also been studied in the online regret-minimization literature \citep{kleinberg} but without the coarse categorization aspect.} Consequently, steady states must be analyzed as fixed points of a dynamical system rather than the solutions to an optimization problem, which also explains why convergence is not always guaranteed, as our three-class example demonstrates. The stabilizing effect of large singleton menu payoffs arises because an overvalued category is chosen more frequently in competitive (multi-category) menus where expected payoffs are relatively lower. This disproportionate sampling pulls its valuation downward, creating a mean-reverting pressure. Conversely, when singleton menu payoffs are uniformly small, the highest payoffs are obtained in competitive menus, which makes several (initial) strict rankings of valuations self-confirming and naturally generates a multiplicity of locally stable steady states.

The above mechanism relies on the DM pooling heterogeneous alternatives into coarse categories. We motivate such coarse learning as a natural response to the complexity of rich choice environments. When the universe of alternatives is vast (as in the wine example), evaluating and memorizing a distinct valuation for every single item is cognitively infeasible. A DM constrained to tracking at most $n$ valuations must compress the environment into no more than $n$ coarse classes. Indeed, the necessity of categorization for learning has a rich tradition in psychology \citep{rosch1978principles}. In this light, our framework formally bridges two foundational psychological concepts - reinforcement learning and categorization, within a tractable economic model. While we treat the DM's similarity partition as a primitive, a natural interpretation is that decision-makers group alternatives based on their most salient attributes \citep{Tversky,Shleifer2012,Shleifer2013}. From this perspective, physiological and psychological considerations can inform which attributes become salient in a given context, such as region and price for wine. In market settings, this raises the possibility that strategic actors, such as monopolist online platforms, may exploit limited salience by influencing which attributes consumers attend to, and/or by shaping which products represent each perceived category across menus, as illustrated in an application in Section~\ref{sec:salience}. 

Sec.~\ref{sec:Model} presents the learning model in discrete time, and provides a continuous-time ODE approximation of the asymptotic dynamics. Sec.~\ref{sec:sim} illustrates the novel qualitative phenomena observed in the long-run through various examples. Sec.~\ref{sec:results} presents our main analytical results on the equilibrium structure as well as convergence (or lack of) of Coarse Q-learning trajectories to equilibrium. Sec.~\ref{sec:discuss} discusses complementary interpretations for the exogenous similarity partition in our model and provides a misspecified Bayesian foundation for our learning model. Sec.~\ref{sec:litreview} discusses related literature and Sec.~\ref{sec:conclude} concludes the paper by highlighting key takeaways and open questions. All proofs appear in the appendix.

\vspace{-0.1in}
\section{Model}
\label{sec:Model}

\subsection{Primitives}
\label{sec:primitives}

Let \(\mathcal A\) be a finite universal set of alternatives, and let \(
\Psi:=\{\psi\subseteq \mathcal A:\psi\neq\varnothing\} \) denote the set of \emph{menus}. Time is discrete, $k\in\mathbb N_0$. In each period $k$, Nature draws a menu $\psi_k\in\Psi$ i.i.d.\ according to an exogenous probability mass function $f\in\Delta(\Psi)$. If Alice samples alternative $a\in\psi_k$ in period $k$, she observes a payoff $r_{a,k}$ distributed according to $R_a\in\Delta(\mathbb R)$. For each $a\in\mathcal A$, the sequence $\{r_{a,k}\}_{k\ge 0}$ is i.i.d., and payoff draws are independent across alternatives and time. Payoff laws are menu-independent: $R_a$ does not depend on $\psi_k$. We assume a uniform moment bound: there exist $\delta>0$ and $\kappa<\infty$ such that \(\sup_{a\in\mathcal A}\; \mathbb E\!\left[\,|r_{a,k}|^{\,2+\delta}\,\right] \le \kappa,\) and define $\mu_a:=\mathbb E[r_{a,k}]$. The collection $\{R_a\}_{a\in\mathcal A}$ is unknown to the risk-neutral Alice who wishes to learn her preferences represented by $\{\mu_a\}_{a\in\mathcal A}$ by aggregating sampled payoffs.\footnote{A classic interpretation is that each alternative $a$ has a deterministic \emph{fundamental} payoff $\mu_a$, and the observed payoff is a noisy signal $r_{a,k}=\mu_a+\varepsilon_{a,k}$, where $\{\varepsilon_{a,k}\}_{k\ge 0}$ are i.i.d.\ mean-zero shocks. The stochasticity of payoffs may reflect either imperfect monitoring of the fundamentals or idiosyncratic taste shocks.}

We assume that Alice represents the space of alternatives through a coarse \say{similarity} partition. Formally, let $\sim$ be an exogenously given equivalence relation on $\mathcal A$, interpreted as perceived similarity. Let \(\mathcal S := \mathcal A/\!\sim \;= \{[a]:a\in\mathcal A\} \) denote the set of similarity classes (quotient space), where $[a]:=\{a'\in\mathcal A: a'\sim a\}$ is the equivalence class of $a$. The associated canonical surjection map \(\rho:\mathcal A\rightarrow\mathcal S,\ \rho(a):=[a] \) induces a partition of $\mathcal A$ into similarity classes. Throughout, Alice represents an alternative $a$ only via its class $s = \rho(a)\in\mathcal S$. Moreover, she forms and updates payoff estimates only at the class level, treating alternatives within a class as having approximately the same expected payoff. We motivate this coarse representation in Sections~\ref{sec:salience}--\ref{sec:complexity}, via complementary interpretations that apply across a wide range of economic contexts: (i) misspecification arising from limited salience of payoff-relevant attributes that characterize alternatives, and (ii) memory and sample complexity constraints when the space of alternatives is very large, in line with categorization principles widely studied in psychology \citep{anderson1991adaptive,rosch1978principles}.

\vspace{-0.11in}
\subsection{Discrete-time Learning Model}
\label{sec:dynamics}

Let $\mathcal T=(\mathcal A,\Psi,f,r)$ denote the stage decision tree with Nature at the root. In each period $k$, Nature draws a menu $\psi_k\in\Psi$ i.i.d.\ with law $f$.
Alice perceives the realized menu only through the induced set of classes
\( \omega_k:=\rho[\psi_k]=\{\rho(a):a\in\psi_k\}\subseteq\mathcal S. \) Given her information $\mathcal I_k$,\footnote{Alice's information is summarized by the filtration generated by realized class menus, chosen classes, and realized payoffs. At the start of period $k$, she observes the current class menu $\omega_k$ along with past $(s_t,r_t)_{t<k}$. Formally, let \(\mathcal I_0:=\sigma\!\operatorname{-alg}(\omega_0),\) with \( \mathcal I_{k+1}:=\sigma\!\operatorname{-alg}(\mathcal I_k,\, s_k,\, r_k,\, \omega_{k+1}), \) so that $\mathcal I_k=\sigma\!\operatorname{-alg}(\omega_0,s_0,r_0,\ldots,\omega_k)$.} she selects a class $s_k\in\omega_k$ according to an $\mathcal I_k$-measurable stochastic choice policy $\sigma_k (\cdot\mid\omega_k)$, then chooses some alternative $a_k\in s_k \cap \psi_k$ (according to a fixed tie-breaking rule), and receives payoff $r_k=r_{a_k,k}\sim R_{a_k}$. Conditional on $a_k$, the payoff draw $r_k$ is independent of the history and of the current menu $\psi_k$. Thus, the stage problem repeats i.i.d.\ over time.

We introduce \emph{valuations} as real-valued functions on the set of similarity classes,
\(v:\mathcal S\to\mathbb R\). For each period \(k\), let \(v_k(s)\) denote Alice's current estimate of the expected payoff of class \(s\in\mathcal S\), and let \(\mathbf v_k:=(v_k(s))_{s\in\mathcal S}\in\mathbb R^{\mathcal S}\) be the valuation vector that parametrizes her decision rule. Alice maintains one valuation per class, treating alternatives within a class as payoff-equivalent in expectation and residual within-class heterogeneity as i.i.d.\ noise.

In period $k$, faced with menu $\psi_k$, Alice perceives each $a\in\psi_k$ only through its similarity class $s=\rho(a)$ and therefore randomizes over the available class set $\omega_k:=\rho[\psi_k]\subseteq\mathcal S$. She selects a class $s_k\in\omega_k$ with probability $\sigma_k(s_k\mid \omega_k)$, where $\sigma_k(\cdot\mid\omega_k)$ depends on the current valuations $(v_k(s'))_{s'\in\omega_k}$ of available classes. Conditional on choosing class $s_k$, she selects an alternative $a\in s_k\cap\psi_k$ uniformly at random.\footnote{Since Alice's information $\mathcal I_k$ contains no within-class identities, any admissible $\mathcal I_k$-measurable within-class rule cannot condition on labels of alternatives in $s_k\cap\psi_k$. More generally, after Alice selects a class \(s_k\), the realized alternative can
be drawn according to an exogenously specified tie-breaking kernel
\(\Upsilon(\cdot\mid s_k,\psi_k)\) supported on \(s_k\cap\psi_k\). We work with
the uniform kernel,
\(
\Upsilon(a\mid s,\psi_k)=\dfrac{1}{|s\cap\psi_k|},
\ a\in s\cap\psi_k,
\)
because it is anonymous (label-independent) within each class but our analysis extends to any exogenous kernel.} For each available $\omega\subseteq\mathcal S$ and valuation profile $\mathbf v\in\mathbb R^{\mathcal S}$, a stochastic choice rule
assigns a \textit{mixed action} $\sigma(\cdot\mid \omega;\mathbf v)\in\Delta(\omega)$. We model $\sigma(\cdot\mid \omega;\mathbf v)$ using the multinomial logit (softmax) rule \citep{Luce1959}, as is standard in classic learning models such as reinforcement learning and stochastic fictitious play \citep{Sutton1998,fudenberg1998theory,hofbauer2002global}.\footnote{Building on \citet{Luce1959}, \cite{breitmoser} provides an axiomatic characterization of multinomial logit, showing that for a fixed utility scale, it is the unique stochastic choice rule that satisfies continuity, positivity, independence of irrelevant alternatives (IIA), label-independence, and translation invariance.}
Specifically, given menu $\omega_k\subseteq\mathcal S$ and valuations $\vvec_k\in\mathbb R^{\mathcal S}$, class $s$ is chosen in period $k$ with probability
\begin{equation*}
\sigma(s\mid \omega_k;\vvec_k)
=\mathbf 1\{s\in\omega_k\}\,
\frac{\exp \bigl(\beta\, v_k(s)\bigr)}{\displaystyle\sum_{j\in\omega_k}\exp \bigl(\beta\, v_k(j)\bigr)}, \qquad \beta \ge 0.
\end{equation*}
The parameter $\beta \in \mathbb R_+$ is a scaling constant that determines Alice's sensitivity to valuation differences. It has a smoothing effect with $\beta = 0$ leading to a uniform choice over all available classes, while for $\beta \to \infty$, the probabilities concentrate on similarity class(es) with the highest valuation(s). Most of the analysis in this paper is conducted in the high-sensitivity limit ($\beta \to \infty$) where Alice almost surely chooses an alternative in a similarity class with maximal current valuation. Let  $|\cdot|$ denote the cardinality of a set. Under the nested rule (softmax over classes and uniform tie-breaking within the chosen class),\footnote{Alice's two-step nested choice rule (logit over classes, uniform within a class) is immune to the \say{duplicates problem} \citep{Debreu1960}. If an alternative $a\in\psi$ is duplicated by adding a copy $a'$ with $\rho(a')=\rho(a)=:s$, then the induced class menu $\omega=\rho(\psi)$ is unchanged, hence the class-choice probabilities $\sigma(\cdot\mid\omega;\vvec_k)$ are unchanged. Only the within-class implementation adjusts: under uniform tie-breaking, each $b\in s\cap\psi$ is chosen with probability $\sigma(s\mid\omega;\vvec_k)/|s\cap\psi|$. Consequently, the total probability of selecting class $s$ is invariant to duplication, avoiding the inflation that would arise under a one-step Luce choice rule over alternatives in the presence of identical copies. Finally, because Alice models payoff shocks as i.i.d.\ across alternatives (hence no correlation across classes), the usual information-based critiques of IIA do not apply.} given menu $\psi_k$ and valuations $\vvec_k\in\mathbb R^{\mathcal S}$, alternative $a$ is chosen in period $k$ with probability 
\begin{equation}\label{eq:choice}
\nu_k(a\mid \psi_k,\vvec_k)
:=\Pr(a_k=a\mid \psi_k,\vvec_k)
=\frac{\sigma(\rho(a)\mid \omega_k;\vvec_k)}{|\,\rho(a)\cap\psi_k\,|},
\qquad a\in\psi_k.
\end{equation}
At the transition from period $k$ to $k{+}1$, after choosing $a_k\in\psi_k$ according to the alternative-level mixed action $\nu_k(\cdot\mid \psi_k,\vvec_k)$, Alice observes the realized payoff
\(
r_k := r_{a_k} \sim R_{a_k}
\)
and the next menu $\psi_{k+1}$, and receives no feedback about forgone payoffs. At the beginning of period
$k{+}1$ (before choosing $a_{k+1}$), she updates only the valuation $v_k(s_k)$ of the previously chosen class \( s_k:=\rho(a_k) \) toward the observed payoff via the temporal-difference (TD) recursion
\begin{equation}
        v_{k+1}(s)
        \;=\;
        v_{k}(s)
        +
        \alpha_k(s)\,\mathbf{1}\{s=s_k\}\!\left[r_{a_k}
        + \gamma \max_{s'\in \omega_{k+1}} v_k(s') - v_k(s)\right],
        \quad s\in\mathcal S ,
        \label{eq:update}
\end{equation}
where \(\gamma \in [0,1)\) is her discount factor and \(M_{k+1} := \max_{s'\in \omega_{k+1}} v_k(s') \) is her maximum continuation value in period \(k{+}1\) given her period-\(k\) valuations \(\mathbf{v}_k\). Alice's update rule is a simple temporal-difference value iteration scheme with a Bellman-style one-step target: between periods $k$ and $k{+}1$, she revises only the valuation of the chosen class $s_k=\rho(a_k)$ by taking an affine combination of its current valuation $v_k(s_k)$ and the TD target \( r_k + \gamma M_{k+1},\) with step-size $\alpha_k(s_k) > 0$ determining the weight placed on new information. In vector form, letting $\mathbf e_s\in\mathbb R^{\mathcal S}$ denote the standard basis vector, the TD update can be written as \[ \vvec_{k+1} = \vvec_k + \alpha_k(s_k)\Big(r_{a_k}+\gamma M_{k+1}-v_k(s_k)\Big) \mathbf e_{s_k}.\] This mirrors the bandit (stateless) variant of canonical Q-learning \citep{Watkins1992,Tsitsiklis1994}, except that valuations are indexed by similarity classes rather than individual alternatives, reflecting that Alice treats only classes as payoff-relevant. Our goal is to characterize how this coarsening affects the long-run Q-learning dynamics.

We assume that for each $s\in\mathcal S$, $\{\alpha_k(s)\}_{k\ge0}$ are (possibly stochastic) step-sizes adapted to Alice's information $(\mathcal I_k)_{k\geq0}$, and satisfy the \citet{Robbins-et-al} conditions:\footnote{When $\bar{\alpha}=1$, the valuation update can be interpreted as a convex combination of the old estimate and the new information in each period, with $\alpha_k(s)$ being the weight on the latter.}
\[
0 < \alpha_k(s) \leq \bar{\alpha} < \infty \quad\text{and}\quad
\sum_k \alpha_k(s)=\infty
\quad\text{and}\quad
\sum_k (\alpha_k(s))^2<\infty
\quad\text{almost surely}.
\]
These conditions on the step-sizes imply that Alice's sensitivity to new observations diminishes eventually, while ensuring that future observations still exert a non-negligible impact.\footnote{If Alice views payoffs as i.i.d.\ draws from a stationary distribution, it's natural that she eventually downweights new observations (see Sec.~\ref{sec:conjugate} for a Bayesian foundation or Stylized Fact 2 in \citet{BENJAMIN2019}). While Q-learning is sometimes criticized as slow or sample-inefficient as a model of human learning \citep{Daw}, we show in \href{https://drive.google.com/file/d/1SfR7HiB3HyIAYz2R3dD38JKgJi70GFKv/view?usp=share_link}{Online Appendix (Sec.~A)} that this assessment is sensitive to the learning-rate design: inverse-propensity–weighted step-sizes that upweight signals from rarely sampled classes accelerate convergence by asymptotically equalizing effective information arrival rates across classes.} Alice's choices in period $k$ are determined by her current valuation vector $\vvec_k$, while the realized payoff (and hence the update to $\vvec_{k+1}$) depends on $\vvec_k$ through the induced choice probabilities. Iterating this feedback yields a discrete-time stochastic process $\{\vvec_k\}_{k\ge 0}$ governed by \eqref{eq:choice} and \eqref{eq:update}. The discrete-time Coarse Q-learning model is fully specified by \eqref{eq:choice}--\eqref{eq:update} together with an initial condition $\vvec_0$, interpreted as Alice's initial assessments of class-level expected payoffs. To characterize long-run behavior, we use stochastic approximation. Under Robbins-Monro step-sizes, \eqref{eq:update} can be written as \(v_{k+1}(s)-v_k(s)=\alpha_k(s)\Big(h_s(\vvec_k)+\eta_{k+1}(s)\Big),\ s\in\mathcal S,\) where $h_s(\vvec_k):=\E[v_{k+1}(s)-v_k(s)\mid \mathcal I_k]/\alpha_k(s)$ is the conditional drift (systematic update direction) and $\eta_{k+1}(s)$ is a mean-zero noise term with $\E[\eta_{k+1}(s)\mid\mathcal I_k]=0$  implying $\{\eta_{k+1}(s)\}_{k\ge 0}$ is a martingale-difference sequence.\footnote{A formal construction of $(h,\eta)$ and the relevant regularity conditions for SA appear in the \href{https://drive.google.com/file/d/1SfR7HiB3HyIAYz2R3dD38JKgJi70GFKv/view?usp=share_link}{Online Appx.\ } } The conditional drift \( h_s(\vvec_k) \) written as \( g_s(\vvec_k) - v_k(s)\) with \(g_s(\vvec)=\E\!\left[r_k+\gamma\max_{j\in\omega_{k+1}} v(j)\ \middle|\ s_k=s,\ \vvec_k=\vvec\right] \), represents the expected prediction error for class $s$ given current estimates. Intuitively, the recursion moves $v(s)$ upward when the realized reward for class $s$ exceeds $v(s)$ on average, and downward otherwise. Replacing random increments by their conditional means yields the mean-field ODE $\dot v(s)=h_s(\vvec),\ s\in\mathcal S$, which describes the expected motion of the learning process in the long run.

\vspace{-0.13in}
\subsection{Continuous-time Asymptotic Approximation}
\label{sec:continuousmodel}

Standard stochastic-approximation results \citep{Benaim1999,kushner2003} imply that the continuous-time interpolation of $\{\vvec_k\}_{k\ge 0}$ is an asymptotic pseudo-trajectory of the mean-field ODE (see \href{https://drive.google.com/file/d/1SfR7HiB3HyIAYz2R3dD38JKgJi70GFKv/view?usp=share_link}{Online Appendix}). In particular, if valuations remain bounded a.s.,\footnote{This holds trivially under bounded payoffs; under unbounded payoffs we work with a projected recursion onto a sufficiently large, compact, positively invariant set $K$, which is without loss for the mean-field analysis since the ODE vector field points inward on $\partial K$ (see the \href{https://drive.google.com/file/d/1SfR7HiB3HyIAYz2R3dD38JKgJi70GFKv/view?usp=share_link}{Online Appendix (Sec.~A)} for details).} then the $\omega$-limit set of any sample path of \eqref{eq:update} is a.s.\ a compact, connected, internally chain-transitive (ICT) set of the mean-field flow \citep{Benaim1999}.\footnote{The $\omega$-limit set of a stochastic process $\{\mathbf x_k\}$ is the set of all points $\mathbf x$ in the associated state space such that $\mathbf x_{k_n}\to\mathbf x$ along some subsequence $k_n\to\infty$, almost surely. Internally chain transitive (ICT) sets of the flow generated by the ODE \eqref{eq:differential} are compact invariant sets that are chain-transitive under arbitrarily small pseudo-orbits \citep{conley}. They arise as $\omega$-limit sets of asymptotic pseudo-trajectories and may consist of equilibria, periodic orbits, or chaotic attractors.} Starting with the case of a myopic DM ($\gamma=0$), the mean-field Coarse Q-learning (CQL) dynamics are given by:
\begin{equation}\label{eq:differential}
\dot v_s = g_s(\vvec)-v_s,\qquad s\in\mathcal S,
\end{equation}
where $g_s(\vvec)$ is the expected payoff of class $s$ conditional on choosing $s$ under $\vvec$:
\[
g_s(\vvec)
:=
\frac{\sum_{\psi\in\Psi} f(\psi)\,\sigma(s\mid \rho(\psi);\vvec)\,\mu_s(\psi)}
{\sum_{\psi\in\Psi} f(\psi)\,\sigma(s\mid \rho(\psi);\vvec)} ,
\qquad
\mu_s(\psi):=\frac{1}{|s\cap\psi|}\sum_{a\in s\cap\psi}\mu_a,
\]
and $\mu_a:=\E[r_{a}]$. By translation invariance of softmax,
$g(\vvec+c\onevec)=g(\vvec)$ for all $c\in\mathbb R$.

To build intuition for the ODE approximation, fix a valuation vector $\vvec$ and consider a short time window over which
valuations change negligibly, so behavior is well approximated by the stationary logit policy induced by $\vvec$. In each
period Nature draws $\psi$ with probability $f(\psi)$, and conditional on $\psi$ Alice selects $s\in\rho(\psi)$ with probability $\sigma(s\mid\rho(\psi);\vvec)$. If $s$ is selected in menu $\psi$, the expected payoff equals the within-menu average $\mu_s(\psi)=|s\cap\psi|^{-1}\sum_{a\in s\cap\psi}\mu_a$. Averaging over menus and conditioning on $s$ being chosen yields the class-$s$ signal
\( g_s(\vvec) = \E\big[\mu_s(\psi)\mid s \text{ is chosen under the policy induced by }\vvec\big], \) which is the ratio defining $g_s(\vvec)$ above. Under the myopic TD update ($\gamma=0$), the expected increment in $v_s$ is proportional to the prediction error $g_s(\vvec)-v_s$, leading after time-rescaling to the mean-field dynamics $\dot v_s=g_s(\vvec)-v_s$.  Vanishing step-sizes underpin this approximation: valuations evolve slowly relative to the i.i.d.\ menu and payoff draws, so over long horizons the accumulated noise averages out and the recursion tracks the deterministic drift.\footnote{For e.g., take \(\alpha_k=(k+1)^{-1}\). The errors \(\{\eta_{k+1}\}\) form a martingale-difference sequence with \(\E[\eta_{k+1}\mid \mathcal I_k]=0\) and, under the uniform moment bound, \(\E[\|\eta_{k+1}\|^2\mid \mathcal I_k]=O(1)\). Hence the scaled noise increment \(\alpha_k\eta_{k+1}\) has conditional variance \(\Var(\alpha_k\eta_{k+1}\mid \mathcal I_k)=\alpha_k^2\Var(\eta_{k+1}\mid \mathcal I_k)=O(1/k^2)\), whereas the deterministic drift increment \(\alpha_k h(\cdot)\) is of order \(O(\alpha_k)=O(1/k)\). With such a deterministic calendar-time step-size, the limiting mean-field dynamics are generally asynchronous: each component is rescaled by the corresponding strictly positive class-selection propensity, so that \(\dot v_s=\lambda_s(\vvec)\big(g_s(\vvec)-v_s\big)\) for suitable \(\lambda_s(\vvec)>0\). The synchronous ODE in \eqref{eq:differential} is obtained by using inverse-propensity-weighted step-sizes, with propensities estimated on a faster timescale, which asymptotically equalize expected updating rates across classes. As discussed in Sec.~\ref{sec:conjugate}, our results extend to the asynchronous case, so this normalization is without loss.}

Even if each alternative $a\in\mathcal A$ has a menu-invariant expected payoff $\mu_a$, the class-level mean \( \mu_s(\psi) \) is typically menu-dependent: different menus induce different within-class subsets $s\cap\psi$, and alternatives within a class need not share the same $\mu_a$.\footnote{Our analysis extends verbatim to menu-dependent primitive payoffs. Allow $r_{a,k}\sim R_{a,\psi_k}$ whenever $a\in\psi_k$, with means $\mu_a(\psi):=\E[r_{a,k}\mid a\in\psi_k=\psi]$, and assume the same conditional independence holds uniformly over $(a,\psi)$. Then \( \mu_s(\psi):=\frac{1}{|s\cap\psi|}\sum_{a\in s\cap\psi}\mu_a(\psi) \) represents the class average in the drift, with no other changes to the mean-field analysis. We impose menu-invariant $\{R_a\}$ for parsimony, to emphasize that menu dependence in our dynamics is induced by
coarse perception rather than by primitive payoff externalities.} Under uniform within-class tie-breaking, choosing class $s$ in menu $\psi$ therefore generates a payoff drawn from the uniform mixture over $s\cap\psi$, so the distribution of class-$s$ signals varies with the realized menu.  Moreover, the signal stream relevant for updating class $s$ is endogenously reweighted across menus: conditional on $s$ being chosen under valuations $\vvec$, menus are weighted by $f(\psi)\sigma(s\mid\rho(\psi);\vvec)$, and $\sigma(s\mid\rho(\psi);\vvec)$ depends on the set of co-occurring classes $\rho(\psi)$, which varies with $\psi$. As a result, the empirical signals used to update each class depend on the menu process and endogenously on Alice's past valuations through her own sampling behavior, even if payoffs are i.i.d.\ at the alternative-level, leading to \emph{active learning} dynamics.

For a forward-looking DM ($\gamma>0$), the TD increment in \eqref{eq:update} includes the greedy Q-learning continuation term $\max_{j\in\rho(\psi)} v(j)$. More generally, let $\kappa_\psi(\vvec)$ denote the continuation operator associated with menu $\psi$ and define $C(\vvec):=\sum_{\psi\in\Psi} f(\psi)\,\kappa_\psi(\vvec)$. Since menu transitions are action-independent, the expected continuation term does not depend on the chosen class, so it enters the mean drift as a common scalar $\gamma C(\vvec)$ added to every component: \( \dot{\vvec}=g(\vvec)+\gamma C(\vvec)\onevec-\vvec.\) If $\kappa_\psi$ is translation-covariant, i.e.\ $C(\vvec+c\onevec)=C(\vvec)+c$ as $\max_{j\in\rho(\psi)} v(j)$ clearly is, then this common term can be translated out without affecting the induced softmax policy.\footnote{\eqref{eq:update} uses the greedy Q-learning operator $\kappa_\psi(\vvec)=\max_{j\in\rho(\psi)} v(j)$. Two common alternatives are the LogSumExp operator in dynamic logit $\kappa_\psi(\vvec)=\beta^{-1}\log\sum_{j\in\rho(\psi)}\exp(\beta v(j))$ \citep{Rust,steiner2017} and the SARSA operator $\kappa_\psi(\vvec)=\sum_{j\in\rho(\psi)}\sigma(j\mid\rho(\psi);\vvec)\,v(j)$ \citep{Sutton1998}. All are translation-covariant, hence induce the same relative dynamics in our setting with action-independent menu transitions.} Accordingly, the relative valuations $\tilde{\vvec}(t):=\vvec(t)-\eta(t)\onevec$
(with $\dot\eta=\gamma C(\vvec)-\eta$) evolve as in the myopic mean-field system $\dot{\tilde{\vvec}}=g(\tilde{\vvec})-\tilde{\vvec}$ (see Lemma~\ref{lem:translation-reduction}). Thus, the forward-looking system is semi-conjugate to the myopic system via the time-varying shift $\mathbf v=\tilde{\mathbf v}+\eta\mathbf 1$; only the level of valuations changes, not their relative evolution. As softmax probabilities depend only on relative valuations, this semi-conjugacy ensures that all qualitative long-run phenomena established for the myopic case carry over exactly to the forward-looking agent.

\vspace{-0.1in}
\paragraph{Reduction of the decision tree:} Since the long-run behavior of the discrete-time CQL model in \eqref{eq:update} is governed by the mean-field CQL dynamics in \eqref{eq:differential}, it's without loss to collapse the original decision tree $\mathcal T$ by aggregating alternatives within each class. We denote the reduced menu space by \( \Omega := \{\omega\subseteq\mathcal S: \omega\neq\varnothing\} \), and let \(\Psi_{\omega}:=\{\psi\in\Psi: \rho(\psi)=\omega\}\). The latter aggregates all menus in the original decision tree that feature the same set of available similarity classes. We define a probability mass function on \(\Omega\) by \( p(\omega)=\sum_{\psi\in\Psi_{\omega}} f(\psi), \ \forall\ \omega\in\Omega. \) For each \(\omega\in\Omega\) and \(s\in\omega\), define the expected payoff of class $s$ in menu \(\omega\) by
\[ \pi_s(\omega)
\;=\;
\frac{\displaystyle \sum_{\psi\in\Psi_{\omega}} f(\psi)\,\mu_s(\psi)}
     {\displaystyle \sum_{\psi\in\Psi_{\omega}} f(\psi)}\,,
\qquad
\mu_s(\psi)\;:=\;\frac{1}{|s\cap\psi|}\sum_{a\in s\cap\psi}\mu_a.
\] The reduced tree \(\mathcal T'(\mathcal{S},\Omega,p,\pi)\) has menu-space \(\Omega\), with menus drawn i.i.d.\ by Nature with probability $p(\cdot)$, and class-level expected payoffs \(\pi_s(\cdot)\), for example, see Fig.~\ref{fig:bob-trees}. Each menu in $\mathcal T'$ features a unique set of available classes. Denoting the logit policy in menu $\omega$ by
\[
\sigma^s_{\omega}(\mathbf v)
\;=\;\mathbf 1\{s\in\omega\}\,
\frac{\exp(\beta v_{s})}{\sum_{j\in\omega}\exp(\beta v_{j})},
\] the conditional expected payoff of class \(s\) computed on \(\mathcal T'(\mathcal{S},\Omega,p,\pi)\) under $\vvec$
\[
g_s(\mathbf v)
=\frac{\sum_{\omega\in\Omega}\,p(\omega)\,\sigma^s_{\omega}(\mathbf v)\,\pi_s(\omega)}
       {\sum_{\omega\in\Omega}\,p(\omega)\,\sigma^s_{\omega}(\mathbf v)},
\]
is identical to the drift obtained on the original tree \(\mathcal T(\Psi,f,r)\), since \(\pi_s(\omega)\) is exactly the \(f\)-average of the within-menu class expected payoffs \(\mu_s(\psi)\) over \(\Psi_{\omega}\). Hence the reduction to \(\mathcal T'\) is without loss for the continuous-time CQL dynamics \(\dot v_{\,s}=g_s(\mathbf v)-v_{\,s}\) as the continuous-time limit depends only on the expected motion of the discrete-time stochastic CQL process.\footnote{To see this, note that for any two menus $\psi,\psi'\in\Psi_{\omega}$ the choice policy over similarity classes depends only on the set of available classes $\omega$ and the current valuations, hence \(\sigma^s_{\psi}(\mathbf v)=\sigma^s_{\psi'}(\mathbf v)=\sigma^s_{\omega}(\mathbf v)\) for all $s\in\omega$, and, conditional on choosing class $s$, the within–class randomization is uniform in both menus.}
\begin{figure}[ht]
  \centering
  \begin{subfigure}{0.65\linewidth}
    \centering
    \resizebox{\linewidth}{!}{%
      \begin{tikzpicture}[scale=2.5, font=\small]
      \tikzstyle{solid node}=[circle,draw,inner sep=1.2,fill=black]
      \tikzstyle{hollow node}=[circle,draw,inner sep=1.2]
      \tikzstyle{level 1}=[level distance=10mm,sibling distance=3.5cm]
      \tikzstyle{level 2}=[level distance=12mm,sibling distance=1.5cm]
      \tikzstyle{level 3}=[level distance=15mm,sibling distance=1cm]
      \node(0)[solid node,label=above:{$r$}]{}
      child{node[solid node,label=above left:{$\psi_1$}]{}
        child{node[hollow node,label=below:{$(3)$}]{} edge from parent node[left]{\textcolor{red}{Bordeaux (2019)}}}
        child{node[hollow node,label=below:{$(0)$}]{} edge from parent node[right]{\textcolor{blue}{Chianti (2014)}}}
        edge from parent node[left,xshift=5, yshift = 10]{$\frac{3}{10}$}
      }
      child{node[solid node,label=above right:{$\psi_2$}]{}
        child{node[hollow node,label=below:{$(2)$}]{} edge from parent node[left]{\textcolor{blue}{Chianti (2016)}}}
        child{node[hollow node,label=below:{$(0)$}]{} edge from parent node[right]{\textcolor{blue}{Chianti (2014)}}}
        edge from parent node[right, yshift = 10] {$\frac{6}{10}$}
      }
      child{node[solid node,label=above right:{$\psi_3$}]{}
        child{node[hollow node,label=below:{$(2)$}]{} edge from parent node[left]{\textcolor{blue}{Chianti (2016)}}}
        child{node[hollow node,label=below:{$(1)$}]{} edge from parent node[right]{\textcolor{red}{Bordeaux (2013)}}}
        edge from parent node[right,xshift=-5, yshift = 10]{$\frac{1}{10}$}
      };
      \end{tikzpicture}
    }
    \caption{Alice's original decision tree $\mathcal{T}$}
    \label{fig:original-tree}
  \end{subfigure}\hfill
  \begin{subfigure}{0.35\linewidth}
    \centering
    \resizebox{\linewidth}{!}{%
      \begin{tikzpicture}[scale=2.5, font=\small]
      \tikzstyle{solid node}=[circle,draw,inner sep=1.2,fill=black]
      \tikzstyle{hollow node}=[circle,draw,inner sep=1.2]
      \tikzstyle{level 1}=[level distance=10mm,sibling distance=3.5cm]
      \tikzstyle{level 2}=[level distance=12mm,sibling distance=1.5cm]
      \tikzstyle{level 3}=[level distance=15mm,sibling distance=1cm]
      \node(0)[solid node,label=above:{$r'$}]{}
      child{node[solid node,label=above left:{$\omega_1$}]{}
        child{node[hollow node,label=below:{$(2.5)$}]{} edge from parent node[left]{\textcolor{red}{Bordeaux}}}
        child{node[hollow node,label=below:{$(0.5)$}]{} edge from parent node[right]{\textcolor{blue}{Chianti}}}
        edge from parent node[left,xshift=5, yshift = 10]{$\frac{4}{10}$}
      }
      child{node[solid node,label=above right:{$\omega_2$}]{}
        child{node[hollow node,label=below:{$(1)$}]{} edge from parent node[right]{\textcolor{blue}{Chianti}}}
        edge from parent node[right, yshift = 10] {$\frac{6}{10}$}
      };
      \end{tikzpicture}
    }
    \caption{Alice's reduced decision tree $\mathcal{T'}$}
    \label{fig:reduced-tree}
  \end{subfigure}
  \caption{Example of tree-reduction for the mean-field CQL dynamics}
  \label{fig:bob-trees}
\end{figure}

\vspace{-0.1in}
\paragraph{Trivial cases:} Since $g_s(\mathbf v)$ involves the softmax $\sigma^s_\omega(\mathbf v)$, the drift is a transcendental function of $\mathbf v$, and closed-form algebraic solutions of \eqref{eq:differential} are generally unavailable for $\beta>0$. We list four regimes in which the drift is linear in $\vvec$ and the ODE decouples with
\(
\dot{\mathbf v}(t)=\bar{\boldsymbol\pi}-\mathbf v(t) \Rightarrow \mathbf v(t) = \bar{\boldsymbol\pi} + \bigl(\mathbf v(0)-\bar{\boldsymbol\pi}\bigr)e^{-t},
\ \forall\ t\ge 0
\) and for some constant \(\bar{\boldsymbol\pi}\).
Hence, as \(t\to\infty\),
\(
\mathbf v(t)\to \bar{\boldsymbol\pi}.
\)
\begin{enumerate}
\item[\textbf{(i)}] \emph{Pure exploration} ($\beta=0$) where Alice is totally insensitive to payoffs. For any $\omega=\rho[\psi]$, $\sigma^s_\omega(\mathbf v)=1/|\omega|$, so the constant
\(
\bar\pi_s
=\frac{\sum_{\psi} f(\psi)\,(|\omega|)^{-1}\,\mathbf 1\{s\in\mathcal S_\psi\}\,\mu_s(\psi)}
       {\sum_{\psi} f(\psi)\,(|\omega|)^{-1}\,\mathbf 1\{s\in\mathcal S_\psi\}},
\)
the selection-conditional average of the within-menu class-level expected payoffs. The endogenous selection bias channel drops out as Alice's sampling process is independent of her valuations.

\item[\textbf{(ii)}] \emph{Finest partition} ($\mathcal S=\mathcal A$): Alice maintains a separate valuation for each alternative. Since $\rho(a)=a$, for any menu $\psi$ and $a\in\psi$, the within-menu average $\mu_{\rho(a)}(\psi)=\mu_a$.
The mean-field drift decouples coordinate-wise: \(\dot v_a=\mu_a - v_a,\ a\in\mathcal A, \) so each valuation converges to $\mu_a$ (provided each alternative is sampled infinitely often). 

\item[\textbf{(iii)}] \emph{Full availability} ($\rho(\psi)=\mathcal S$ for all $\psi$ with $f(\psi)>0$): every menu contains at least one alternative from each class. The reduced tree has a single class menu $\omega=\mathcal S$ with $p(\mathcal S) = 1$, and
\( g_s(\mathbf v)
=\frac{p(\mathcal S)\,\sigma^s_{\mathcal S}(\mathbf v)\,\pi_s(\mathcal S)}
       {p(\mathcal S)\,\sigma^s_{\mathcal S}(\mathbf v)}
=\pi_s(\mathcal S). \)
Thus, the ODE system decouples and for all $s$, $v_s \to \pi_s(\mathcal S)$. Intuitively, since the set of co-occurring classes is the same in every period (as in a multi-armed bandit), the valuation-dependent sampling probability $\sigma_{\omega}^s(\vvec)$ does not vary across menus and therefore does not endogenously reweight the menu distribution conditional on selecting $s$; the selection-bias channel drops out.

\item[\textbf{(iv)}] \emph{Correct specification}: Alice is correct in believing that alternatives pooled within a class are
payoff-homogeneous in expectation, i.e.\ for each class $s\in\mathcal S$,
\(
\mu_a=\mu_s \text{ for all } a\in s .
\)
Thus, the within-menu class mean is menu-invariant $\mu_s(\psi)\equiv\mu_a$, and the mean-field drift reduces to $\dot v_s=\mu_s-v_s$ and the system decouples.\footnote{If every member of class $s$ is equally good in expectation, then which members of $s$ happen to be available in a given menu is irrelevant: menu variation cannot distort the class-level mean signal, and Alice’s sampling behavior cannot induce an endogenous reweighting of “better” or “worse” within-class alternatives because there are none. The  residual randomness is i.i.d.\ noise, so coarse learning is equivalent to correctly learning the alternative means. Indeed, the same logic applies whenever misspecification is nevertheless ordinally correct at the class level: if one class uniformly dominates another in expected payoff across all underlying alternatives, then menu variation cannot overturn their ranking in the induced class-level payoff signals.}
\end{enumerate}
In contrast, when similarity classes pool multiple alternatives with heterogeneous expected payoffs and availability varies across menus, the drift $g_s(\mathbf v)$ inherits the softmax non-linearity, so the system is coupled, non-algebraic, and endogenous for any $\beta>0$. Our main objective is to study the long-run consequences of such coarse inference. 

\vspace{-0.15in}
\subsubsection*{Steady states: Smooth Valuation Equilibria}
Our focus is on the asymptotic behavior of CQL dynamics in \eqref{eq:differential}. The associated dynamical system is real, autonomous, and smooth on $\mathbb{R}^{\mathcal{S}}$. At any time $t \in \mathbb{R}_{+}$, the state of the CQL dynamical system is given by a vector of valuations, $\mathbf{v}(t) \in \mathbb{R}^{\mathcal{S}}$. First, we consider the steady state solutions of the CQL system in \eqref{eq:differential}, where a steady state solution is defined as a stationary system of valuations $\mathbf{v}^* \in \mathbb{R}^{\mathcal{S}}$ such that $\dot{\mathbf{v}} = 0$ when evaluated at $\mathbf{v}^*$. Denoting the set of steady states of the ODE system in \eqref{eq:differential} by $\mathcal{V}(\beta)$ for a given $\beta$, we show in Theorem~\ref{th:correspondence} that $\mathcal{V}(\beta)$ is nonempty for any $\beta \in [0,\infty)$. We refer to a steady state of the CQL ODE as a \textit{smooth valuation equilibrium (SVE)} defined below.\footnote{When $\gamma > 0$, the drift includes an additive continuation term of the form $\gamma C(\mathbf v)\mathbf 1$. We translate out this common shift and work with normalized valuations, so steady states characterized by $\mathbf v=g(\mathbf v)$ correspond one-to-one (modulo an additive constant) to steady states of the forward-looking dynamics; see Lemma~\ref{lem:translation-reduction}.} The nomenclature follows \cite{Jehiel2007} who introduce a similar solution concept in the context of multi-agent extensive-form games called \textit{valuation equilibrium (VE)}.
\vspace{-0.09in}
\begin{definition}
    A choice profile $\sigma = (\sigma^s_{\omega})^{s \in \mathcal{S}}_{\omega \in \Omega}$ constitutes a \textbf{smooth valuation equilibrium} for $\mathcal{T}'(\mathcal{S},\Omega,p,\pi)$ if there exists a valuation system $\mathbf{v}^*= (v_s^\star)_{s \in \mathcal{S}} \in \mathbb{R}^{\mathcal{S}}$ such that, \(\forall\ s\in\mathcal{S},\) \[ v_s^\star = \frac{\displaystyle \sum_{\omega\in\Omega} p(\omega)\,\sigma^s_{\omega}(\mathbf v^*)\,\pi_s(\omega)}
    {\displaystyle \sum_{\omega\in\Omega} p(\omega)\,\sigma^s_{\omega}(\mathbf v^*)}\, , \qquad \sigma^s_{\omega}(\mathbf{v}^*) = \mathbf{1}\{s\in\omega\}\,\frac{\exp(\beta v_{s}^*)}{\sum_{j \in \omega} \exp(\beta v_{j}^*)}. \]
\end{definition}
\vspace{-0.15in}
We emphasize that even though the underlying environment is of a decision problem, the definition of smooth valuation equilibrium (SVE) has an inherently fixed-point structure. The logit rule pins down choice probabilities as a function of valuations, while the valuation consistency equation pins down each class valuation as the expected payoff that is induced by those very choice probabilities across menus. An SVE is therefore a pair (\(\sigma,\mathbf v^\star\)) in which choices are logit-optimal given valuations and valuations are correct assessments of the outcomes generated by those choices. This mutual consistency is the source of genuinely equilibrium-like phenomena in our framework, which are absent from standard decision-theoretic models that treat valuations as exogenous inputs rather than endogenously determined objects. 

Before turning to the examples in the next section, it is useful to describe the high-sensitivity benchmark that will help interpret them. As the logit parameter grows without bound \((\beta\uparrow\infty)\), Alice chooses among classes with maximal current valuation in each menu with probability approaching one. This motivates the natural greedy benchmark, the Valuation Equilibrium (VE) of \citet{Jehiel2007}: for some menu-wise tie-breaking rule, Alice chooses among valuation-maximizing classes in each menu, and valuations are self-consistent with the expected payoffs generated by this greedy behavior. Theorem~\ref{th:correspondence} in Sec.~\ref{sec:SVE} formalizes this link by showing that every high-sensitivity accumulation point of SVE is a VE.\footnote{See Appendix~\ref{sec:prooflemma2} for a precise definition of VE.}

\vspace{-0.16in}
\section{Illustrations}
\label{sec:sim}
We now illustrate the range of CQL dynamics through a few numerical examples, which the subsequent analysis in Sec.~\ref{sec:results} formalizes. The first two are based on a decision tree in Fig.~\ref{fig:multiple-tree} with two similarity classes. At the root $r$, Nature chooses one of three menus $\omega_1$, $\omega_2$ and $\omega_3$, uniformly at random. In $\omega_1$, Alice encounters a binary choice between alternatives $L_1$ and $R_1$. In menus $\omega_2$ and $\omega_3$, Alice encounters degenerate singleton choices, involving $L_2$ and $R_3$ respectively. The set of alternatives is partitioned into two classes, $L$ = $\{L_1, L_2\}$ and $R$ = $\{R_1, R_3\}$. We examine two distinct scenarios by altering the expected payoffs associated with alternatives in the singleton menus, specifically $z_2$ for $L_2$ and $z_3$ for $R_3$, while keeping the expected payoffs for alternatives in the binary menu unchanged.

\vspace{-0.15in}
\subsection{Example: Multiplicity of SVE}
\label{sec:multiple}
Here, we assume \( z_2 = - 0.25 \) and \( z_3 = 0.25 \). This implies that Alice receives a strictly lower expected reward in each of the singleton choice menus compared to the binary choice menu, regardless of her actions at the latter. As a result, there are three VE: two pure and one mixed. For each VE, there's a corresponding SVE in its neighborhood for large $\beta$.

\begin{figure}[ht]
    \centering
    \begin{tikzpicture}[scale=0.74, font=\footnotesize]
    \tikzstyle{solid node}=[circle,draw,inner sep=1.2,fill=black]
    \tikzstyle{hollow node}=[circle,draw,inner sep=1.2]
    \tikzstyle{level 1}=[level distance=15mm,sibling distance=3.5cm]
    \tikzstyle{level 2}=[level distance=15mm,sibling distance=1.5cm]
    \tikzstyle{level 3}=[level distance=15mm,sibling distance=1cm]
    \node(0)[solid node,label=above:{$r$}]{}
    child{node[solid node,label=above left:{$\omega_1$}]{}
    child{node[hollow node,label=below:{$2$}]{} edge from parent node[left]{$L_1$}}
    child{node[hollow node,label=below:{$1$}]{} edge from parent node[right]{$R_1$}}
    edge from parent node[left,xshift=5, yshift = 10]{$\frac{1}{3}$}
    }
    child{node[solid node,label=above left:{$\omega_2$}]{}
    child{node[hollow node,label=below:{$z_2$}]{} edge from parent node[left]{$L_2$}}
    edge from parent node[right, yshift = 0] {$\frac{1}{3}$}
    }
    child{node[solid node,label=above right:{$\omega_3$}]{}
    child{node[hollow node,label=below:{$z_3$}]{} edge from parent node[left]{$R_3$}}
    edge from parent node[right,xshift=-5, yshift = 10] {$\frac{1}{3}$}
    };
    \end{tikzpicture}
    \caption{Example of a Reduced Decision Tree with Two Similarity Classes}
    \label{fig:multiple-tree}
\end{figure}

\begin{itemize}
    \item \emph{Strict pure VE favoring $L$:} 
    The pure policy that selects $L$ in menus $\omega_1$ and $\omega_2$, and $R$ only in menu $\omega_3$, constitutes a \emph{strict} pure VE. The consistent valuation at this equilibrium is \((v_L,v_R)=(0.875,\,0.250),\) for which the policy is optimal. Numerically, $(0.875,0.250)$ arises as a limiting SVE of the CQL dynamics as $\beta\uparrow\infty$; the phase portrait in Fig.~\ref{fig:strictL} is consistent with this. For a large sensitivity ($\beta=10$), simulations show convergence from nearby initial valuations, providing evidence of \emph{local stability} of this strict pure VE. Intuitively, choosing $L$ in the binary menu $\omega_1$ loads the class-$L$ update on the high expected payoff $\mu_{L_1}=2$ and largely insulates it from the low singleton $\mu_{L_2}=z_2<0$ (which is sampled only when $\omega_2$ occurs). Once $v_L>v_R$, the logit rule further tilts sampling toward $L$ in $\omega_1$, reinforcing her expectations and generating a locally self-confirming drift toward the $L$-favoring valuation.

    \item \emph{Strict pure VE favoring $R$:} 
    The pure policy that selects $R$ in menus $\omega_1$ and $\omega_3$, and $L$ only in menu $\omega_2$, constitutes a \emph{strict} pure VE. The consistent valuation at this equilibrium is \((v_L,v_R)=(-0.250,\,0.625),\) for which the policy is optimal. Numerically, $(-0.250,\,0.625)$ arises as a limiting SVE of the CQL dynamics as $\beta\uparrow\infty$; the phase portrait in Fig.~\ref{fig:strictL} is consistent with this. For a large sensitivity ($\beta=10$), simulations show convergence from nearby initial valuations, providing evidence of \emph{local stability} of this strict pure VE. Intuitively, selecting $R$ in $\omega_1$ makes the class-$R$ signal primarily reflect the high expected payoff $\mu_{R_1}=1$ while the
    bad singleton $z_3=0.25$ is only sampled when $\omega_3$ occurs. On the contrary, if $L$ is rarely selected at $\omega_1$, it's class-level signal is mostly dominated by the low payoff $z_2 = -0.25$ from $\omega_2$ where L is always selected. Thus, when $v_R>v_L$, logit sampling concentrates on $R$ in the only nontrivial menu, so the induced signal stream keeps validating the $R$-favoring policy. This self-confirming feedback yields local stability of the strict $R$-favoring equilibrium.
    
    \item \emph{Mixed VE with indifference:}
    In menu $\omega_1$, the mixed strategy that uniformly randomizes between $L_1$ and $R_1$ is a mixed VE. The associated consistent valuation is $(v_L,v_R)=(0.5,0.5)$, at which this strategy is optimal. Numerically, $(0.5,0.5)$ also arises as a limiting SVE of the CQL dynamics as $\beta\uparrow\infty$; the phase portrait in Fig.~\ref{fig:strictL} corroborates this. However, it turns out that for a large sensitivity ($\beta=10$), this mixed VE is unstable: any small perturbation that makes the two valuations unequal leads to the dynamics polarizing toward one of the two strict pure equilibria discussed above.
\end{itemize}

\begin{figure}[ht]
  \centering
  \begin{subfigure}{0.51\linewidth}
    \centering
    \includegraphics[width=\linewidth]{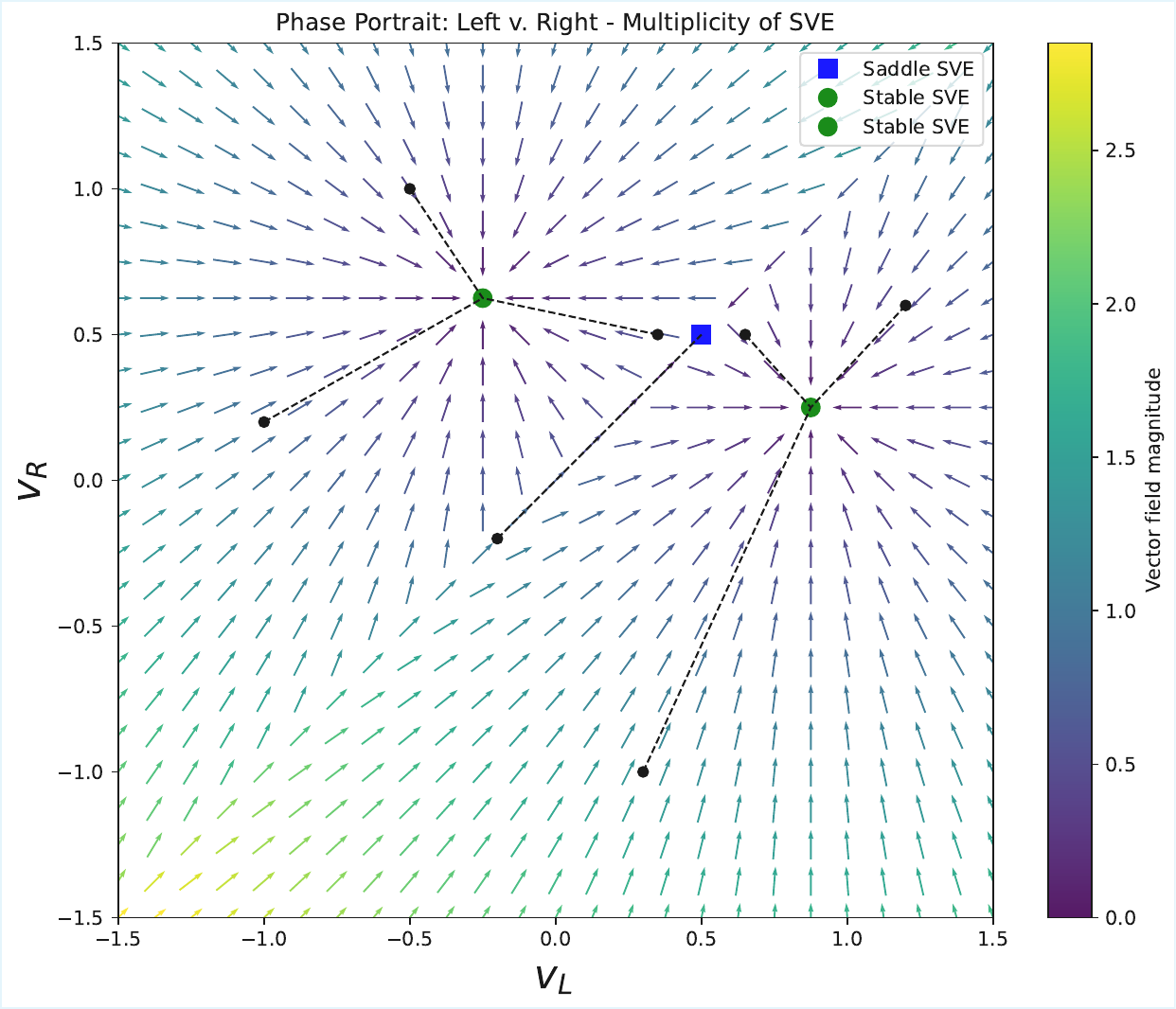}
    \caption{Multiple SVE in Sec.~\ref{sec:multiple}}
    \label{fig:strictL}
  \end{subfigure}
  \begin{subfigure}{0.49\linewidth}
    \centering
    \includegraphics[width=\linewidth]{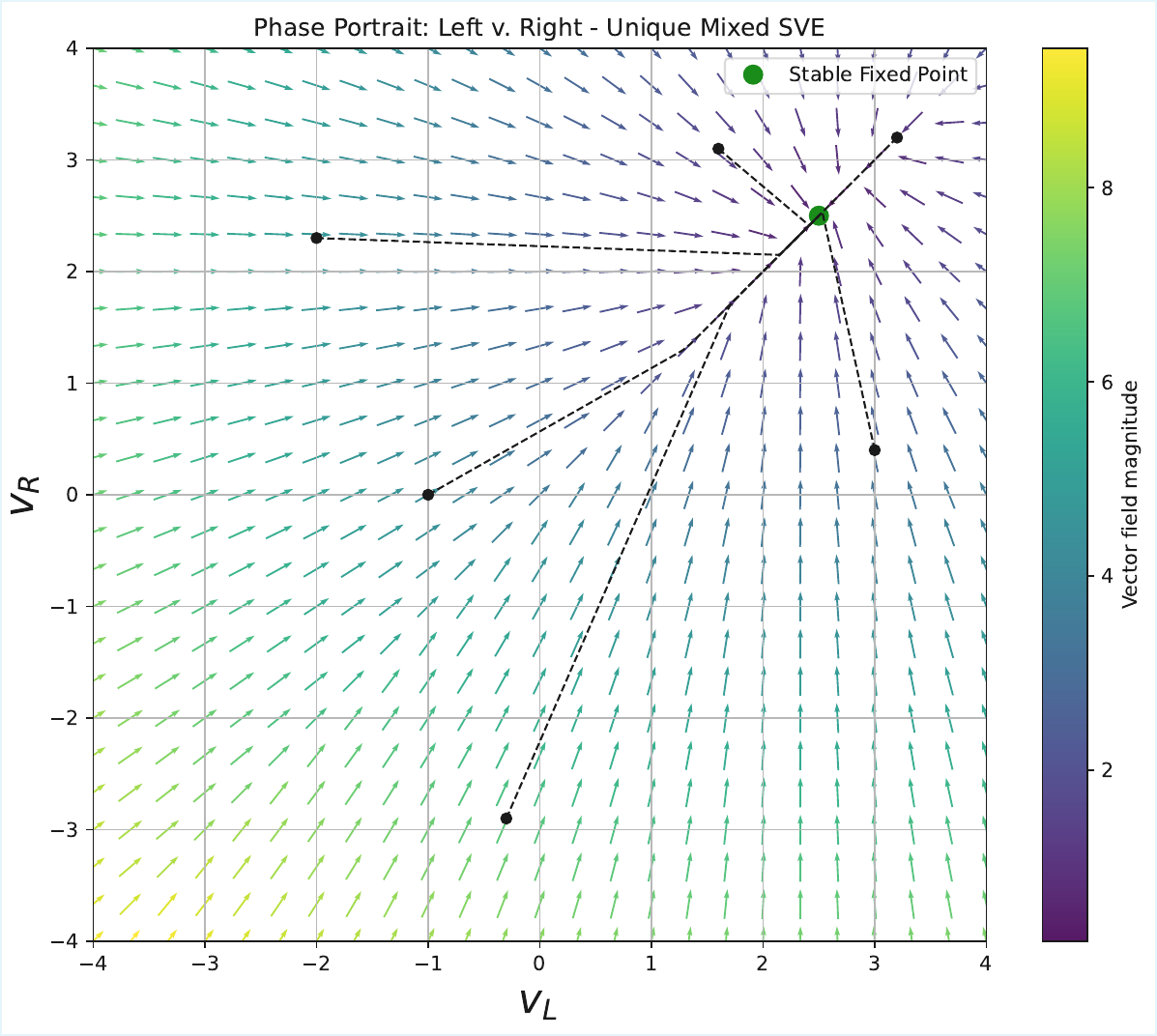}
    \caption{Unique Mixed SVE in Sec.~\ref{sec:uniquemixed}}
    \label{fig:strictR}
  \end{subfigure}
  \caption{Phase portraits of the CQL dynamics for decision tree in Fig.~\ref{fig:multiple-tree} with $\beta=10$}
  \label{fig:phase}
\end{figure}

\vspace{-0.2in}
\subsection{Example: Unique Stable Mixed SVE}
\label{sec:uniquemixed}

Assume now $z_2 = 2.75$ and $z_3$ = $3.25$. Thus, Alice receives a strictly higher expected reward in each of the two singleton choice menus compared to her expected reward in the binary choice menu, regardless of her actions at the latter. Consequently, the mixed strategy that uniformly randomizes between $L_1$ and $R_1$ in $\omega_1$ constitutes the unique (mixed) VE. The associated equilibrium valuation is ($2.5, 2.5$), for which the mixed strategy is optimal. Numerically, $(2.5,2.5)$ also arises as a limiting SVE of the CQL dynamics as $\beta\uparrow\infty$; the phase portrait in Fig.~\ref{fig:strictR} corroborates this. For a large sensitivity ($\beta=10$), simulations show convergence from all initial valuations, providing evidence of \emph{global stability} of this unique mixed VE. Intuitively, when singleton expected payoffs are sufficiently large, they act as an anchoring force on class valuations. If \(v_L>v_R\), logit makes \(L\) chosen relatively more often in the only nontrivial menu \(\omega_1\). The singleton-menu choices themselves remain exogenous, but the composition of
updates changes: the singleton-menu sampled payoffs now represent a smaller share of \(L\)'s total signal stream than of \(R\)'s. Thus, the high-valuation class is updated disproportionately from the lower-payoff binary menu, while the low-valuation class is updated disproportionately from the high-payoff singleton menu. This generates mean reversion in the valuation gap, stabilizing the unique mixed equilibrium.

\vspace{-0.15in}
\subsection{Example: Unique Unstable Mixed SVE}
\label{sec:limitcycle}

\begin{figure}[ht]
\centering
\begin{tikzpicture}[scale=1.25, font=\footnotesize]
\tikzstyle{solid node}=[circle,draw,inner sep=1.2,fill=black]
\tikzstyle{hollow node}=[circle,draw,inner sep=1.2]
\tikzstyle{level 1}=[level distance=15mm,sibling distance=1.5cm]
\tikzstyle{level 2}=[level distance=15mm,sibling distance=1.05cm]
\tikzstyle{level 3}=[level distance=15mm,sibling distance=0.95cm]

\node(0)[solid node,label=above:{$r$}]{}
child{node[solid node,label= left:{$\omega_1$}]{}
  child{node[hollow node,label=below:{$1$}]{} edge from parent node[left,font=\footnotesize]{\textcolor{orange}{$P_1$}}}
  child{node[hollow node,label=below:{$-1$}]{} edge from parent node[right,font=\footnotesize]{\textcolor{red}{$R_1$}}}
  edge from parent node[left,xshift=2, yshift=7,font=\footnotesize]{$\frac{1}{7}$}
}
child[xshift=+8pt]{node[solid node,label=left:{$\omega_2$}]{}
  child{node[hollow node,label=below:{$1$}]{} edge from parent node[left,font=\footnotesize]{\textcolor{blue}{$S_2$}}}
  child{node[hollow node,label=below:{$-1$}]{} edge from parent node[right,font=\footnotesize]{\textcolor{orange}{$P_2$}}}
  edge from parent node[below,yshift=0,font=\footnotesize]{$\frac{1}{7}$}
}
child[xshift=+12pt]{node[solid node,label=left:{$\omega_3$}]{}
  child{node[hollow node,label=below:{$1$}]{} edge from parent node[left,font=\footnotesize]{\textcolor{red}{$R_3$}}}
  child{node[hollow node,label=below:{$-1$}]{} edge from parent node[right,font=\footnotesize]{\textcolor{blue}{$S_3$}}}
  edge from parent node[below,yshift=0,font=\footnotesize]{$\frac{1}{7}$}
}
child[xshift=10pt]{node[solid node,label=right:{$\omega_4$}]{}
  child{node[hollow node,label=below:{$z_R$}]{} edge from parent node[right,font=\footnotesize]{\textcolor{red}{$R_4$}}}
  edge from parent node[below,xshift=-4,yshift=0,font=\footnotesize]{$\frac{1}{7}$}
}
child[xshift=-4pt]{node[solid node,label=right:{$\omega_5$}]{}
  child{node[hollow node,label=below:{$z_P$}]{} edge from parent node[right,font=\footnotesize]{\textcolor{orange}{$P_5$}}}
  edge from parent node[below,yshift=0,font=\footnotesize]{$\frac{1}{7}$}
}
child[xshift=-16pt]{node[solid node,label= right:{$\omega_6$}]{}
  child{node[hollow node,label=below:{$z_S$}]{} edge from parent node[right,font=\footnotesize]{\textcolor{blue}{$S_6$}}}
  edge from parent node[right,xshift=-4, yshift=-11,font=\footnotesize]{$\frac{1}{7}$}
}
child[xshift=-5pt]{node[solid node,label= right:{$\omega_7$}]{}
  child{node[hollow node,label=below:{$-z_R$}]{} edge from parent node[left,font=\footnotesize]{\textcolor{red}{$R_7$}}}
  child{node[hollow node,label=below:{$-z_P$}]{} edge from parent node[right,font=\footnotesize]{\textcolor{orange}{$P_7$}}}
  child{node[hollow node,label=below:{$-z_S$}]{} edge from parent node[right,font=\footnotesize]{\textcolor{blue}{$S_7$}}}
  edge from parent node[right,xshift=-2, yshift=7,font=\footnotesize]{$\frac{1}{7}$}
};
\end{tikzpicture}
\caption{Reduced decision tree for RPS with class-level menu-contingent expected payoffs}
\label{fig:RPS}
\end{figure}

\vspace{-0.1in}
Consider now an example with three similarity classes based on the decision tree in Fig.~\ref{fig:RPS} with $z_R=- 0.4$, $z_P=-0.5$, and $z_S=-0.6$. The twelve alternatives are partitioned into three equivalence classes: $R=\{R_1,R_3,R_4,R_7\}$, $P=\{P_1,P_2,P_5,P_7\}$, and $S=\{S_2,S_3,S_6,S_7\}$. This tree admits neither strict pure VE nor partially-mixed VE. Hence any VE must be fully-mixed in the sense that Alice is indifferent across all classes, so $v_R=v_P=v_S$.\footnote{Recall that VE is defined via menu-wise argmax best responses. When valuations tie, VE imposes no restriction on how Alice mixes among tied classes within a menu; these menu-wise tie-breaking probabilities can be chosen to satisfy indifference even when the singleton terms $z_R,z_P,z_S$ differ.} Accordingly, the set of fully-mixed VE forms a continuum, parameterized by menu-wise tie-breaking probabilities.\footnote{The fully-mixed VE involves five independent tie-breaking probabilities (three binary \& two ternary), but only two independent indifference conditions, so the solution set generically has dimension at least three.} Turning to SVE, for each $\beta\ge0$, the logit rule pins down a unique tie-breaking in every menu. For a.e.\ sufficiently large $\beta<\infty$, the corresponding SVE is unique and lies near one particular fully-mixed VE; moreover, as $\beta\uparrow\infty$ it converges to a \emph{unique} selection from the continuum of fully-mixed VE.\footnote{Under logit, (nearly) equal valuations imply (nearly) equal choice weights: if $v_R=v_P=v_S$ then $\sigma^s_\omega(\mathbf v)=1/|\omega|$ for all $s\in\omega$, and if $v_R,v_P,v_S$ are close then the induced menu-wise probabilities are close to uniform.}

Numerical experiments (see Fig.~\ref{fig:rpsphase}) confirm that the corresponding valuation $\mathbf v^*$ lies near this fully-mixed VE and that $\mathcal V(\beta)$ selects it as $\beta$ grows. However, for large  sensitivity ($\beta=10$), this unique mixed SVE is \emph{unstable}: in the two–dimensional relative coordinates\footnote{For the phase portrait in Fig.~\ref{fig:rpsphase} where we plot $v_R - v_S$ against $v_P - v_S$, we exploit the translation invariance of the logit choice rule to reduce the learning dynamics to the plane. Essentially, we translate every component of the valuation vector by the negative of the valuation of scissors ($v_S$) at all times $t\ge0$.} $(x,y)=(v_R-v_S,\ v_P-v_S)$, the fixed point near $(0,0)$ repels, and nearby trajectories spiral toward a stable hexagonal \emph{limit cycle}; see Fig.~\ref{fig:RPSphaseportrait}. Thus, in the high-sensitivity regime the CQL dynamics exhibit persistent cycling of valuations (and induced mixed strategies). The intuition for why cycling is asymptotically inevitable here appears after Theorem~\ref{th:cycle}.

\begin{figure}[ht]
  \centering
  \begin{subfigure}{0.5\linewidth}
    \centering
    \includegraphics[width=\linewidth]{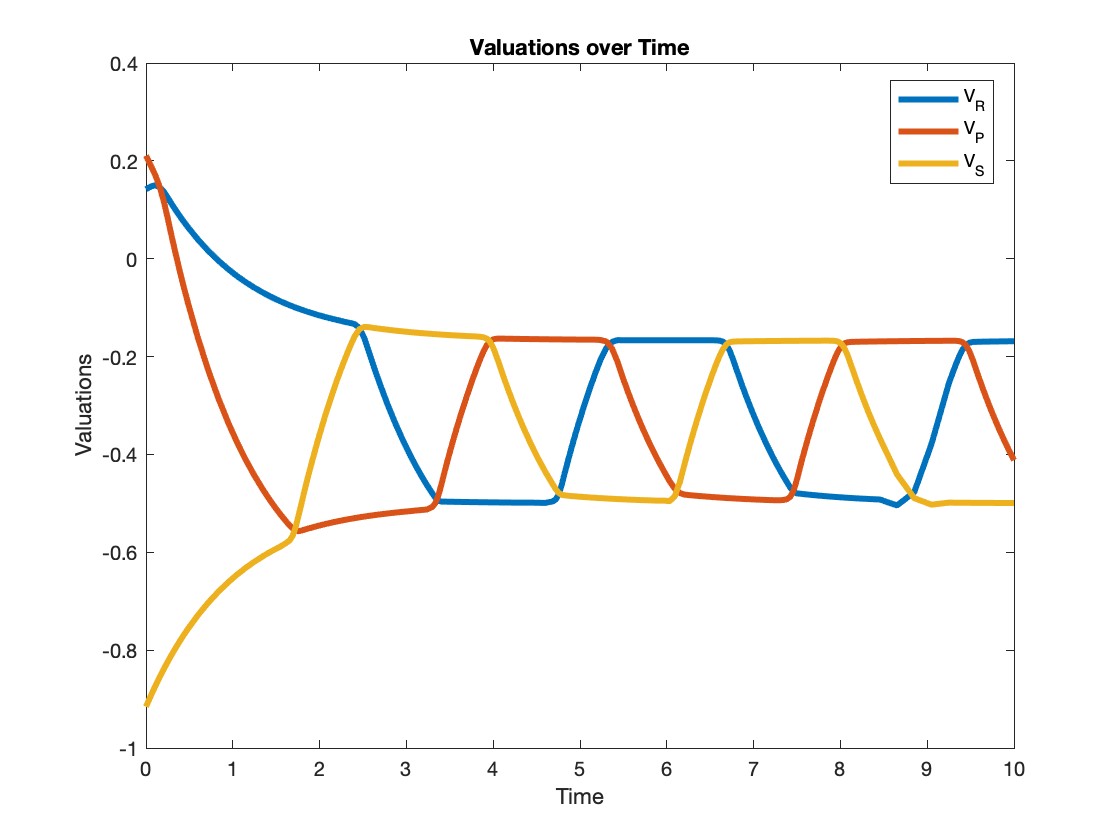}
    \caption{RPS: Valuation Oscillations (in time)}
    \label{fig:rpscycle}
  \end{subfigure}
  \begin{subfigure}{0.5\linewidth}
    \centering
    \includegraphics[width=\linewidth]{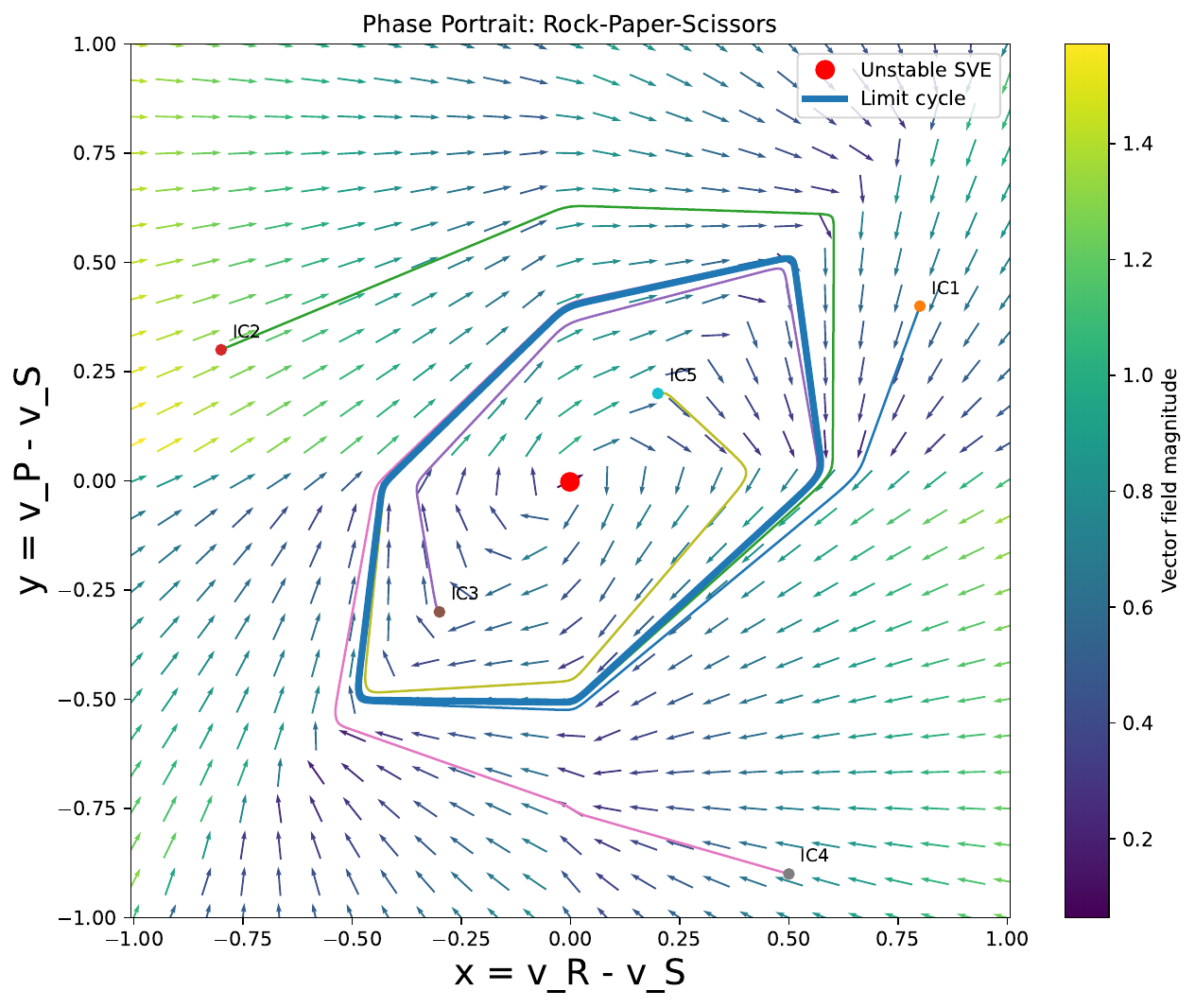}
    \caption{Limit Cycle around Unstable Mixed SVE at (0,0)}
    \label{fig:rpsphase}
  \end{subfigure}
  \caption{Phase portraits of the CQL dynamics for decision tree in Fig.~\ref{fig:RPS} with $\beta=10$}
  \label{fig:RPSphaseportrait}
\end{figure}

It is worth highlighting that none of the phenomena illustrated above could arise in a standard version of Q-Learning (or generally, TD-learning). If Alice were to perfectly discriminate alternatives, i.e.\ if she were equipped with the finest partition $\mathcal S=\mathcal A$, the ODE would decouple and become linear as mentioned above. Then the dynamics would converge to a unique steady state as in \cite{Watkins1992}, precluding the possibility of multiple equilibria or mixed equilibria or limit cycles arising in generic trees.  Hence, it's the coarse representation of alternatives across variable menus - rather than reinforcement learning per se - that drives the aforementioned non-trivial phenomena in the long-run.

\vspace{-0.1in}
\section{Results}
\label{sec:results}

\subsection{Smooth Valuation Equilibria}
\label{sec:SVE}

To begin, we analyze the steady states of the CQL ODE in \eqref{eq:differential} that we refer to as smooth valuation equilibria (SVE). We observe that for any $0\le\beta<\infty$, every SVE is fully-mixed, i.e., in any menu $\omega$, Alice selects each available similarity class $s\in\omega$ with positive probability $\sigma^s_{\omega}(\mathbf v)>0$. Moreover, at an SVE $\mathbf v^{*}$, each coordinate $v_s^\star$ is a strict convex combination of the menu-contingent class expected payoffs on menus in which $s$ is available:
\(
v^{*}_s\in\Big(\min_{\omega\in\Omega_s^+}\pi_{s}(\omega),\
\max_{\omega\in\Omega_s^+}\pi_{s}(\omega)\Big),\) where \(
\Omega_s^+:=\{\omega\in\operatorname{supp}(p):s\in\omega\}.
\)
We assume each class $s$ is available with positive probability in at least two distinct menus $\omega,\omega'\in\Omega_s^+$ with $\pi_s(\omega)\neq\pi_s(\omega')$. Thus, $\mathbf v^{*}$ lies in the relative interior of the product set of intervals
\(
K:=\prod_{s\in\mathcal S}
\Big[\min_{\omega\in\Omega_s^+}\pi_s(\omega),\
\max_{\omega\in\Omega_s^+}\pi_s(\omega)\Big].
\) The following result establishes the existence of an SVE for every sensitivity $\beta\in\mathbb R_+$ and describes the limiting relation between SVE and the set of VE (denoted by $\mathcal{VE}$) as sensitivity becomes arbitrarily large.
\begin{theorem}\label{th:correspondence}
Fix a decision tree $(\mathcal S,\Omega,p,\pi)$. For $\beta\ge 0$, let
\( \mathcal V(\beta):=\{\vvec\in K:\ \vvec=g(\vvec;\beta)\} \) denote the SVE set, and let $\mathcal V(\infty)$ denote the set of accumulation points of $\{\mathcal V(\beta)\}_{\beta\uparrow\infty}$.  Then:

\begin{enumerate}
\item[(i)] For every $\beta\ge 0$, $\mathcal V(\beta)$ is nonempty and compact, and the correspondence $\beta\mapsto\mathcal V(\beta)$ is upper hemicontinuous and compact-valued on $[0,\infty)$. The SVE graph $\{(\vvec,\beta):\vvec\in\mathcal V(\beta)\}$ has finitely many path-connected components. For a generic specification of parameters \((p,\pi)\) and for a.e.\ $\beta\ge 0$, $\mathcal V(\beta)$ is finite and comprises an odd number of isolated regular equilibria, each lying on a locally unique real-analytic branch in $\beta$.\footnote{Throughout, when we say that a property holds \emph{generically} (or for a \emph{generic decision tree}), we keep fixed the reduced tree structure, namely \(\mathcal S\), \(\Omega\), and the support of \(p\), and vary only the numerical parameters \((p(\omega),\pi_s(\omega))\) over the associated finite-dimensional parameter space. Thus, genericity means that the property holds on a comeager subset of this parameter space that also has full Lebesgue measure.}

\item[(ii)] Every accumulation point of SVE as $\beta\to\infty$ is a VE: $\mathcal V(\infty)\subseteq \mathcal{VE}$. Equivalently, for every $\varepsilon>0$, $\exists$ $\widehat\beta<\infty$ s.t. $\forall$ $\beta\ge\widehat\beta$, every $\mathbf v\in\mathcal V(\beta)$ lies in an $\varepsilon$-neighborhood of some $\mathbf v^\infty\in\mathcal {VE}$. For generic \((p,\pi)\), $\mathcal V(\infty)$ is a nonempty finite subset of VE. Moreover, the path-connected component of the SVE graph containing the unique SVE at $\beta=0$ projects onto the entire $\beta$-axis $[0,\infty)$. For generic $(p,\pi)$, this principal component is a smooth embedded one-dimensional curve whose high-sensitivity tail admits a unique limit point $\vvec^\infty\in\mathcal V(\infty)$ as $\beta\to\infty$, thereby selecting a unique VE.
\end{enumerate}
\end{theorem}

Part (ii) of Theorem~\ref{th:correspondence} implies that along any sequence $\beta_n\uparrow\infty$, every convergent sequence of SVE $\vvec_n\in\mathcal V(\beta_n)$ with $\vvec_n\to\vvec^\infty$, has a limit $\vvec^\infty \in \mathcal {VE}$. Importantly, the converse need not hold: not every VE must arise as a high-sensitivity limit of an SVE branch, implying the inclusion may be strict. For instance, in generic decision trees with three or more similarity classes (as in Example \ref{sec:limitcycle}), the set of fully-mixed VEs, when nonempty, forms a continuum, whereas the set of limiting SVE $\mathcal V(\infty)$ is finite. A second illustration appears in the \href{https://drive.google.com/file/d/1SfR7HiB3HyIAYz2R3dD38JKgJi70GFKv/view?usp=share_link}{Online Appendix (Sec.~E)}: a continuum of strict pure VE fails to be approached by any SVE branch. Since the logit rule is uniquely pinned down by the axioms of continuity, interiority, IIA, and translation invariance in our setting, $\mathcal V(\infty)$ provides an axiomatic refinement of VE.

Part (ii) yields a procedure for selecting VE in a.e.\ decision tree analogous to the \say{tracing} procedure of \cite{harsanyi1988} for selecting Nash equilibria in generic normal-form games. Starting from the unique (uniform) SVE at \(\beta=0\), one can follow the globally connected principal component by continuation. In generic decision trees, this component is a
smooth embedded curve whose projection covers \([0,\infty)\), though it may
have finitely many fold points. Tracking its high-sensitivity tail selects a unique VE in
\(\mathcal V(\infty)\).\footnote{Thus, a VE can be computed by continuation:
initialize at the uniform SVE at \(\beta=0\), follow the principal component, and trace its high-sensitivity tail to large \(\beta\), in the spirit of tracing logit QRE to select Nash equilibria \citep{mckelvey1995quantal}. Nonetheless, since the VE set may be infinite
even for generic trees, our proof departs substantially from theirs and
instead relies on new tools from o-minimal geometry.}

In the remainder of this section, we investigate the asymptotic stability\footnote{An equilibrium of a dynamical system is \textit{locally asymptotically stable} if, for any initial condition sufficiently close to the equilibrium, the solution trajectory remains close (Lyapunov stability) and asymptotically converges to the equilibrium (attractivity). \textit{Global asymptotic stability} refers to the property of an equilibrium where all solution trajectories, regardless of initial conditions, asymptotically converge to the equilibrium.} of the steady states. For local stability, we examine the effects of small perturbations on the long-run behavior of the CQL dynamics in neighborhoods of its steady states. In particular, we linearize the RHS of \eqref{eq:differential}, $g(\mathbf v)-\mathbf v$, around the steady states and analyze the sign of the real parts of the eigenvalues of the corresponding Jacobian matrices to determine the local stability of the CQL dynamics at its steady states. For global stability results, we either construct a strict Lyapunov function that decays along all non-constant trajectories or, when applicable, leverage the convergence properties of monotone dynamical systems \citep{smith1995monotone}.

\vspace{-0.1in}
\subsection{Dynamics}

We present general results on the asymptotics of CQL dynamics. For convenience, we let $\mathrm{supp}(p)=\{\omega\in\Omega:p(\omega)>0\}$ and define the undirected co-occurrence graph $G=(\mathcal S,E)$ by: \( \{i,j\}\in E \ \Longleftrightarrow\ \exists\ \omega\in \mathrm{supp}(p)\ \text{with}\ \{i,j\}\subseteq \omega.\) We begin by observing that since the softmax function is translation-invariant, it depends on valuations only through differences. 
Hence, the CQL drift map $g(\vvec)$ is translation-invariant and only \emph{relative} valuations matter. Without loss of generality, we fix a pivot class \(q \in \mathcal{S}\) and work with the translated vector of differences \(u_s:=v_s-v_q\) (so \(u_q=0\)). The induced dynamics on \(u\) are autonomous in \(\mathbb R^{|\mathcal S|-1}\): they track how each class drifts \emph{relative} to the pivot, with the same softmax probabilities and the same drift as in the original system. In particular, rest points of the full system are in one-to-one correspondence with rest points of the reduced system, and the Jacobian at a rest point has the same nontrivial eigenvalues as the reduced Jacobian, plus the additional eigenvalue $-(1-\gamma)<0$ in the common-shift direction (see Lemma~\ref{lem:translation-reduction} in Supp.\ Appendix). 

If Alice operates with one class, the analysis is trivial as she uniformly randomizes among all available alternatives for all valuation profiles.  Considering decision problems where Alice has two similarity classes available to her, we prove the following general convergence result by exploiting the translation-invariance of $g(\vvec)$ to reduce the dynamics to one dimension.
\begin{theorem}
    Fix a reduced decision tree $\mathcal T_2'$ with at most two similarity classes. There exists $\widehat\beta\in[0,\infty)$ such that for all $\beta\ge\widehat\beta$, $\mathcal T_2'$ admits an SVE that is locally asymptotically stable for the CQL dynamics. If the SVE is unique, it is globally asymptotically stable. In general, the CQL dynamics converge to an SVE for every initial condition.
\label{th:convergence2sim}
\end{theorem}

\vspace{-0.08in}
\subsubsection{Strict Pure LSVE are Locally Asymptotically Stable}
We now extend the analysis to decision trees with an arbitrary number of similarity classes. Recall that \(\mathbf v^\infty\in \mathcal V(\infty) \subset \mathbb R^{\mathcal S}\) is a \emph{limiting smooth valuation equilibrium} (LSVE) if there exist \(\beta_n\uparrow\infty\) and \(\mathbf v^{(\beta_n)}\in\mathcal V(\beta_n)\) such that \(\mathbf v^{(\beta_n)}\to \mathbf v^\infty\). We call an LSVE \(\mathbf v^\infty\) \emph{strict pure} if, for every menu \(\omega\) with \(p(\omega)>0\), the valuation-maximizing class
\(
s^\star(\omega):=\arg\max_{s\in\omega} v_s^\infty
\)
is unique and strictly dominates all other available classes, i.e.,
\(
v_{s^\star(\omega)}^\infty> v_s^\infty,\ \forall\, s\in\omega\setminus\{s^\star(\omega)\}.
\)
A strict total order on \(\mathcal S\) implies menu-wise strictness. Conversely, for generic decision trees with a connected co-occurrence graph $G$, menu-wise strictness induces a strict total order on \(\mathcal S\). In either case, \(s^\star(\omega)\) is unique for every \(\omega\), and the induced choice rule is pure and optimal, with the top-ranked available class chosen almost surely.
\begin{theorem}\label{th:strictpurestable}
Consider a decision tree $\mathcal T'_n$ with arbitrarily many similarity classes. Suppose there exists a strict pure LSVE $\mathbf v^\infty$ that induces a strict total order on the set of classes chosen in equilibrium in at least one menu. Then there exists $\widehat\beta<\infty$ such that, for every $\beta\ge \widehat\beta$, the unique SVE $\vvec^{(\beta)}$ that lies in a neighborhood of $\vvec^\infty$ is locally asymptotically stable.
\end{theorem}
Informally, at a strict pure LSVE $\mathbf v^\infty$, every menu $\omega$ in $\mathrm{supp}(p)$ has a unique \say{winner} $s^\star(\omega)$ separated from all other available classes by a uniform valuation gap $m > 0$. For large $\beta$, the logit rule therefore places probability $1-O(e^{-\beta m})$ on the winner and only exponentially small probability on every loser, so choice behavior becomes nearly insensitive to small perturbations of $\mathbf v^*$. Since the CQL Jacobian is $J(\mathbf v)=Dg(\mathbf v)-I$, it is enough to argue that $Dg(\mathbf v^{(\beta)})$ is negligible for large $\beta$. This follows because $g_i(\mathbf v)$ is an average payoff for class $i$ across the menus $\Omega_i$ in which $i$ is available, with weights proportional to the probability that $i$ is chosen in those menus. In general, a partial-derivative of $g(\mathbf v)$ can be written as,
\[
\frac{\partial g_i}{\partial v_j}
=\beta\sum_{\omega\in\Omega_i} w^i_\omega(\mathbf v)\,\big(\mathbf 1\{i=j\}-\sigma^j_\omega(\mathbf v)\big)\,\big(\pi_i(\omega)-g_i(\mathbf v)\big);\ w^i_\omega(\mathbf v):=\frac{p(\omega)\,\sigma^i_\omega(\mathbf v)}{\sum_{\omega'\in\Omega_i} p(\omega')\,\sigma^i_{\omega'}(\mathbf v)}\in\Delta(\Omega_i).
\]
When \(i\) is a winner in some menu, the denominator of the partial derivative is
bounded away from zero, while the relevant logit derivatives are of the form
\(\beta\sigma^i_\omega(1-\sigma^i_\omega)\) for own effects and
\(-\beta\sigma^i_\omega\sigma^j_\omega\) with \(j\ne i\) for cross effects. Under menu-wise
strictness, these products are \(O(\beta e^{-\beta m})\), and hence exponentially small for large $\beta$. When $i$ is never a winner, $i$ is selected only through exponentially rare ``trembles'', and the conditional menu distribution (given $i$ is selected) concentrates on the closest-loss menus where $i$ loses by the smallest valuation margin, so the leading dependence on $\mathbf v$ cancels between the numerator and denominator.\footnote{This “closest-loss concentration” is exactly the refinement logic behind strict pure limiting SVE. A (strict) pure VE imposes no restrictions on the valuation levels of classes that are never optimal in any menu, beyond their sub-optimality. In contrast, as \(\beta\uparrow\infty\) a never-winner class \(i\) is chosen only through exponentially small logit trembles, and conditional on choosing \(i\) the menu distribution concentrates on the closest-loss menus where \(i\) loses by the smallest valuation gap. Hence \(v_i\) must be consistent with the expected payoffs of \(i\) in those closest-loss menus, which pins down the otherwise unrestricted “off-path” valuations.} In either case, for some $\varepsilon>0$, $\|Dg(\mathbf v^{(\beta)})\|=O(\beta e^{-\beta\epsilon})\to0$, hence $J(\mathbf v^{(\beta)})$ is a small perturbation of $-I$. By continuity of eigenvalues, for large $\beta<\infty$, the SVE $\mathbf v^{(\beta)}$ is hyperbolic with all Jacobian eigenvalues in the open left half–plane, yielding local exponential stability of the unique  SVE $\mathbf v^{(\beta)}$ near the strict pure LSVE $\mathbf v^{\infty}$ by the Hartman-Grobman theorem.

Therefore, if a decision tree admits a strict pure LSVE, the set of asymptotically stable SVE for the CQL dynamics is nonempty. \cite{Benaim1999} shows that every locally asymptotically stable steady state of the continuous-time ODE \eqref{eq:differential} has strictly positive probability of being the long-run outcome of the discrete-time stochastic CQL process \eqref{eq:update}, given non-degenerate noise in the system (true for any $\beta < \infty$ in our setting). Thus, asymptotic stability may serve as a dynamic equilibrium selection criterion.

\vspace{-0.08in}
\subsubsection{Non-existence of Asymptotically Stable SVE}
In the absence of a strict LSVE, asymptotic stability need not obtain. The theorem below shows that, for an open set of decision trees with three similarity classes and sufficiently large payoff sensitivity \(\beta<\infty\), the CQL mean-field dynamics admit no asymptotically stable SVE. Since stochastic approximation processes such as the discrete-time CQL process in \eqref{eq:update} converge with probability zero to linearly unstable steady states of the mean-field ODE under non-degenerate noise \citep{Pemantle,Murooka2025Bayesian}, this result implies that convergence to equilibrium is not guaranteed in general for the CQL dynamics.

\begin{theorem}\label{th:cycle}
There exists an open set $\mathcal U$ of decision trees with $n\ge3$ similarity classes such that, for each $\mathcal T_n'\in\mathcal U$, there exists $\widehat\beta<\infty$ for which, for all $\beta\ge \widehat\beta$, the CQL dynamics admit no asymptotically stable SVE.
\end{theorem}

To prove the theorem, we consider the RPS family of decision trees with three similarity classes (see Fig.~\ref{fig:RPS} in Example~\ref{sec:limitcycle}). We simplify the algebra by choosing symmetric $z_R = z_P = z_S = z$ with $z\in(-1,0)$ and uniform $p(\omega) = 1/6$, $\forall\ \omega \in \Omega\setminus\{\omega_7\}$ with $p(\omega_7) = 0$.\footnote{In the proof, we show that the result is robust to small perturbations of both the mean payoff parameters and the menu probabilities, provided the support of $p$ is held fixed. In the \href{https://drive.google.com/file/d/1SfR7HiB3HyIAYz2R3dD38JKgJi70GFKv/view?usp=share_link}{Online Appendix (Sec.~C)}, we show that the result continues to hold when the ternary menu $\omega_7$ is added to $\mathrm{supp}(p)$ as in Fig.~\ref{fig:RPS}, although the algebra becomes more cumbersome. The \href{https://drive.google.com/file/d/1SfR7HiB3HyIAYz2R3dD38JKgJi70GFKv/view?usp=share_link}{Online Appendix (Sec.~E)} contains additional three-class examples, including an RPS decision tree with (triangular) cycling obtained by keeping the binary-menu payoffs unchanged, removing all singleton menus from $\mathrm{supp}(p)$ and replacing them with the ternary menu.} We show that such a tree admits a unique fully-mixed SVE (with index $+1$) that is linearly unstable for large $\beta$ and the CQL trajectories spiral into a stable limit cycle in the long-run. Informally, to see why the cycling occurs, it is sufficient to examine optimal choice behavior in the high-sensitivity limit. Consider a generic (strict) ranking of the three valuations. In the essentially noise–free limit (\(\beta\to\infty\)), Alice a.s. selects the top-ranked class whenever it's available, whereas the bottom-ranked one is chosen a.s. only in its singleton menu. Hence, the top class accrues a strictly larger expected payoff ($z/3$) than the bottom class does ($z$) ensuring the top class strictly outperforms the bottom one. The behavior of the middle class is pivotal and depends solely on its binary interaction with the lowest class. 

Two mutually exclusive cases obtain: (i) \textit{negative interaction} - if, in their shared binary menu, the expected payoff of the middle class is \(-1\), it becomes the worst performer overall (since \(-1\) is the minimal feasible expected payoff). It then drops below the former bottom class, swapping their positions in the ranking. (ii) \textit{positive interaction} - if, in that binary menu, the middle class’ expected payoff is \(+1\), it becomes the best performer overall (since \(+1\) is the maximal feasible expected payoff). It then overtakes the former top class, again swapping positions. Each such swap changes which class is \say{middle}, and the same logic re-applies. Consequently, this decision tree admits no strict pure VE - the \say{middle} class is always either the best performer or the worst performer in expected payoffs. Therefore, the strict order of valuations rotates cyclically rather than settling, and no fixed point can be stable. The resulting perpetual rotation of rankings generates a hexagonal limit cycle in the valuations (and induced mixed strategies), as illustrated in Fig.~\ref{fig:RPSphaseportrait} for Example~\ref{sec:limitcycle}.

\vspace{-0.08in}
\subsubsection{Global Asymptotic Stability of a Unique Mixed SVE}
\label{sec:globalmixed}

Theorem~\ref{th:cycle} shows that, for large sensitivity, a unique mixed SVE need not be asymptotically stable. This does not exclude the possibility that, under additional structure, a mixed SVE is the global attractor. In this subsection, we identify such a structure. Specifically, we consider variations that uniformly shift the expected payoffs in singleton menus where all available alternatives belong to a single class, while keeping all other menu–contingent class-level expected payoffs unchanged. For each class $i$, let \(\pi_i(\{i\}) \equiv \E[r \mid \omega=\{i\}] \) denote the expected payoff in its singleton menu $\{i\}$. We translate all singleton-menu payoffs by a common scalar $z\in\mathbb R$, i.e.\ \( \pi_i(\{i\}) \mapsto \pi_i(\{i\}) + z\), for all \( i\in\mathcal S, \) while leaving $\{\pi_i(\omega):\omega\in\Omega,\ |\omega|>1\}$ unchanged. We focus on singleton menus because Alice’s class choice is degenerate there: when $\omega=\{i\}$, she selects class $i$ with probability $1$ regardless of her valuations. Singleton shifts therefore perturb class-level feedback without directly changing the logit odds in competitive menus, providing a clean comparative static for stability analysis.\footnote{A high (low) uniform singleton payoff shift \(z\) can be interpreted as parameterizing decision trees where each similarity class's expected payoff maximizers (minimizers) occur only in singleton menus.}
\begin{assumption}\label{as:singleton-support}
$G$ is connected, and for all $s\in\mathcal S$, the singleton menu \(\{s\}\in \mathrm{supp}(p).\)
\end{assumption}
Asm.~\ref{as:singleton-support} implies that each $s\in\mathcal S$ has degree at least one in $G$. Thus, every class appears in some non-singleton menu in $\mathrm{supp}(p)$ along with its singleton menu. While we allow $p(\{s\})$ to be arbitrarily small, we assume it is strictly positive for each class $s$. Under Asm.~\ref{as:singleton-support}, we show that there exists a threshold $\hat z\in\mathbb R$ (depending on $(\mathcal{S},\Omega,p,\pi)$) such that whenever $z\ge\hat z$, the CQL dynamics admit a unique mixed SVE, and this SVE is globally asymptotically stable. \cite{Benaim1999} shows that if the continuous-time ODE \eqref{eq:differential} has a unique steady state that is a global attractor, then it is the unique element of the internally chain-transitive set and the discrete-time CQL stochastic process \eqref{eq:update} converges to it almost surely.

\begin{theorem}
Fix a reduced decision tree $\mathcal{T}'_n=(\mathcal{S},\Omega,p,\pi)$ such that Asm.~\ref{as:singleton-support} holds. There exists a threshold $\hat z<\infty$ such that for every $z\ge\hat z$ the following hold:

\begin{enumerate}
\item[\textnormal{(i)}] Uniqueness: For every $\beta\in\mathbb{R}_+$, the CQL dynamics admit a unique steady state (SVE).

\item[\textnormal{(ii)}] High–sensitivity indifference: There exists $\hat\beta<\infty$ such that for every $\beta\ge \hat\beta$, the unique SVE lies in a neighborhood of a unique mixed VE at which the agent is indifferent between at least two similarity classes.

\item[\textnormal{(iii)}] Global asymptotic stability: For every $\beta\in\mathbb R_+$, the unique SVE is globally asymptotically stable for the CQL dynamics $\dot{\mathbf v}=g(\mathbf v;\beta)-\mathbf v$. In particular, for every initial condition $\mathbf v(0)$, the solution $\mathbf v(t)$ converges to this SVE as $t\to\infty$.
\end{enumerate}
\label{th:uniquemixedstable}
\end{theorem}

When \(z\) is high, singleton menus create a common payoff baseline that makes strict valuation orders self-defeating. To see this, suppose for contradiction that a strict VE exists with lowest-ranked class \(i\). Class \(i\) is chosen only at its singleton menu and is assessed entirely by that outcome, so \(g_i(\mathbf v)\) places full weight on the singleton shift \(z\). By contrast, the highest-ranked class \(j\) is chosen in at least one non-singleton menu, so its singleton payoff is diluted by its relatively weaker non-singleton payoff(s), implying a smaller weight on \(z\) in \(g_j(\mathbf v)\). For high enough \(z\), this difference in exposure to the singleton shift reverses the putative ranking between \(i\) and \(j\), contradicting strictness and implying any VE must be mixed. 

Local stability stems from a strong negative self-correction induced by singleton payoff dominance. For high \(z\), each class \(s\) has a high singleton benchmark relative to any non-singleton-menu payoff. If \(v_s\) increases, \(s\) is selected more often in non-singleton menus where its realized payoff is below this benchmark, which pulls its conditional assessment \(g_s(\mathbf v)\) downward. This generates a stabilizing self-effect. If another class \(k\) becomes more attractive, i.e.\ \(v_k\) increases, it absorbs probability mass in non-singleton menus where it co-occurs with \(s\). This crowds \(s\) out of precisely those menus in which its payoff is relatively low, thereby increasing \(s\)’s conditional assessment \(g_s(\mathbf v)\). Hence cross-effects are non-negative (cooperativity), and the Jacobian of \(F(\mathbf v) = g(\mathbf v)-\mathbf v\) is Metzler on the compact, convex set \(K\). Moreover, the overall sensitivity of \(g_s\) to changes in valuations of co-occurring classes is exactly tied to the self-effect by the translation invariance of the drift ($Dg(\mathbf v)\,\mathbf 1=\mathbf 0$ for any $\mathbf v$). Thus, the non-negative cross-effects sum to the magnitude of the negative diagonal term up to the constant \(-1\) implying that each row of the Jacobian is uniformly strictly diagonally dominant with rightmost Gershgorin bound at \(-1\). Hence, any SVE is locally exponentially stable. 

Second, this local stability ensures that any equilibrium is an isolated non-degenerate zero of \(- F(\mathbf v)\) with index \(+1\). Since \(K\) is compact and convex and \(- F(\mathbf v)\) points strictly outward on its boundary \(\partial K\), the Poincaré-Hopf index theorem yields exactly one equilibrium in \( \mathrm{int}\) \(K\), that is, a unique SVE for each \(\beta\). Finally, the Metzler property together with irreducibility of the Jacobian implies that the mean-field dynamics are cooperative and consequently strongly monotone on the compact, positively invariant set \(K\). A strongly monotone semi-flow on such a set with a unique interior equilibrium must converge globally to that equilibrium. 

Moreover, since for each $\beta\in\mathbb R_+$ the SVE is unique, the equilibrium correspondence is a single-valued map and its graph in $(\mathbf v,\beta)$-space is a unique globally connected component (the principal branch); by Theorem~\ref{th:correspondence}, any sequence $\beta_n\uparrow\infty$ has $\mathbf v(\beta_n)$ converging (along a subsequence) to a valuation equilibrium in $\mathcal V(\infty)$, so tracing this principal branch to the high--sensitivity limit selects a unique mixed VE even when the VE set forms a continuum.

More generally, beyond the sufficient condition provided above, the same argument yields existence, uniqueness, and global asymptotic stability of a mixed SVE throughout a \emph{cooperative} regime, namely whenever classes behave as complements in the sense that raising the valuation of any class weakly increases the conditional expected payoff of every class with which it co-occurs, uniformly over \(\vvec\in K\). Formally, this means that the Jacobian of the CQL drift \(F(\vvec)=g(\vvec)-\vvec\) is Metzler on \(K\), i.e.\ all off-diagonal entries are nonnegative.

\begin{assumption}[Monotonicity]
    For any two menus $\omega$, $\omega'$ $\in$ $\Omega$, \(\omega \subseteq \omega' \implies p(\omega) \leq p(\omega') \). Additionally, for any two singleton menus $\omega$, $\omega'$ $\in$ $\{\omega\in\Omega:\ |\omega|=1\}$, $p(\omega) = p(\omega')$.
\label{assm:monotonicity}
\end{assumption}

Theorem~\ref{th:uniquemixedstable} guarantees high-sensitivity indifference between at least two classes. With an additional restriction on the menu distribution, this indifference can be strengthened to total indifference across all classes. Given Asm.~\ref{as:singleton-support}, Asm.~\ref{assm:monotonicity} imposes full support over $\Omega=\mathcal P(\mathcal S)\setminus\{\varnothing\}$, equal likelihood of singleton menus, and monotonicity under set inclusion (richer menus are weakly more likely). For e.g., it holds for the uniform distribution.

\begin{proposition}
    Let assumptions \ref{as:singleton-support}-\ref{assm:monotonicity} hold. There exists a \(\Hat{z}_1 < \infty\) such that for all \(z \ge \Hat{z}_1\) and \(\beta \geq 0\), every reduced decision tree \(\mathcal{T}_n^{'}\) admits a unique SVE that is globally asymptotically stable for the CQL dynamics. Moreover, as \(\beta \to \infty\), this unique SVE corresponds to a unique fully-mixed VE where the agent is indifferent among all similarity classes.
\label{th:fullymixedstable}
\end{proposition}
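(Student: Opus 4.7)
The plan is to build directly on Theorem \ref{th:uniquemixedstable}, which already delivers existence, uniqueness, and global asymptotic stability of the SVE for all $z > \hat{z}$ and $\beta \in \mathbb{R}_+$ under the standing support assumption. Theorem \ref{th:uniquemixedstable} also tells us that, as $\beta \uparrow \infty$, the unique SVE lies in the neighborhood of a \emph{mixed} VE with indifference between at least two similarity classes. The only genuinely new content to establish is therefore that, under Assumption \ref{assm:monotonicty}, the limiting VE is actually \emph{fully-mixed}, i.e., \emph{all} similarity classes carry the same valuation.

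I would argue this by contradiction. Suppose that some subsequential limit $\mathbf{v}^*$ of the SVE as $\beta \uparrow \infty$ is not fully-mixed. Partition $\mathcal{S}$ into valuation tiers $C_1, C_2, \ldots, C_m$ with $m \geq 2$ and strictly increasing tier-valuations $v^{(1)} < v^{(2)} < \cdots < v^{(m)}$. Fix any $i \in C_1$ and $j \in C_2$. In the high-sensitivity limit, $i$ is selected only at states $\omega$ with $i \in \omega \subseteq C_1$, each with probability $1/|\omega|$; similarly, $j$ is selected at states $\omega$ with $j \in \omega \subseteq C_1 \cup C_2$, each with probability $1/|C_2 \cap \omega|$. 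Writing the consistency equations for $v_i^*$ and $v_j^*$, I would isolate the coefficient of $z$: these are $q/(q + Y)$ and $q/(q + Y')$ respectively, where $q := p(\{i\}) = p(\{j\})$ (equal by the second part of Assumption \ref{assm:monotonicty}), and $Y, Y'$ denote the contributions from the non-trivial states in the two denominators.

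The heart of the argument is to show $Y' > Y$ strictly. The injection I would use is $\omega \mapsto \omega \cup \{j\}$, mapping non-trivial $\omega \subseteq C_1$ containing $i$ into non-trivial $\omega' \subseteq C_1 \cup C_2$ containing $j$. Since $\omega \subseteq \omega \cup \{j\}$, the first part of Assumption \ref{assm:monotonicty} gives $p(\omega \cup \{j\}) \geq p(\omega) \geq p(\omega)/|\omega|$, while $|C_2 \cap (\omega \cup \{j\})| = 1$, so the image term pointwise dominates the preimage term. Moreover, the state $\{i,j\}$ lies in the range of the sum defining $Y'$ (with contribution $p(\{i,j\}) > 0$ by the support assumption) but is \emph{not} in the image of the injection, since the preimage would be $\{i\}$, which fails $|\omega| \geq 2$. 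Hence $Y' \geq Y + p(\{i,j\}) > Y$, so $q/(q + Y) > q/(q + Y')$ strictly. Consequently, for $z$ above a threshold depending only on exogenous data and the (finitely many) candidate partitions, $v_i^*$ grows in $z$ strictly faster than $v_j^*$, giving $v_i^* > v_j^*$, contradicting $v_i^* < v_j^*$. Hence $m = 1$, and the limiting VE is fully-mixed.

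The main obstacle is engineering the injection so that (i) Assumption \ref{assm:monotonicty} can be applied term-by-term to establish $Y' \geq Y$, and (ii) a leftover state in the range of $Y'$ forces strict inequality $Y' > Y$ (rather than merely $\geq$); without the strict gap, the $z \uparrow \infty$ comparison between $v_i^*$ and $v_j^*$ would be inconclusive. Once the limiting VE has been pinned down as fully-mixed, uniqueness and global asymptotic stability for all $\beta \in \mathbb{R}_+$ are inherited verbatim from Theorem \ref{th:uniquemixedstable}, completing the argument.
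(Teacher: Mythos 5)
Your proposal is correct and follows essentially the same route as the paper's: existence, uniqueness, and global asymptotic stability are inherited wholesale from Theorem \ref{th:uniquemixedstable}, and the partially-mixed limits are ruled out by comparing the weight on $z$ in the consistency equations of classes in the two lowest valuation tiers under Assumption \ref{assm:monotonicty} --- precisely the paper's large-$z$ order-reversal contradiction, with your injection $\omega \mapsto \omega \cup \{j\}$ and the leftover state $\{i,j\}$ merely making explicit the strict gap that the paper asserts more informally. One harmless inaccuracy: at classes that are indifferent in the limit, the limiting logit shares are endogenous (determined by the $O(1/\beta)$ valuation differences along the sequence of SVEs) rather than uniform ($1/|\omega|$ or $1/|C_2 \cap \omega|$), but your inequality chain only uses that the preimage share is at most $1$ while the image share equals $1$ (since $j$ is the unique maximizer in $\omega \cup \{j\}$), so the pointwise domination, the strict bound $Y' \geq Y + p(\{i,j\})$, and the uniform threshold $\Hat{z}$ all survive unchanged.
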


To see the intuition behind this result, recall that the consistency map $g_s(\mathbf v)$ averages expected payoffs across menus conditional on $s$ being selected. With equal singleton probabilities, the singleton bonus enters $g_s(\mathbf v)$ with the same unconditional weight for every class, but with a class-dependent conditional weight that is inversely proportional to the total probability mass of menus in which $s$ is selected. Under the $\argmax$ rule, any strict valuation gap $v_i<v_j$ implies that class $j$ is selected in additional menus where $i$ is not, so the selection mass for $j$ is strictly larger. Thus, the singleton component is more diluted for $j$ than for $i$, and class $i$ places a strictly larger effective weight on the singleton bonus $z$ in its consistency calculation. For sufficiently large $z$, this amplification overturns any putative strict ranking, so no strict valuation gaps can persist, implying any VE must be fully-mixed.

\vspace{-0.1in}
\subsubsection{Multiplicity of SVE with at least one asymptotically stable SVE}
\label{sec:multipleVE}

We now consider the complementary regime in which singleton menus are uniformly \emph{disadvantaged} relative to non-singleton menus. Concretely, we perturb payoffs by lowering the expected payoff in each singleton menu while leaving all other menu–contingent expected payoffs unchanged. For sufficiently low $z$ shifts, the CQL dynamics exhibit multiple equilibria, and crucially at least one SVE lies in the neighborhood of a strict pure LSVE and is therefore locally asymptotically stable for large sensitivity by Theorem~\ref{th:strictpurestable}.
\vspace{-0.04in}
\begin{theorem}
\label{th:generallow}
Fix a generic reduced decision tree $\mathcal{T}'_n$ such that Asm.~\ref{as:singleton-support} is satisfied. Then there exists a threshold $\tilde z>-\infty$ such that for every $z<\tilde z$ the following statements hold:
\begin{enumerate}
\item[(i)] There exists at least one strict pure VE. Consequently, there exists $\hat\beta<\infty$ such that for all $\beta\ge\hat\beta$, the unique SVE that arises in a neighborhood of that strict pure VE is locally asymptotically stable for the CQL dynamics.
\item[(ii)] Additionally, if all binary menus are in $\mathrm{supp}(p)$, there exist multiple VE such that for each $s\in\mathcal S$, there exists a distinct VE at which $s$ is the unique worst-ranked class.
\end{enumerate}
\end{theorem}
The intuition for this theorem is the mirror image of Theorem~\ref{th:uniquemixedstable}. When a class is ranked low, it is chosen relatively more often in its singleton menu than in non-singleton menus. If the singleton shift $z$ is sufficiently negative, this selection pattern depresses the class’s assessed payoff even further, reinforcing its low rank. Hence, for $z$ small enough, strict valuation hierarchies can be self-confirming, yielding at least one strict pure VE and, by the high-sensitivity correspondence, a locally stable nearby SVE. When all binary menus are in support, this reinforcement argument can be run with any designated class as the unique worst class, generating multiple distinct valuation equilibria. The multiplicity identified in Theorem~\ref{th:generallow} can be further strengthened under stronger assumptions on the menu distribution. Notably, if all menus are equally likely, any strict ordering of the valuations can arise in a valuation equilibrium. This insight emerges as a corollary of the following broader result that relies solely on the weaker Assumption~\ref{assm:monotonicity}.
\begin{proposition}
    Let Asms.~\ref{as:singleton-support},~\ref{assm:monotonicity} hold. There exists a threshold \(\Tilde{z}\in\mathbb R\) such that for all \(z \le \Tilde{z}\), in a finite decision tree \(\mathcal{T}_n^{'}\), the following holds. Every strict total order on the valuations of the $n$ similarity classes is admissible in some VE, i.e., there exist $n!$ strict pure VE. Correspondingly, there exists a $\Hat{\beta} < \infty$, such that for all \(\beta \geq \Hat{\beta}\), the (unique) SVE that lies in the neighborhood of each strict pure VE is locally asymptotically stable.
\label{th:multiplicitypureVE}
\end{proposition}
To highlight Theorems~\ref{th:uniquemixedstable} \& \ref{th:generallow} by varying singleton payoffs, it's instructive to revisit the RPS family of decision-trees (Fig.~\ref{fig:RPS}) adopting the specification used to illustrate the possibility of non-convergence in Theorem~\ref{th:cycle}: $z_R=z_P=z_S=z$, with $z\in\mathbb R$, and $p(\omega)=1/6$ for all $\omega\in\Omega\setminus\{\omega_7\}$ with $p(\omega_7)=0$. Recall that, for $z\in(-1,0)$, there is a unique \textit{unstable} fully-mixed SVE at the origin (indifference across all three classes) for large sensitivity $\beta$; with valuations perpetually oscillating. As $z$ increases, the mixed equilibrium gains stability. In particular, for $z\ge 0$, the fully-mixed SVE is \textit{unique} and \emph{globally stable} for all $\beta\ge 0$ in accordance with Theorem~\ref{th:uniquemixedstable}. Conversely, as $z$ decreases, multiplicity arises: for $z\le -1$, the fully-mixed SVE persists but is \emph{unstable}. In addition, multiple distinct SVE emerge near the corresponding strict pure VE, each of which is \textit{locally stable} in line with Theorem~\ref{th:generallow}, while further SVE appear near partially-mixed VE that act as saddle points for large $\beta$. Together, these cases display the bifurcation picture in the RPS tree: the unique, globally stable mixed equilibrium for $z\ge 0$ loses stability as $z$ crosses into $(-1,0)$, and for $z\le -1$ the system exhibits coexistence of multiple locally stable SVE near the corresponding strict VE.

\vspace{-0.2in}
\section{Discussion}
\label{sec:discuss}

\vspace{-0.1in}
\subsection{Limited Salience and Induced Misspecification}
\label{sec:salience}

The exogenous similarity partition in our model can be interpreted through the lens of limited salience. Consider alternatives that are characterized by several payoff-relevant attributes (price, size, etc.), but suppose Alice attends only to a fixed proper subset of them, the salient attributes. This induces a projection from the full attribute vector of an alternative to its salient coordinates. Alice then perceives two alternatives as similar whenever their salient projections coincide, even if they differ along non-salient payoff-relevant dimensions. The resulting similarity classes need not reflect objective payoff homogeneity, because the salient attributes need not coincide with the attributes that are most predictive of payoffs. Rather, they reflect the granularity at which Alice perceives her environment. This interpretation gives a natural foundation for coarsening. When only a subset of attributes are salient, it is reasonable for Alice to organize her options according to those attributes alone and to treat residual within-class heterogeneity as i.i.d.\ noise. We assume the partition is induced by a fixed attention mechanism: an attribute is either salient in a given context or it is not. 

Limited salience can be viewed as inducing a simple form of misspecification. Since Alice reasons using similarity classes only, she behaves as if all alternatives within a class were governed by the same stationary payoff law. A realized payoff from one alternative is then interpreted as feedback about its class as a whole, and the corresponding class valuation is updated accordingly. In reality, however, alternatives that look identical on salient dimensions may yet differ systematically in expected payoffs because of latent attributes. Thus, the DM unwittingly pools information across possibly heterogeneous options. 

This perspective is particularly natural in online retail environments. A consumer shopping from home often sees only a coarse digital representation of a product and cannot inspect dimensions that would be readily observable in person, such as expiry dates, ripeness, or packaging condition. Two products may therefore appear identical on the online platform even though they differ in ways that matter for the consumer's preferences.\footnote{This perspective is consistent with evidence from online grocery retail. \citet{Sharib2024} document that mandatory nutrition information is often missing, inaccessible, or not legible on major U.S.\ online food retail platforms, despite the underlying technical capacity to display it. Consumer evidence from online grocery orders also points to short-dated and even past-date deliveries, together with limited consumer control over exact use-by dates in online fulfillment; see the following \href{https://www.choice.com.au/shopping/everyday-shopping/supermarkets/articles/out-of-date-food-in-online-grocery-orders}{report} from Australia.}

\vspace{-0.15in}
\subsubsection*{Application: Assortment Design on an Online Retail Platform}
A potential application of this perspective is assortment design on a monopolistic online platform. Consumers shopping on such a platform typically observe only a limited set of salient product attributes, such as posted price, color, size, or brand label, while other payoff-relevant dimensions remain difficult to inspect at the moment of choice, including sourcing, durability, compliance, or seller reliability. The platform interface thus induces a partition of the product space into similarity classes, and products that appear similar along these salient dimensions are treated by consumers as close substitutes even when they differ systematically in latent quality. Learning at the level of classes rather than individual products is therefore a natural response to the informational constraints of digital shopping. 

This structure gives the platform considerable power over how consumers learn. The platform does not merely choose which categories appear together; it can also choose which particular products are used to represent each category in a given menu. In particular, it can expose the consumer to especially attractive products in some menus and more ordinary products in others, while preserving the same visible category labels and prices throughout. The consumer may then attribute the quality of those experiences to the category itself rather than to the hidden composition of products shown within it. 

This creates a natural strategic motive when the platform earns different margins across categories. For e.g., a platform may earn much higher profits on its own in-house products than on third-party products sold through the marketplace yielding a small commission. Assume further that, in ordinary comparisons (non-singleton menus), the in-house products are on average somewhat inferior, and that the platform's profit margins are similar across its category of in-house products. A correctly-specified consumer would then tend to favor the third-party category. However, the platform can manipulate coarse learning by disproportionately featuring unusually attractive in-house products in singleton menus while showing more ordinary products in menus where categories compete side by side. This corresponds to a configuration in which \(\pi_s(\{s\})>\pi_s(\omega)\) for non-singleton menus \(\omega\ni s\), as singleton menus contain especially favorable representatives of the in-house class \(s\). In line with Theorem~\ref{th:uniquemixedstable}'s indifference insight, we expect such manipulations to induce more sales of the in-house products even when there are products from other brands available, thereby increasing the platform's profits. Ongoing companion work formalizes this mechanism and develops the broader strategic-platform implications of coarse consumer learning.

\vspace{-0.13in}
\subsection{Cognitive Complexity \& Choice Overload}
\label{sec:complexity}
A complementary interpretation of CQL is as a deliberate form of stateless Q-learning adopted for cognitive and statistical tractability rather than as a consequence of limited salience. Suppose Alice is correctly specified: the alternative set \(\mathcal A\) is finite, all payoff-relevant attributes are salient, payoffs may depend on both the chosen alternative and the realized menu, and \(\Psi\) consists of all nonempty menus. A canonical correctly specified learner would then maintain a separate valuation \(v(a,\psi)\) for each feasible alternative-menu pair \((a,\psi)\) with \(a\in\psi\). While this tabular representation accords with the standard logic of Q-learning, it has exponential worst-case dimension: if all nonempty menus are feasible and \(n:=|\mathcal A|\), then it requires \(\sum_{\psi\in\Psi}|\psi|=n\cdot2^{n-1}\) valuation coordinates. This is the familiar curse of dimensionality in dynamic programming \citep{bellman1957}. By contrast, CQL compresses this representation to one valuation per similarity class, so the dimension of the valuation state grows only linearly in the number of classes, independently of the number of menus.

The burden is not only mnemonic but also statistical. At such a fine level of description, a given coordinate \(v(a,\psi)\) is updated only when the precise pair \((a,\psi)\) is encountered and the alternative \(a\) is chosen from menu \(\psi\). When the menu space is large, many such coordinates are visited only rarely, so learning at the fully disaggregated level is based on extremely thin data. Even if exact asymptotic identification remains possible under strong exploration assumptions, estimation at that granularity is slow, noisy, and sample-inefficient. By contrast, a cognitively frugal learner who abstracts from menu dependence and maintains a coarser set of valuations pools feedback across many observations, thereby alleviating the sparse-data problem that arises under the fully specified representation. This pooling, however, is statistically non-innocuous. By updating one valuation per category, Alice treats realized payoffs as if they were representative draws from a menu-invariant class-level distribution. Yet when alternatives within a category are heterogeneous and the composition of the category varies across menus, selecting that category in different menus amounts to sampling from different conditional distributions. Since the sampling intensity of a class across menus is itself endogenous to current valuations, the pooled data are systematically selected in a way that Alice fails to recognize - leading to the endogenous selection bias in our model.

It's worth noting that Coarse Q-learning extends immediately to environments with infinitely many alternatives, provided the number of similarity classes remains finite. This is useful because classical alternative-level Q-learning is ill-defined on a continuum of alternatives: under any absolutely continuous within-menu choice rule, a given alternative is selected with probability zero at any date and hence is typically never sampled repeatedly, so the notion of learning an alternative’s value is not meaningful. By contrast, CQL remains well-defined because learning takes place at the level of classes rather than individual alternatives. Indeed, all of our analysis continues to apply so long as the reduced decision tree $\mathcal T' (\mathcal S, \Omega, p, \pi)$ is finite, with the only modification being that within-class randomization is defined relative to a normalized reference measure on the feasible part of the chosen class. 

We also note that the singleton-payoff shifts above can be interpreted as reduced-form \emph{choice aversion/affinity} \citep{iyengar2000choice,aversion}. If expected payoffs take the form \(u_a(\psi)=\mu_a(\psi)+\xi(\psi)\), where \(\xi(\psi)\) is common to all \(a\in\psi\), then this menu-level component is irrelevant for within-menu choice under correct specification but affects pooled feedback under coarse learning. Theorems~\ref{th:uniquemixedstable} \& \ref{th:generallow} study the case where \(\xi(\psi)=z\cdot\mathbf 1\{|\psi|=1\}\) depends only on whether \(\psi\) is a singleton: strong choice aversion ($z>0$) leads to robust indifference, while strong choice affinity ($z<0$) leads to multiplicity.

\vspace{-0.1in}
\subsection{Bayesian learning under conjugate priors}
\label{sec:conjugate}

While model-free reinforcement learning is fundamentally non-parametric and does not require
specifying a likelihood for rewards or a prior, the myopic CQL update can coincide with Bayesian updating of a posterior mean under a misspecified subjective model. Fix a class \(s\) and suppose that, conditional on choosing \(s\), Alice treats the realized rewards in her subjective model as i.i.d.\ draws from a regular one-parameter natural exponential family with scalar sufficient statistic $T(r)$, and let \(\mu_s:=\mathbb E[T(r)\mid s]\) denote the (unknown) mean parameter of interest. Under the corresponding \citet{DiaconisYlvisaker1979} conjugate prior,\footnote{For a likelihood function in the natural exponential family, a conjugate prior exists, often also in the exponential family \citep{DiaconisYlvisaker1979}. This covers, inter alia, Bernoulli–Beta, Binomial–Beta, Poisson–Gamma, and Normal–Normal (with known variance) when the object of interest is \(\mathbb E[\mu_s\mid\text{data}]\). See \url{https://en.wikipedia.org/wiki/Conjugate_prior} and the \href{https://people.eecs.berkeley.edu/~jordan/courses/260-spring10/other-readings/chapter9.pdf}{lecture notes} by Michael I. Jordan.} parameterized by prior mean \(m_{0,s}\) and
pseudo-count \(n_{0,s}>0\), the posterior expectation of \(\mu_s\) after observations
\(r_0,\ldots,r_{n_s-1}\) is affine in the sufficient statistics, and can be written as:
\[
\mathbb E[\mu_s\mid r_0,\ldots,r_{n_s-1}]
=
\frac{n_{0,s}m_{0,s}+\sum_{\ell=0}^{n_s-1}T(r_\ell)}
{n_{0,s}+n_s}.
\]
Let $N_k(s):=\sum_{t=0}^{k-1}\mathbf 1\{s_t=s\}$ be the number of times class $s$
has been chosen strictly before date $k$, and set $v_0(s)=m_{0,s}$. Consider the per-visit CQL update in \eqref{eq:update} with $\gamma=0$, 
\[
v_{k+1}(s)
=
v_k(s)
+\mathbf 1\{s_k=s\}\,\alpha_{k}(s)\bigl(T(r_{k})-v_k(s)\bigr),
\qquad
\alpha_k(s):=\frac{1}{1+n_{0,s}+N_k(s)}.
\]
A simple induction on the visit count $N_k(s)$ shows that for every date $k\ge0$,
\[
v_{k}(s)
=
\frac{n_{0,s}m_{0,s}+\sum_{t< k:\, s_t=s} T(r_t)}{n_{0,s}+N_k(s)}
=
\E\!\left[\mu_s\mid \text{data from $s$ up to (but excluding) date $k$}\right].
\]
In common scalar specifications, such as Bernoulli, Poisson, and Normal rewards, one may take \(T(r)=r\). Then \(\mu_s\) is Alice's subjective class-level expected
reward, and the valuation recursion tracks the posterior mean of this expected reward. Thus, under
Alice's subjective class-level i.i.d.\ model, the myopic CQL recursion with visit-count
step-sizes reproduces Bayesian updating of the posterior mean exactly along the realized
path.\footnote{For \(\gamma>0\), the update includes a bootstrapped continuation term and is therefore no longer literally Bayesian updating of a payoff mean. In our environment, however, the expected continuation term is common across classes because menu transitions are action-independent. Consequently, after translating out the common valuation component, the relative mean-field dynamics coincide with the myopic system.}

In addition to the Bayesian foundation for our \emph{valuation update} rule, a natural
foundation for our \emph{logit choice} rule is entropy-regularized choice. Conditional on
current valuations \(\mathbf v\) and menu \(\omega\), suppose that for sensitivity \(\beta > 0\),  Alice chooses
a probability vector \(\sigma_\omega\in\Delta(\omega)\) to maximize
\(
\sum_{s\in\omega}\sigma_\omega^s v_s
-\dfrac{1}{\beta}\sum_{s\in\omega}\sigma_\omega^s
\log\!\left(\dfrac{\sigma_\omega^s}{1/|\omega|}\right),
\)
where the second term is the relative entropy (KL) cost of deviating from the uniform distribution on \(\omega\). The parameter \(\beta\) measures choice precision: larger \(\beta\) places more weight on payoff considerations relative to the cognitive cost of concentrating too sharply on a particular class. The optimal class-choice probabilities are exactly the multinomial  logit probabilities \citep{matejka2015}.\footnote{Under this interpretation, the CQL valuation recursion may be viewed as Bayesian learning of a posterior mean under a coarse subjective model, while the logit rule reflects a trade-off between exploiting currently more attractive classes and avoiding the cognitive cost of excessively sharp discrimination across them.}

This Bayesian recursion corresponds to asynchronous CQL with step-sizes indexed by class visit counts: since a class is updated only when selected, classes with lower choice propensities receive fewer updates and therefore evolve more slowly in calendar time. However, the ODE \eqref{eq:differential} is the synchronous mean-field limit associated with inverse-propensity-weighted step-sizes in discrete-time CQL, under which effective update rates are asymptotically equalized across classes in expectation. While we prefer the synchronous ODE \eqref{eq:differential} for its analytical tractability, the \href{https://drive.google.com/file/d/1SfR7HiB3HyIAYz2R3dD38JKgJi70GFKv/view?usp=share_link}{Online Appendix (Sec.~B)} derives the corresponding asynchronous mean-field ODE for deterministic calendar-time gains and shows that all our results continue to hold under this specification, except for the \textit{global} convergence with two classes in Theorem~\ref{th:convergence2sim}.

\vspace{-0.1in}
\section{Related Literature}
\label{sec:litreview}
In addition to the literature on reinforcement learning (RL) in computer science that we've already discussed, our paper also relates to a closely connected literature in economics. Relative to classical dynamic programming \citep{bellman1957}, model-free RL methods such as Q-learning estimate value functions directly from sampled experience rather than by fitting a fully-specified parametric model of the data-generating process. The myopic variant of Q-learning has been studied independently in economics under the label \emph{payoff-assessment learning} in decision-making \citep{sarin1999}, and it has also been used in the analysis of routing games using stochastic approximation \citep{COMINETTI} and extensive-form games \citep{Samet}. Recently, \citet{MollEJ} has advocated for RL as a computational tool in heterogeneous-agent  macroeconomics. However, none of these papers have considered the effect of coarse perception on Q-learning, which is our main contribution.

Our analysis also connects to the growing economic literature on Bayesian learning under model misspecification (see \citet{esponda2026} for a recent survey). A foundational insight, tracing back to \cite{Berk}'s consistency results, is that under misspecification Bayesian posteriors concentrate on parameter values (or models) that best approximate the true data-generating process in Kullback–Leibler divergence. Building on this, a growing literature studies \emph{active learning} environments in which the data an agent observes is itself endogenous to her actions, and characterizes long-run behavior via equilibrium notions that impose “best-fit” beliefs on the realized path of play, such as Berk–Nash equilibrium \citep{Esponda,Pouzo} and its refinements \citep{Lanzani}. Convergence results have been obtained in limited settings \citep{Heidhues,Strack,YamamotoJET}. Our model complements this literature by studying a distinct source of misspecification that arises from limited salience - coarse inference - in a non-parametric learning model. Moreover, we show in Sec.~\ref{sec:conjugate}, that if the agent’s misspecified subjective class-level payoff model belongs to a natural exponential family with conjugate prior, our valuation recursion with step-sizes indexed by inverse visit counts coincides exactly with Bayesian updating of the posterior mean of the (misspecified) class-mean parameter along the realized path. Under this interpretation, a VE can be viewed as a Berk-Nash equilibrium induced by the agent's coarse subjective model and the corresponding endogenous data-generating process.\footnote{VE is conceptually related to Personal Equilibrium \citep{Spiegler}: the DM's choices generate the data used to fit a misspecified subjective model, and the fitted beliefs (valuations) in turn justify those choices.}

Regarding coarse categorization and its equilibrium consequences, our closest point of contact is \cite{Jehiel2007}, who introduced valuation equilibrium (VE) for multi-agent extensive-form games. As already noted, the steady states of our learning dynamics coincide with the smooth valuation equilibria (SVE) induced by a logit perturbation of VE. Relative to \cite{Jehiel2007}, we contribute along three dimensions to equilibrium analysis. First, we analyze the learning dynamics that generate SVE as fixed points, deriving stability and selection results. In particular, as sensitivity grows without bound, the finite accumulation points of the SVE correspondence \textit{LSVE} refine \textit{VE} by selecting robust limits, even when the set of (mixed) VE is non-isolated. Second, we show that indifference may be unavoidable in equilibrium: even in finite generic decision trees, uncountably many mixed VE may arise, a possibility not highlighted in \cite{Jehiel2007}. To the best of our knowledge, our result on the global stability of a unique mixed equilibrium in generic decision trees with high singleton payoffs, has no parallel in the literature.\footnote{For context, global convergence results are known only for a limited class of games. \citet{hofbauer2002global} show global convergence of stochastic fictitious play in zero-sum, potential, and supermodular games.} Third, we identify an open set of decision trees for which no SVE is asymptotically stable for sufficiently large sensitivity, so that the learning dynamics admit no equilibrium point that is a long-run attractor. 

Beyond VE, also worth mentioning is the game-theoretic concept of Analogy-based Expectation Equilibrium \citep{jehiel2005analogy} where players form coarse expectations about opponents' behavior.\footnote{Viewing Nature as a passive player who chooses Alice's payoff after each chosen alternative, one can view a VE as an ABEE (thus, a Berk-Nash equilibrium) in which Alice bundles all alternatives in a similarity class into an analogy class to form her expectation about the payoffs in each menu \citep{Jehiel_2026}.} Finally, our model is related to case-based decision theory (CBDT) of \cite{Gilboa} where they consider a DM who evaluates actions by aggregating payoffs across similar past cases and they axiomatize a decision rule that chooses a \say{best} act based on its past performance in similar cases. In our setting, similarity classes induced by salience play the role of “cases” and class-level valuations summarize experience (observed payoffs) pooled across alternatives within each class.\footnote{Other articles that use the concept of similarity for individual or evolutionary learning across games and decisions include  \citet{LICALZI,Gilboa,Samuelson,Steiner,Mengel,MERTIKOPOULOS}.}

Our instability and cycling result also relates to a broader set of non-equilibrium learning phenomena. In particular, persistent oscillations have been documented in Bayesian learning under misspecification, where actions affect the information stream: \cite{Nyarko} provides an early example in which beliefs and actions in an MDP can cycle on every sample path. \cite{Romanyuk} show using the same example that the interaction between misspecification and incentives to experiment can lead to non-convergence for sufficiently patient agents (even when more myopic behavior converges). Cycling is likewise familiar in learning in games, beginning with Shapley's classic non-convergence example for fictitious play and related constructions such as the Shapley polygon limit sets \citep{Shapley1964,Jordan1993,GAUNERSDORFER}. Unlike these examples, our cycling arises in a minimal single-agent bandit problem with \emph{exogenous} menu transitions and payoffs that are i.i.d.\ conditional on the chosen alternative, so the non-convergence is driven purely by categorization and the feedback between valuation-based sampling and coarse inference.\footnote{Our cycling result concerns the asymptotic mean-field ODE limit of the discrete-time CQL process. To Alice, finite samples may look like long transients or noisy fluctuations rather than an exact cycle.}

\vspace{-0.1in}
\section{Conclusion}
\label{sec:conclude}
The central message of this paper is that Coarse Q-learning in the long-run generates
novel qualitative phenomena in decision problems with stochastic menus and coarse representation - phenomena that are otherwise impossible to obtain in a standard Q-learning setup. When expected payoffs in singleton menus are sufficiently large, the learning dynamics push valuations toward \emph{indifference} across multiple similarity classes. When singleton expected payoffs are sufficiently low, the same feedback logic supports \emph{multiple} self-confirming rankings of similarity classes in the long-run. Outside these polar regimes, the interaction between coarse perception and menu-dependent sampling at the class level can produce endogenous \emph{instability}, including persistent \emph{cycling} in relative valuations and choice frequencies.\footnote{Our stability results for mixed SVE do not fully characterize the intermediate singleton payoff regime beyond the negative convergence result; we leave a systematic analysis of these cases for future work.}

More broadly, our results offer a new lens on indifference, heterogeneity, and instability in preferences defined over categories, phenomena that have been documented in field and lab experiments in behavioral economics \citep{DellaVigna,Lichtenstein}. In our framework, preferences over categories are not primitives but outcomes of an active learning process shaped by coarse perception and stochastic availability. This provides a mechanism by which different individuals may converge to different stable valuation orderings, or a single individual may exhibit indifference or preference reversals over time. We leave the endogenous formation and revision of similarity classes for future research, especially in environments where salience does not uniquely pin down the fixed categories.


\bibliographystyle{apalike}
\bibliography{ref}

\appendix

\section{Appendix: Omitted Proofs}
\label{sec:mainproofs}

\vspace{-0.1in}
\subsection{Theorem \ref{th:cycle}}
\label{sec:prooftheoremcycle}

\begin{proof}
Consider the family of decision trees RPS with three similarity classes seen in Fig.~\ref{fig:RPS}. We set $z_R = z_P = z_S = z$ with a fixed $z\in(-1,0)$ and $p(\omega) = 1/6$, for all $\omega \in \Omega\setminus\{\omega_7\}$ with $p(\omega_7) = 0$, in order to simplify the algebra. We verify that any VE in this decision tree must be fully-mixed with Alice being completely indifferent across all three equivalence classes (see Lem.~\ref{lem:RPS-VE-LSVE}-\ref{lem:RPS-eventual-unique} in the Supp.\ Appendix). To see this informally, notice the binary payoffs are such that each class \say{defeats} one other class in exactly one binary menu and is \say{defeated} in another. The singleton menus contribute identical negative shifts $z$ symmetrically to the three classes. Hence no class can be a best-reply in \emph{all} menus, ruling out strict-pure VE; by the same symmetry, partially-mixed VE can't exist. At any binary menu, optimality requires the class with the larger valuation to be chosen; but because the three binary menus cyclically favor different classes and singleton penalties are symmetric, the only way to satisfy all menu-wise optimality conditions is $v_R=v_P=v_S$.  Therefore, every VE is fully-mixed and entails indifference across classes; in relative coordinates $x:=v_R-v_S=0$ and $y:=v_P-v_S=0$. All statements in this proof are in relative coordinates $(x,y)=(v_R-v_S,v_P-v_S)$; thus $(0,0)$ corresponds to the (absolute) SVE $(z/2,z/2,z/2)$, which is an SVE for all $\beta\ge0$. 

Since any fully-mixed VE has relative valuation \((0,0)\),\footnote{Recall that the fully-mixed VE is not unique - in fact, any $q \in (0,1)$ such that $\sigma^P_{\omega_1} = \sigma^S_{\omega_2} = \sigma^R_{\omega_3} = q$, constitutes a continuum of fully-mixed VE lying on a manifold of dimension 1.} Theorem~\ref{th:correspondence}(ii) implies that any high-sensitivity limit of SVE must have relative valuation \(\mathbf v^\infty=(0,0)\). The logit perturbation uniquely selects the uniform element of the fully-mixed VE
continuum: \(\sigma^s_\omega(\mathbf v^\infty)=1/2\) for every binary
menu \(\omega\) and \(s\in\omega\). We show in Section~\ref{sec:[proofremainder]} of the Supplemental Appendix that for sufficiently large $\beta$, there's a unique SVE that lies near $(0,0)$. Let
\[
\begin{cases}
\dot x := U (x,y) =  g_R(x,y)-g_S(x,y)-x,\\
\dot y := V (x,y) =  g_P(x,y)-g_S(x,y)-y,
\end{cases}
\]
denote the CQL dynamics (in relative valuations) with parameters $\beta>0$ and $-1<z<0$. The reduction to two-dimensions by translating each of $v_R, v_P, v_S$ by $-v_S$ is w.l.o.g.\ since the drift map $g$ is translation-invariant (see Lemma \ref{lem:translation-reduction}). We note that the unique LSVE ($\mathbf{v}^*_\infty = (0,0)$ and $\sigma^s_\omega (\mathbf{v^*_\infty)} = 0.5$ for all $s\in\mathcal{S}$ and $\{\omega \in \Omega_s : \vert \Omega_s \vert > 1\}$) is an SVE for all $\beta\geq0$. The singleton menus are degenerate with the single available class being chosen with probability $1$. Set
\(
\lambda(u):=\dfrac{1}{1+e^{-\beta u}},\ \text{with} \
q:=\lambda(x-y),\ r:=\lambda(y),\ s:=\lambda(x).
\)
At the binary menus, the logit choice probabilities are therefore
\(
\sigma_R^{(RP)}=q,\ \sigma_P^{(PS)}=r,\ \sigma_R^{(RS)}=s,
\)
with complementary probabilities \(1-q\), \(1-r\), and \(1-s\), respectively. Hence,
\[
g_R(x,y)=\frac{-q+s+z}{q+s+1},\qquad
g_P(x,y)=\frac{1-q-r+z}{2-q+r},\qquad
g_S(x,y)=\frac{s-r+z}{3-r-s},
\]
and the reduced CQL dynamics are
\(
\dot x=(g_R-g_S)(x,y)-x,\ \dot y=(g_P-g_S)(x,y)-y.
\)
At the fully mixed SVE \((x,y)=(0,0)\), one has \(q=r=s=\tfrac12\), so
\(
g_R(0,0)=g_P(0,0)=g_S(0,0)=\frac z2.
\)
Moreover, since
\[
\lambda(u)=\frac12+\frac{\beta}{4}u+o(u)\qquad (u\to 0),
\]
\[
q=\frac12+\frac{\beta}{4}(x-y)+o(\|(x,y)\|),\qquad
r=\frac12+\frac{\beta}{4}y+o(\|(x,y)\|),\qquad
s=\frac12+\frac{\beta}{4}x+o(\|(x,y)\|).
\]
Using the Taylor expansions, the Jacobian \(A=D(\dot x,\dot y)\big|_{(0,0)}\) is
\[
A \;=\;
\begin{pmatrix}
-1+\dfrac{\beta}{16}\,(-3z-2) & \ \ \dfrac{\beta}{4}\\[8pt]
-\dfrac{\beta}{4} & -1+\dfrac{\beta}{16}\,(-3z+2)
\end{pmatrix}.
\]
Hence,
\(
\operatorname{tr}(A)=-2-\dfrac{3z}{8}\,\beta,\ \text{and} \
\det(A)=1+\dfrac{3z}{8}\,\beta+\dfrac{9z^2+12}{256}\,\beta^2,
\)
and the discriminant equals
\(
\operatorname{tr}(A)^2-4\det(A)=-\dfrac{3}{16}\,\beta^2\;<\;0,
\)
so the eigenvalues are a complex conjugate pair \[\lambda_{1,2}(\beta,z) =-\dfrac{3z\beta}{16}-1\;\pm\;i\,\dfrac{\beta\sqrt3}{8},\] with
\(
\Re\lambda = \dfrac12\,\operatorname{tr}(A)=-1-\dfrac{3z}{16}\,\beta.
\)
Therefore, the fully-mixed SVE \((0,0)\) with index $+1$ is
\[
\begin{cases}
\text{a \emph{stable} focus} & \text{if } \beta<-\dfrac{16}{3z},\\[6pt]
\text{\emph{non-hyperbolic}} & \text{if } \beta=-\dfrac{16}{3z},\\[6pt]
\text{an \emph{unstable} focus} & \text{if } \beta>-\dfrac{16}{3z}.
\end{cases}
\]
Fix $z\in(-1,0)$. At $\beta_c=-\dfrac{16}{3z}$, the real part $\Re\,\lambda_{1,2}=0$ with non-zero imaginary part $\dfrac{\beta\sqrt3}{8} > 0$. Moreover, since $\dfrac{d}{d\beta}\Re\,\lambda\bigl|_{\beta_c}=-\dfrac{3z}{16}>0$, the transversality condition for a Hopf bifurcation is satisfied. By strict negativity of the first Lyapunov coefficient $\ell_1(\beta_c)=\dfrac{7\sqrt{3}}{9z}<0$, a super-critical Hopf bifurcation \citep{marsden2012hopf} occurs at the critical $\beta=\beta_c$ where a unique curve of periodic solutions bifurcates from the fixed point into the region $\beta>\beta_c$. For all $\beta>\beta_c$, the unique SVE $(0,0)$ is unstable and the periodic orbit bifurcating from it is an isolated stable limit cycle whose amplitude grows like $\sqrt{\vert \beta - \beta_c\vert}$.

Consequently, every bounded trajectory converges to a periodic orbit. Indeed, expected payoffs are finite, so the flow is confined to the compact convex hull $M \subset \mathbb{R}^2$ of the relative expected payoffs, and the vector field points strictly inward on $\partial M$, making $M$ positively invariant. In the planar system, the only equilibrium is \((0,0)\), which is an asymptotically unstable focus for \(\beta>\beta_c\). By the Poincaré–Bendixson theorem, any \(\omega\)-limit set contained in \(M\) that is not an equilibrium must be a periodic orbit. Hence all non-trivial trajectories are repelled away from \((0,0)\) when \(\beta>\beta_c\), and every bounded trajectory has a periodic orbit as its \(\omega\)-limit set. Moreover, for \(\beta\) sufficiently close to \(\beta_c\), the Hopf bifurcation is super-critical, so exactly one stable limit cycle is created; by positive invariance of $M$ and absence of other equilibria, all bounded trajectories spiral into this unique cycle (see Fig.~\ref{fig:RPSphaseportrait}). Finally, by Thm.~6.12 \& Cor.~6.14 in \cite{Benaim1999}, the $\omega$-limit set of any realization of the discrete-time CQL dynamics \eqref{eq:update} is almost surely an internally chain-recurrent set of the flow generated by \eqref{eq:differential} - thus, it is a periodic orbit (or a cylinder of periodic orbits). Thus, we've demonstrated that the set of asymptotically stable SVE is empty for an open set of decision trees RPS parameterized by $z\in(-1,0)$. We note that this range of $z$ allows for both competition (substitution) and cooperation (complementarity) among the classes.

The tree used in the proof is chosen to be symmetric only for transparency. The conclusion is not knife–edge as seen in Example~\ref{sec:limitcycle}. Endow the parameter space of reduced decision trees with the Euclidean topology in the primitives $(p,\pi)$ (with $p$ in the relative interior of the simplex). At the symmetric specification, the relative CQL vector field undergoes a \emph{supercritical} Hopf bifurcation at $\beta=\beta_c$, i.e.\ the equilibrium has a simple pair of imaginary eigenvalues with transversal crossing and $\ell_1(\beta_c)<0$. These conditions are open. Hence there exists an open neighborhood $\mathcal U$ of this specification and an interval $(\beta_c,\beta_c+\varepsilon)$ such that, for every tree in $\mathcal U$ and every $\beta\in(\beta_c,\beta_c+\varepsilon)$, the corresponding mean-field dynamics admit a nearby unstable SVE and a locally attracting periodic orbit. In particular, on $\mathcal U$ the set of asymptotically stable SVE is empty for all $\beta>\beta_c$ sufficiently close to $\beta_c$.
\end{proof}

\vspace{-0.25in}
\subsection{Theorem \ref{th:uniquemixedstable}}
\label{sec:proofuniquemixedstable}

\begin{proof}
Fix a finite reduced decision tree $\mathcal T'_n=(\mathcal{S},\Omega,p,\pi)$ with connected co-occurrence graph $G$ and full support of singleton menus, and boost all singleton expected payoffs by $z>\hat z$ as in the theorem. Write
\(
\dot{\mathbf v}=F(\mathbf v), \text{where}\ F(\mathbf v):=g(\mathbf v;\beta)-\mathbf v,
\)
with
\(
g_s(\mathbf v)=\dfrac{\sum_{\omega\in\Omega_s}p(\omega)\,\sigma^s_\omega(\mathbf v)\,\pi_s(\omega)}
                    {\sum_{\omega\in\Omega_s}p(\omega)\,\sigma^s_\omega(\mathbf v)},\) and \(
\sigma^s_\omega(\mathbf v)=\mathbf{1}\{s\in\omega\}\dfrac{\exp(\beta v_s)}{\sum_{j\in\omega}\exp(\beta v_j)}.
\)
\begin{assumption}\label{as:singleton-dominance}
There exists $\hat z<\infty$ such that for all $z>\hat z$ and all $s\in\mathcal S$,
\begin{equation*}\label{eq:singleton-dominance}
\frac{p(\{s\})}{\sum_{\omega\in\Omega_s}p(\omega)}\big(\pi_s(\{s\})+z\big)
\;+\;
\Big(1-\frac{p(\{s\})}{\sum_{\omega\in\Omega_s}p(\omega)}\Big)\,
\min_{\omega\in\Omega_s:\,|\omega|\ge2}\pi_s(\omega)
\;>\;
\max_{\omega\in\Omega_s:\,|\omega|\ge2}\pi_s(\omega).
\end{equation*}
\end{assumption}
\vspace{-0.1in}
We assume the following conservative bound on $z$ that is sufficient for the proof.
\[
\hat z \;\ge\; \max_{s\in\mathcal S}\left\{
\Big(\frac{\sum_{\omega\in\Omega_s}p(\omega)}{p(\{s\})}\Big)\Big(\max_{\omega\in\Omega_s:\,|\omega|\ge2}\pi_s(\omega)
-\min_{\omega\in\Omega_s:\,|\omega|\ge2}\pi_s(\omega)\Big)
+\min_{\omega\in\Omega_s:\,|\omega|\ge2}\pi_s(\omega)-\pi_s(\{s\})
\right\}.
\]
Throughout this proof, once $z$ is fixed we absorb the singleton boost into the payoff array for notational simplicity, i.e.\ we write $\pi_s(\{s\})$ for the $z$-shifted payoff. For a fixed $z \ge \hat{z}$, let $K=\prod_{s\in\mathcal{S}}[m_s,M_s]$ where $m_s=\min_{\omega\in\Omega_s}\pi_s(\omega)$ and $M_s=\max_{\omega\in\Omega_s}\pi_s(\omega)$ with \(m_s < M_s\) by assumption. Then $K$ is compact, convex, and $g(K)\subseteq K$, hence every SVE lies in $K$.

\noindent\emph{\textbf{Step 1 - No Strict Pure VE for large $z$:}}
Suppose, toward a contradiction, that for some large $z$ there exists a strict pure VE with a strict total order $ v_{i_1}^*<v_{i_2}^*<\cdots<v_{i_n}^*.$
Let $i:=i_1$ be the lowest-ranked class. Since the co–occurrence graph $G$ is connected, there exists a mixed (non-degenerate) menu $\omega\in\mathrm{supp}(p)$ with $i\in\omega$ and $|\omega|\ge 2$. In a strict pure VE the unique maximizer is chosen at each menu, so at $\omega$ some $k\in\omega\setminus\{i\}$ is chosen. For $s\in\mathcal S$, let \( \mathcal M_s:=\{\omega\in\mathrm{supp}(p):\, s\in\omega,\ s=\arg\max_{j\in\omega} v_j^*\}\) and \(D_s:=\sum_{\omega\in\mathcal M_s} p(\omega). \)
Valuation consistency gives \(
v_s^* \;=\; \big(\sum_{\omega\in\mathcal M_s} p(\omega)\,\pi_s(\omega)\big)/D_s\).
Since $i$ is never maximal in any non-singleton menu, $\mathcal M_i=\{\{i\}\}$ and hence \(
v_i^* \;=\; \pi_i(\{i\})+z,
\)
so the coefficient of $z$ in $v_i^*$ equals $1$. For $k$ as above we have $\{k\},\omega\in\mathcal M_k$, so $D_k\ge p(\{k\})+p(\omega)>p(\{k\})$ and
\[
v_k^*
=\underbrace{\frac{p(\{k\})}{D_k}}_{\alpha_k\in(0,1)}\,(z+\pi_k(\{k\})
\;+\;
\frac{\ \sum_{\substack{\omega'\in\mathcal M_k\setminus\{k\}}}\!\ p(\omega')\,\pi_k(\omega')}{D_k}.
\]
Since \(\alpha_k\in(0,1)\), one has \( v_i^*(z)-v_k^*(z)=(1-\alpha_k)z + O(1)\to+\infty \) as $z\to+\infty$, contradicting $v_i^*(z)<v_k^*(z)$. Thus, by continuity, there exists $\hat z<\infty$ such that for all $z>\hat z$ no strict pure VE exists. Because VE exist in finite trees (Lemma~\ref{lem:existenceVE}), at least one mixed VE exists. By Thm.~\ref{th:correspondence}, for $\beta\ge\hat\beta$ any SVE lies arbitrarily close to some mixed VE.

\noindent\emph{\textbf{Step 2 - Local Asymptotic Stability of any SVE ($\ \forall\ z>\hat z$):}}
Differentiating $f_s$ yields
\begin{align}
J_{ss}(\mathbf v)&=\beta\,\frac{\sum_{\omega\in\Omega_s}p(\omega)\,\sigma^s_\omega(1-\sigma^s_\omega)\,\big(\pi_s(\omega)-g_s(\mathbf v)\big)}
                         {\sum_{\omega\in\Omega_s}p(\omega)\,\sigma^s_\omega}\;-\;1,
\label{eq:Jss}\\
J_{sk}(\mathbf v)&=\beta\,\frac{\sum_{\omega\in\Omega_s}p(\omega)\,\sigma^s_\omega\,\sigma^k_\omega\,\big(g_s(\mathbf v)-\pi_s(\omega)\big)}
                         {\sum_{\omega\in\Omega_s}p(\omega)\,\sigma^s_\omega},\qquad k\neq s.\label{eq:Jsk}
\end{align}
$\forall\ s\in\mathcal{S}$, at  $\omega=\{s\}$ one has $\sigma^s_\omega\equiv1$ and $\sigma^k_\omega\equiv0$ and $1-\sigma^s_\omega\equiv0$, so singleton menus contribute $0$ to all $J_{ss}$ and $J_{sk}$ with $k\neq s$. By Asm.~\ref{as:singleton-dominance} with $z>\hat z$, for every $s\in\mathcal S$ and every non-singleton menu $\omega\in\Omega_s$ s.t.\ $|\omega|\ge2$ we have \(g_s(\mathbf v) > \pi_s(\omega)\ \text{for all }\mathbf v\in K.\) To see this, fix $s\in\mathcal S$ and write $P_s:=\sum_{\omega\in\Omega_s}p(\omega)$. Since $\sigma^s_{\{s\}}(\mathbf v)\equiv 1$ and $\sigma^s_\omega(\mathbf v)\le 1$ for all $\omega$, we have
\(
\sum_{\omega\in\Omega_s}p(\omega)\sigma^s_\omega(\mathbf v)\le\sum_{\omega\in\Omega_s}p(\omega)=P_s.
\)
Hence the conditional weight placed on the singleton menu $\{s\}$ inside $g_s(\mathbf v)$ satisfies
\(
\dfrac{p(\{s\})\sigma^s_{\{s\}}(\mathbf v)}{\sum_{\omega\in\Omega_s}p(\omega)\sigma^s_\omega(\mathbf v)}
\ge
\dfrac{p(\{s\})}{P_s},
\ \forall\,\mathbf v\in K.
\)
As the residual conditional mass is supported on non-singleton menus, we obtain the uniform bound
\[
g_s(\mathbf v)\;\ge\;\frac{p(\{s\})}{P_s}\big(\pi_s(\{s\})+z\big)
+\Big(1-\frac{p(\{s\})}{P_s}\Big)\min_{\omega\in\Omega_s:\,|\omega|\ge2}\pi_s(\omega),
\qquad \forall\,\mathbf v\in K.
\]
By Assumption~\ref{as:singleton-dominance}, the right-hand side exceeds $\max_{\omega\in\Omega_s:\,|\omega|\ge2}\pi_s(\omega)$, and therefore
\(
g_s(\mathbf v)>\pi_s(\omega)
\)
for all $\mathbf v\in K$ and all $\omega\in\Omega_s$ with $|\omega|\ge2$. Hence, for any pair $\{s,k\}\in E(G)$ there exists a non-singleton menu $\omega\in\mathrm{supp}(p)$ with $\{s,k\}\subseteq\omega$ and $|\omega|\ge 2$, and for that menu \( p(\omega)\,\sigma^s_\omega(\mathbf v)\,\sigma^k_\omega(\mathbf v)\,\big(g_s(\mathbf v)-\pi_s(\omega)\big)>0 \ \text{for all }\mathbf v\in K, \) so that the corresponding Jacobian entry satisfies \(J_{sk}(\mathbf v)\;>\;0\ \text{whenever }\{s,k\}\in E(G),\ s\neq k,\) whereas $J_{sk}(\mathbf v)\equiv 0$ whenever $\{s,k\}\notin E(G)$ (since $s$ and $k$ never co-occur in a menu). In particular, for every $\mathbf v\in K$ and $\beta<\infty$, the Jacobian $J(\mathbf v)$ is Metzler, \(J_{sk}(\mathbf v)\;\ge\;0\ \text{for all }s\neq k.\) Summing \eqref{eq:Jsk} over $k\neq s$,
\[
\underbrace{\sum_{k\neq s}|J_{sk}(\mathbf v)|}_{\mathcal R_s(\mathbf v)}
=\beta\,\frac{\sum_{\omega\in\Omega_s}p(\omega)\,\sigma^s_\omega(1-\sigma^s_\omega)\,\big(g_s(\mathbf v)-\pi_s(\omega)\big)}
              {\sum_{\omega\in\Omega_s}p(\omega)\,\sigma^s_\omega}
= -J_{ss}(\mathbf v)-1.
\]
Thus for each row $s$, $J_{ss}(\mathbf v)=-\mathcal R_s(\mathbf v)-1<-1$, and the $s$-th Gershgorin disc\footnote{Let $\mathcal{D}_s(J_{ss},\mathcal{R}_{s}) \subset \mathbb{C}$ be a closed disc in the complex plane centered at $J_{ss}$ with radius $\mathcal{R}_{s}$. We refer to such a disc as a Gershgorin disc. Across the $n$ rows of the Jacobian matrix, we define $n$ such discs. By the Gershgorin Circle theorem, every eigenvalue of $J$ lies within at least one of the Gershgorin discs. Equivalently, all eigenvalues of $J$ lie within the union of the $n$ Gershgorin discs.} is the closed disc centered at $J_{ss}$ with radius $\mathcal R_s$, whose rightmost point is $J_{ss}+\mathcal R_s=-1$. By the Gershgorin circle theorem, every eigenvalue $\lambda$ of $J(\mathbf v)$ satisfies $\Re\lambda\le -1<0$. In particular, $\forall\ \beta\ge0$, at any SVE $\mathbf v^*$ the Jacobian is a Hurwitz (stability) matrix, so the SVE is hyperbolic and locally exponentially (asymptotically) stable by the Hartman-Grobman theorem.

\noindent\emph{\textbf{Step 3 - Uniqueness of SVE:}}
Consider $h(\mathbf v):=\mathbf v-g(\mathbf v)$ on $K$. We have $h(\mathbf v)\ne 0$ on the boundary\footnote{For $\beta<\infty$, any SVE is interior. Additionally, for $z>\hat z$, any limiting SVE as $\beta\uparrow\infty$ features indifference at least at the bottom (no unique strictly dominated class) - ensuring that $\mathbf{v^*}$ is interior even in the limit.} $\partial K$ and $h(\mathbf v)\cdot n(\mathbf v)>0$ for the outward unit normal $n(\mathbf v)$, since $g(K)\subset \operatorname{int}K$; hence $h$ points strictly outward\footnote{Because \(K\) is convex and \(g(K)\subset\operatorname{int}K\), for any \(\mathbf{v}\in\partial K\) we have \(g(\mathbf{v})\in\operatorname{int}K\) and \(g(\mathbf{v})\neq \mathbf{v}\). Hence the vector \(\mathbf{v}-g(\mathbf{v})\) points strictly outward at \(\mathbf{v}\): with \( n(\mathbf{v})\) the outward unit normal, \(h(\mathbf{v}) \cdot n(\mathbf{v})
=(\mathbf{v}-g(\mathbf{v}))\cdot n(\mathbf{v}) > 0.\) Equivalently, \(F(\mathbf{v})=-h(\mathbf{v})\) points strictly inward on \(\partial K\). It follows that \(K\) is positively invariant for the CQL flow \(\dot{\mathbf v}= F(\mathbf v)\): trajectories starting in \(K\) cannot exit \(K\).} on $\partial K$. At any zero $\mathbf v^*$ of $h$, $Dh(\mathbf v^*)=I-Dg(\mathbf v^*)=-J(\mathbf v^*)$ has eigenvalues with strictly positive real parts, so each $\mathbf v^*$ is an isolated non-degenerate zero\footnote{Since each zero is isolated (by the inverse function theorem), there can only be countably many isolated zeroes in \(\operatorname{int} K \). In fact, since \( K \) is a compact set in \(\mathbb{R}^n\), by the Heine-Borel theorem, every open cover of \( K \) has a finite sub-cover. Consequently, there can only be finitely many isolated zeroes of \( F\mathbf{(v)} \).} with index $+1$. The Poincaré–Hopf index theorem\footnote{Poincaré-Hopf index theorem \citep{Milnor1965}: Let $M$ be a compact differentiable manifold. Let $\mathbf{h}$ be a vector field on $M$ with isolated zeroes. If $M$ has a boundary, then we insist that  $\mathbf{h}$ be pointing in the outward normal direction along the boundary. Then we have the formula: \( \sum_{i} \operatorname{index}_{x_{i}}(\mathbf{h}) = \chi(M), \) where the sum is over all isolated zeroes of the vector field \( \mathbf{h} \), \(\chi(M)\) is the Euler characteristic of \( M \). \(\chi(K) = 1\) since the convex set $K$ has trivial fundamental group - it is contractible and homotopic to a point.} on the compact convex manifold $K$ (Euler characteristic $\chi(K)=1$) yields that the algebraic sum of indices of all isolated zeros equals $1$. Therefore there is exactly one zero, i.e.\ exactly one SVE in $K$ for every $\beta\ge0$. Moreover, by Thm.~\ref{th:correspondence}, for $\beta\ge\hat\beta$ this unique SVE lies in the neighborhood of a mixed VE.

\noindent\emph{\textbf{Step 4 - Global Asymptotic Stability:}}
From Step 2, we know for every $\mathbf v\in K$ the Jacobian $J(\mathbf v)$ is Metzler, \(J_{sk}(\mathbf v)\ \ge 0\ \text{for all }s\neq k,\) and its off-diagonal sign pattern coincides with the adjacency structure of the co-occurrence graph $G$. Because $G$ is connected and every undirected edge $\{s,k\}\in E(G)$ generates strictly positive entries in both directions, $J_{sk}(\mathbf v)>0$ and $J_{ks}(\mathbf v)>0$, the directed graph of positive off-diagonals of $J(\mathbf v)$ is strongly connected. Thus $J(\mathbf v)$ is an irreducible Metzler matrix for every $\mathbf v\in K$. It follows that the ODE $\dot{\mathbf v}=F(\mathbf v)$ is \emph{cooperative} on $K$ and, by Thm.~4.1.1 in \citet{smith1995monotone}, the associated semi-flow on $K$ is \emph{strongly monotone} (equivalently, strongly order preserving\footnote{A semi-flow \(\phi(t, \mathbf{x})\) is a mapping from \(\mathbb{R}_+ \times \mathbb{R}^n\) to \(\mathbb{R}^n\) describing the evolution of the system state \(\mathbf{x}\) over time \(t\). A semi-flow \(\phi(t, \mathbf{x})\) is order preserving if for any two initial conditions \(\mathbf{x}, \mathbf{y} \in \mathbb{R}^n\) with \(\mathbf{x} \leq \mathbf{y}\), it holds that \(\phi(t, \mathbf{x}) \leq \phi(t, \mathbf{y})\) for all \(t \geq 0\). A semi-flow \(\phi(t, \mathbf{x})\) is strongly order preserving if it is order preserving and, additionally, for any \(\mathbf{x} < \mathbf{y}\), \(\phi(t, \mathbf{x}) \ll \phi(t, \mathbf{y})\) for \(t > 0\), where \(\ll\) denotes the strong ordering, i.e.\, each component of \(\phi(t, \mathbf{x})\) is strictly less than the corresponding component of \(\phi(t, \mathbf{y})\).} in the sense of Prop.~1.1.1 in \citealp{smith1995monotone}). By Step~3, the SVE $\mathbf v^*$ is the unique equilibrium of the flow in the compact, convex, positively invariant set $K$ and lies in $\operatorname{int}K$. Strong monotonicity on a compact order interval with a unique equilibrium implies global convergence. By Thm.~2.3.1 in \citet{smith1995monotone}, every trajectory starting in $K$ converges to $\mathbf v^*$. Hence the unique SVE corresponding to a mixed VE in the high-sensitivity limit is a global attractor of the mean-field dynamics for every $\beta\in\mathbb R_+$. Finally, by \citet[Theorem~5.7]{Benaim1999}, when the mean-field  dynamic~\eqref{eq:differential} admits a unique globally asymptotically stable equilibrium, the discrete-time stochastic recursion in~\eqref{eq:update} converges to this equilibrium almost surely.
\end{proof}

The remaining proofs are provided in the Supplemental Appendix that appears below. 

Additional robustness checks can be found in the \href{https://drive.google.com/file/d/1SfR7HiB3HyIAYz2R3dD38JKgJi70GFKv/view?usp=share_link}{Online Appendix}.

\newpage

\section{Supplemental Appendix: Omitted Proofs}

\vspace{-0.1in}
\subsection*{Useful definitions}

For $\beta\ge 0$ and $\omega\in\Omega$, the logit policy is
\(
\sigma^s_\omega(\mathbf v;\beta)
:=\mathbf 1\{s\in\omega\}\,
\dfrac{\exp(\beta v_s)}{\sum_{j\in\omega}\exp(\beta v_j)}\in\Delta(\omega).
\)
For $s\in\mathcal S$, let $\Omega_s:=\{\omega\in\Omega: s\in\omega\}$ and assume $\Omega_s\cap\mathrm{supp}(p)\neq\varnothing$ (otherwise $s$ can be removed from $\mathcal S$).
Define the map $g(\cdot;\beta):\mathbb R^{\mathcal S}\to\mathbb R^{\mathcal S}$ by
\(
g_s(\mathbf v;\beta)
:=
\dfrac{\sum_{\omega\in\Omega_s} p(\omega)\,\sigma^s_\omega(\mathbf v;\beta)\,\pi_s(\omega)}
     {\sum_{\omega\in\Omega_s} p(\omega)\,\sigma^s_\omega(\mathbf v;\beta)} ,
\ s\in\mathcal S.
\)
For each $s$, set
\(
m_s:=\min_{\omega\in\Omega_s\cap\mathrm{supp}(p)}\pi_s(\omega),
\
M_s:=\max_{\omega\in\Omega_s\cap\mathrm{supp}(p)}\pi_s(\omega),
\
K:=\prod_{s\in\mathcal S}[m_s,M_s].
\) Thus, $K$ is a nonempty, compact and convex subset of $\mathbb R^{\mathcal S}$.

Define the SVE set
\(
\mathcal V(\beta):=\{\mathbf v\in K: \mathbf v=g(\mathbf v;\beta)\}.
\)
$\mathcal V(\infty)$ denotes the set of accumulation points of the correspondence $\beta\mapsto\mathcal V(\beta)$:
\(
\mathcal V(\infty)
:=\{\mathbf v\in K: \exists\,\beta_n\uparrow\infty,\ \exists\,\mathbf v_n\in\mathcal V(\beta_n)\text{ with }\mathbf v_n\to\mathbf v\}.
\) Since $\mathcal V(\beta)\neq\varnothing$ for every $\beta<\infty$ (proved below) and $K$ is compact, any sequence $\beta_n\uparrow\infty$ admits $\mathbf v_n\in\mathcal V(\beta_n)$ with a convergent subsequence whose limit lies in $\mathcal V(\infty)$. Thus, $\mathcal V(\infty)\neq\varnothing$. For $\mathbf v\in\mathbb R^{\mathcal S}$ and $\omega\in\Omega$, let
\(
\argmax_\omega(\mathbf v):=\arg\max_{j\in\omega} v_j.
\)
A greedy choice profile is a (menu-wise) collection of mixed actions $\sigma=(\sigma_\omega)_{\omega\in\Omega}$ with $\sigma_\omega\in\Delta(\omega)$ such that
\(
\mathrm{supp}(\sigma_\omega)\subseteq \argmax_\omega(\mathbf v),\ \forall\,\omega\in\Omega.
\)
Given such a $\sigma$, define for each $s$ the overall selection intensity
\(
D_s(\sigma):=\sum_{\omega\in\Omega_s} p(\omega)\,\sigma_\omega(s),
\)
and, when $D_s(\sigma)>0$, the induced conditional menu weights
\(
w^s_\omega(\sigma):=\dfrac{p(\omega)\,\sigma_\omega(s)}{D_s(\sigma)}\in\Delta(\Omega_s).
\)
We say that $(\sigma,\mathbf v)$ is a \textit{valuation equilibrium (VE)} if $\mathrm{supp}(\sigma_\omega)\subseteq\argmax_\omega(\mathbf v)$ for all $\omega$ and, for every $s$ with $D_s(\sigma)>0$,
\[
v_s=\sum_{\omega\in\Omega_s} w^s_\omega(\sigma)\,\pi_s(\omega)
=
\frac{\sum_{\omega\in\Omega_s} p(\omega)\,\sigma_\omega(s)\,\pi_s(\omega)}
     {\sum_{\omega\in\Omega_s} p(\omega)\,\sigma_\omega(s)}.
\]
If $D_s(\sigma)=0$, VE imposes no restriction on $v_s$.  Let $\mathcal{VE}$ denote the set of valuation vectors $\mathbf v$ for which there exists a $\sigma$ such that $(\sigma,\mathbf v)$ is a VE. For the closest-loss refinement of VE, define the loss gap and minimum loss gap
\(
\Delta_s(\omega;\mathbf v):=\max_{j\in\omega}v_j-v_s,\ \text{and}\
\Delta_s^{\min}(\mathbf v):=\min_{\omega\in\Omega_s\cap\mathrm{supp}(p)}\Delta_s(\omega;\mathbf v),
\)
and the minimizing set
\(
\Omega_s^\star(\mathbf v):=\{\omega\in\Omega_s\cap\mathrm{supp}(p):\Delta_s(\omega;\mathbf v)=\Delta_s^{\min}(\mathbf v)\}.
\)
Define the set-valued closest-loss correspondence
\(g_\infty:K\rightrightarrows K\) by
\(
g_\infty(\mathbf v)
:=
\prod_{s\in\mathcal S}
\operatorname{co}
\{\pi_s(\omega): \omega\in\Omega_s^\star(\mathbf v)\},
\)
and denote its fixed-point set by
\(
\widehat{\mathcal V}(\infty):=\{\mathbf v\in K: \mathbf v\in g_\infty(\mathbf v)\}.
\)
We will show that $\mathcal V(\infty)\subseteq \widehat{\mathcal V}(\infty)$ in general; the inclusion can be strict because $g_\infty$ allows arbitrary convex combinations over $\Omega_s^\star(\mathbf v)$, whereas actual limits of the endogenous weights are constrained by equilibrium (menu-wise) consistency.

Fix a reduced decision tree structure \((\mathcal S,\Omega)\) and a support set
\(
\Omega^+ \subseteq \Omega,
\)
which we interpret as the fixed support of the menu distribution \(p\). Let
\(
\Delta(\Omega^+)
:=
\{
p\in \mathbb R^{\Omega} :
p(\omega)>0 \ \forall \omega\in\Omega^+,\;
p(\omega)=0 \ \forall \omega\notin\Omega^+,\;
\sum_{\omega\in\Omega} p(\omega)=1
\}
\)
denote the corresponding relative interior of the face of the simplex of menu distributions with support \(\Omega^+\), and let
\(
\pi^+ := \{\pi_s(\omega): \omega\in\Omega^+,\ s\in\omega\}
\)
denote the expected payoff array on supported class menus, and define
\(
\Pi^+
:=
\mathbb R^{\{(s,\omega):\,\omega\in\Omega^+,\ s\in\omega\}},
\)
with 
\(
d:=\sum_{\omega\in\Omega^+}|\omega|
=
\sum_{s\in\mathcal S}|\Omega_s\cap\Omega^+|.
\)
We say that a property \(Q(p,\pi)\) holds for \emph{generic decision trees} if there exists a residual (comeager) subset of \(\Delta(\Omega^+)\times\Pi^+\), also of full Lebesgue measure in its natural affine dimension, such that \(Q(p,\pi)\) holds for all \((p,\pi)\) in that subset. Thus, throughout, whenever we say ``for generic decision trees'', we mean for all \((p,\pi)\) in some fixed residual, full-measure subset of \(\Delta(\Omega^+)\times\Pi^+\). More generally, the admissible set of menu distributions can be partitioned into finitely many support strata, and the same argument applies on each stratum. Since there are only finitely many support patterns, one can combine the conclusions across strata.

\vspace{-0.2in}
\subsection{Theorem \ref{th:correspondence}}
\label{sec:prooflemma2}

\begin{theorem*}
\label{lem:SVE-VE}

\noindent(a) For every $\beta<\infty$, $\mathcal V(\beta)$ is nonempty and compact, and the correspondence $\beta\mapsto\mathcal V(\beta)$ is upper hemicontinuous and compact-valued on $[0,\infty)$.

\noindent(b) If $\beta_n\uparrow\infty$ and $\mathbf v_n\in\mathcal V(\beta_n)$ with $\mathbf v_n\to\mathbf v^*$, then $\mathbf v^*\in\mathcal{VE}$ and $\mathbf v^*\in\widehat{\mathcal V}(\infty)$, implying $\mathcal V(\infty)\subseteq\mathcal{VE} \cap \widehat{\mathcal V}(\infty)$ and $\mathcal V(\infty)$ is compact. For every $\varepsilon>0$ there exists $\widehat\beta<\infty$ such that
\[
\mathcal V(\beta)\subseteq \bigcup_{\mathbf u\in\mathcal V(\infty)} B(\mathbf u,\varepsilon)
\qquad\forall\,\beta\ge \widehat\beta.
\]
\noindent(c) Let \(h(\mathbf v,\beta):=\mathbf v-g(\mathbf v;\beta)\). For generic decision trees and a.e.\ \(\beta\in[0,\infty)\), every SVE is regular: \( \det\!\big(I-D_{\mathbf v}g(\mathbf v;\beta)\big)\neq 0, \ \forall\,\mathbf v\in\mathcal V(\beta). \) Hence, for a.e.\ \(\beta\), the set \(\mathcal V(\beta)\) is finite; each SVE is isolated and lies on a locally unique real-analytic branch in \(\beta\).

\noindent(d) Assume for each \(s\in\mathcal S\) there exist distinct \(\omega,\omega'\in\Omega_s \cap \mathrm{supp}(p)\) with \(\pi_s(\omega)\neq \pi_s(\omega')\).
Let \(h(\mathbf v;\beta):=\mathbf v-g(\mathbf v;\beta)\). Then, for every \(\beta\ge 0\),
\( g(K)\subset \operatorname{int}K \) and \(\deg\!\big(h(\cdot;\beta),K,0\big)=1.\) Hence, the sum of the local indices of all SVE in \(K\) equals \(+1\); in particular, whenever all SVE are regular, their number is odd. Moreover, for any compact path-connected parameter interval \(\Theta_0\) such that \(\theta\mapsto g(\cdot;\theta)\) is continuous, the SVE graph \( \mathcal G_{\Theta_0}:=\{(\mathbf v,\theta)\in K\times\Theta_0: \mathbf v=g(\mathbf v;\theta)\} \) contains a connected component $\Gamma_{\mathrm{ess}}$ whose projection onto \(\Theta_0\) is surjective.

For a fixed decision tree and \(\theta=\beta\in[0,\infty)\), the global SVE graph \( \mathcal G:=\{(\mathbf v,\beta)\in K\times[0,\infty): \mathbf v=g(\mathbf v;\beta)\} \) is definable in \(\mathbb R_{\exp}\) and therefore has finitely many definable path-connected components. At \(\beta=0\) there is a unique SVE \(\mathbf v_0\); the connected component 
\(\Gamma_{\mathrm{prin}}\subset\mathcal G\) containing \((\mathbf v_0,0)\) is the unique component whose projection onto \([0,\infty)\) is surjective. For generic decision trees, the graph
\(
\mathcal G
\)
is a one-dimensional definable real-analytic embedded submanifold of
\(K\times[0,\infty)\). The projection
\(
\xi:\mathcal G\to[0,\infty),
\
(\mathbf v,\beta)\mapsto\beta,
\)
is definable and real analytic. Its set of critical values is finite. Hence,
over any bounded interval \([\underline\beta,\bar\beta]\subset(0,\infty)\),
each connected component of \(\mathcal G\) is a finite union of
real-analytic graphs over sub-intervals of \(\beta\), together with finitely
many critical fibers separating these graphs. Moreover, the principal component \(\Gamma_{\mathrm{prin}}\) is a single real-analytic graph in a neighborhood of \(\beta=0\), and its projection onto \([0,\infty)\) is surjective. It is the unique connected component of
\(\mathcal G\) with this property.

\noindent(e) For generic decision trees, \(\mathcal V(\infty) = \{ \mathbf v\in K:\ \exists\,\beta_n\uparrow\infty,\ \exists\,\mathbf v_n\in\mathcal V(\beta_n),\ \mathbf v_n\to\mathbf v\}\) is a nonempty finite subset of \(\mathcal{VE}\). More generally, let \(\mathcal C\subset\mathcal G\) be any connected component whose projection is unbounded above.
Then there exist \(B<\infty\) and a definable real-analytic map \( \varphi:(B,\infty)\to K \)
such that \( \Gamma:=\{(\varphi(\beta),\beta): \beta>B\}\subset\mathcal C,\ \xi(\Gamma)=(B,\infty), \) and the limit \(\lim_{\beta\to\infty}\varphi(\beta)=:\mathbf v^*\) exists and is unique, with \(\mathbf v^*\in\mathcal V(\infty)\subseteq\mathcal{VE}\cap\widehat{\mathcal V}(\infty)\). In particular, the principal branch \(\Gamma_{\mathrm{prin}}\) converges to a unique valuation equilibrium
\(\mathbf v^*\in\mathcal V(\infty)\) as \(\beta\to\infty\).
\end{theorem*}
\begin{proof}
\emph{\textbf{(a) Non-emptiness, compactness and upper hemicontinuity:}} Fix $s\in\mathcal S$ and pick $\omega'\in\Omega_s\cap\mathrm{supp}(p)$.
For $\beta<\infty$, $\sigma^s_{\omega'}(\mathbf v;\beta)>0$ for all $\mathbf v$, hence
\(
\sum_{\omega\in\Omega_s} p(\omega)\,\sigma^s_{\omega}(\mathbf v;\beta) \ge
p(\omega')\,\sigma^s_{\omega'}(\mathbf v;\beta) > 0,
\)
so $g_s(\cdot;\beta)$ is well-defined and continuous (indeed real-analytic in $(\mathbf v,\beta)$).
Moreover, for any $\mathbf v$, the weights
\(
w^s_{\omega}(\mathbf v;\beta)
:=\dfrac{p(\omega)\,\sigma^s_\omega(\mathbf v;\beta)}
        {\sum_{\omega'\in\Omega_s}p(\omega')\,\sigma^s_{\omega'}(\mathbf v;\beta)}
\)
form a probability vector on $\Omega_s\cap\mathrm{supp}(p)$, and
\(
g_s(\mathbf v;\beta)=\sum_{\omega\in\Omega_s} w^s_{\omega}(\mathbf v;\beta)\,\pi_s(\omega)
\in[m_s,M_s].
\)
Thus $g(\cdot;\beta)$ maps $K$ into itself.
Since $K$ is nonempty, compact, and convex and $g(\cdot;\beta):K\to K$ is continuous, Brouwer's theorem yields a fixed point $\mathbf v^*\in K$ with $\mathbf v^*=g(\mathbf v^*;\beta)$, so $\mathcal V(\beta)\neq\varnothing$.
Because $\mathcal V(\beta)$ is the zero set of the continuous map $\mathbf v\mapsto \mathbf v-g(\mathbf v;\beta)$ on compact $K$, it is compact. Take $\beta_n\to\beta$ and $\mathbf v_n\to\mathbf v$ with $\mathbf v_n\in\mathcal V(\beta_n)$.
Continuity of $g$ in $(\mathbf v,\beta)$ implies
\(
\mathbf v=\lim_n \mathbf v_n=\lim_n g(\mathbf v_n;\beta_n)=g(\mathbf v;\beta),
\)
hence $\mathbf v\in\mathcal V(\beta)$ and the graph of $\beta\mapsto\mathcal V(\beta)$ is closed.
Since values are compact, the correspondence is upper hemicontinuous and compact-valued on $[0,\infty)$.

\emph{\textbf{(b) High-sensitivity limit ($\mathcal V(\infty)\subseteq\mathcal{VE} \cap \widehat{\mathcal V}(\infty)$):}}
Fix $\beta_n\uparrow\infty$ and $\mathbf v_n\in\mathcal V(\beta_n)$ with $\mathbf v_n\to\mathbf v^*\in K$.
For each $\omega\in\Omega$, define the menu-wise logit distribution
\(
\sigma^n_\omega(\cdot):=\sigma(\cdot\mid\omega;\mathbf v_n,\beta_n)\in\Delta(\omega).
\)
Since $\Omega$ is finite and each $\Delta(\omega)$ is compact, by a diagonal subsequence argument we may assume
\(
\sigma^n_\omega\to\sigma^*_\omega
\)
for every $\omega\in\Omega$, for some $\sigma^*_\omega\in\Delta(\omega)$.

\smallskip
\noindent\emph{Step 1 ($\vvec^*\in\mathcal{VE}$):}
We claim that $\mathrm{supp}(\sigma^*_\omega)\subseteq \argmax_\omega(\mathbf v^*)$ for every $\omega$.
Fix $\omega$ and $s\in\omega$ such that $v^*_s<\max_{j\in\omega} v^*_j$.
Let $\delta:=\max_{j\in\omega} v^*_j-v^*_s>0$.
For $n$ large, $\max_{j\in\omega} v_{n,j}-v_{n,s}\ge \delta/2$, hence
\(
\sigma^n_\omega(s)
=\dfrac{\exp(\beta_n v_{n,s})}{\sum_{j\in\omega}\exp(\beta_n v_{n,j})}
\le \exp\!\big(-\beta_n \delta/2\big) \to 0.
\)
Therefore $\sigma^*_\omega(s)=0$, proving the claim. Fix $s\in\mathcal S$ and define
\(
D_s^n:=\sum_{\omega\in\Omega_s} p(\omega)\,\sigma^n_\omega(s),\ \text{and}\
N_s^n:=\sum_{\omega\in\Omega_s} p(\omega)\,\sigma^n_\omega(s)\,\pi_s(\omega).
\)
By definition of $g$ and the fixed-point identity $\mathbf v_n=g(\mathbf v_n;\beta_n)$,
\(
v_{n,s}=N_s^n/D_s^n,\ \forall n.
\)
By convergence of $\sigma^n_\omega(s)$ for each $\omega$ and finiteness of $\Omega_s$, we have
\(
D_s^n\to D_s^*:=\sum_{\omega\in\Omega_s} p(\omega)\,\sigma^*_\omega(s)
\)
and
\(
N_s^n\to N_s^*:=\sum_{\omega\in\Omega_s} p(\omega)\,\sigma^*_\omega(s)\,\pi_s(\omega).
\)
If $D_s^*>0$, then $D_s^n>0$ eventually and taking limits in $v_{n,s}=N_s^n/D_s^n$ yields
\(
v^*_s=\dfrac{N_s^*}{D_s^*}
=
\dfrac{\sum_{\omega\in\Omega_s} p(\omega)\,\sigma^*_\omega(s)\,\pi_s(\omega)}
     {\sum_{\omega\in\Omega_s} p(\omega)\,\sigma^*_\omega(s)}.
\)
Therefore, $(\sigma^*,\mathbf v^*)$ satisfies the VE conditions: menu-wise optimality and valuation consistency for all $s$ with $D_s(\sigma^*)>0$.
Hence $\mathbf v^*\in\mathcal{VE}$.

\noindent\emph{Step 2 ($\vvec^*\in\widehat{\mathcal V}(\infty)$):}
Fix $s\in\mathcal S$. For each $n$ and $\omega\in\Omega_s\cap\mathrm{supp}(p)$, write the conditional menu weights using
\(
v_{n,s}=g_s(\mathbf v_n;\beta_n)=\sum_{\omega\in\Omega_s\cap\mathrm{supp}(p)} w^{s,n}_\omega\,\pi_s(\omega),\) as
\[
w^{s,n}_\omega:=\frac{p(\omega)\,\sigma^s_\omega(\mathbf v_n;\beta_n)}
{\sum_{\omega'\in\Omega_s\cap\mathrm{supp}(p)}p(\omega')\,\sigma^s_{\omega'}(\mathbf v_n;\beta_n)}
\in\Delta(\Omega_s\cap\mathrm{supp}(p)).
\]
Let $\Delta_s(\omega;\mathbf v):=\max_{j\in\omega} v_j-v_s$ and $\Delta_s^{\min}(\mathbf v):=\min_{\omega\in\Omega_s\cap\mathrm{supp}(p)}\Delta_s(\omega;\mathbf v)$,
and define the minimizing set
\(
\Omega_s^\star(\mathbf v):=\{\omega\in\Omega_s\cap\mathrm{supp}(p): \Delta_s(\omega;\mathbf v)=\Delta_s^{\min}(\mathbf v)\}.
\)
Using the logit formula,
\[
\sigma^s_\omega(\mathbf v;\beta)
=\frac{\exp\{\beta v_s\}}{\sum_{j\in\omega}\exp\{\beta v_j\}}
=\exp\{-\beta\Delta_s(\omega;\mathbf v)\}\cdot
\Big(\sum_{j\in\omega}\exp\{-\beta(\max_{i\in\omega}v_i-v_j)\}\Big)^{-1},
\]
and since $1\le \sum_{j\in\omega}\exp\{-\beta(\max_{i\in\omega}v_i-v_j)\}\le |\omega|$,
the parenthesized factor lies in $[|\omega|^{-1},1]$ for all $\beta$ and all $\mathbf v\in K$. Hence
\(
\sigma^s_\omega(\mathbf v_n;\beta_n)
=
\exp\{-\beta_n\Delta_s(\omega;\mathbf v_n)\}\cdot \Theta^{(n)}_{s,\omega}\) with 
\(0<\underline c\le \Theta^{(n)}_{s,\omega}\le \overline c<\infty,\)
for some constants $\underline c,\overline c$ independent of $n$. Factoring out
$\exp\{-\beta_n\Delta_s^{\min}(\mathbf v_n)\}$ from the denominator gives the Laplace factorization of $D_s(\mathbf v_n;\beta_n)$ into
\[
\sum_{\omega\in\Omega_s\cap\mathrm{supp}(p)} p(\omega)\,\sigma^s_\omega(\mathbf v_n;\beta_n)
=
\exp\{-\beta_n\Delta_s^{\min}(\mathbf v_n)\}
\sum_{\omega\in\Omega_s\cap\mathrm{supp}(p)} p(\omega)\,
e^{-\beta_n(\Delta_s(\omega;\mathbf v_n)-\Delta_s^{\min}(\mathbf v_n))}\Theta^{(n)}_{s,\omega}.
\]
Since $\Omega$ is finite, either $\Omega_s^\star(\mathbf v^*)=\Omega_s\cap\mathrm{supp}(p)$, or else
\(
\delta:=\min_{\omega\notin\Omega_s^\star(\mathbf v^*)}
\big(\Delta_s(\omega;\mathbf v^*)-\Delta_s^{\min}(\mathbf v^*)\big)>0.
\)
By Lipschitz continuity of $\Delta_s(\omega;\cdot)$ and $\Delta_s^{\min}(\cdot)$, for all sufficiently large $n$ we have
\(
\Delta_s(\omega;\mathbf v_n)-\Delta_s^{\min}(\mathbf v_n)\ge \delta/2,
\ \forall\,\omega\notin\Omega_s^\star(\mathbf v^*),
\) and \(
\Delta_s(\omega;\mathbf v_n)-\Delta_s^{\min}(\mathbf v_n)\le \delta/4,
\ \forall\,\omega\in\Omega_s^\star(\mathbf v^*).
\)
Since $\Omega_s^\star(\mathbf v^*)\neq\varnothing$ and $\Theta^{(n)}_{s,\omega}\ge \underline c>0$, the contribution of minimizing menus to the bracketed sum is bounded below uniformly for all large $n$, whereas the contribution of non-minimizers is $O(e^{-\beta_n\delta/2})$. Hence, the total weight assigned to non-minimizers is exponentially small: \(\sum_{\omega\notin\Omega_s^\star(\mathbf v^*)} w^{s,n}_\omega =O\bigl(e^{-\beta_n\eta}\bigr)
\ \text{for some }\eta>0.\) Thus, any cluster point $w^{s,*}$ of the sequence $(w^{s,n})_n$ is supported on $\Omega_s^\star(\mathbf v^*)$, and along any subsequence along which $w^{s,n}\to w^{s,*}$ we obtain
\(
v_s^*=\lim_{n\to\infty} v_{n,s}
=\lim_{n\to\infty}\sum_{\omega} w^{s,n}_\omega\,\pi_s(\omega)
=\sum_{\omega\in\Omega_s^\star(\mathbf v^*)} w^{s,*}_\omega\,\pi_s(\omega)
\in \operatorname{co}\{\pi_s(\omega):\omega\in\Omega_s^\star(\mathbf v^*)\}.
\)
Since this holds for every $s$, we conclude $\mathbf v^*\in g_\infty(\mathbf v^*)$, i.e.\ $\mathbf v^*\in\widehat{\mathcal V}(\infty)$. Combining Steps 1-2, we obtain $\mathbf v^*\in\widehat{\mathcal V}(\infty)\cap\mathcal{VE}$ implying $\mathcal V (\infty) \subseteq \mathcal{VE} \cap \widehat{\mathcal{V}}(\infty).$ Finally, since \(\mathcal V(\infty)=\limsup_{\beta\to\infty}\mathcal V(\beta) \) is the Kuratowski outer limit of a family of subsets of the compact set \(K\), it is closed in \(K\), and hence compact.

\noindent\emph{Step 3:}
Fix $\varepsilon>0$. If the stated neighborhood inclusion failed, there would exist $\beta_n\uparrow\infty$ and $\mathbf v_n\in\mathcal V(\beta_n)$ such that $\operatorname{dist}(\mathbf v_n,\mathcal V(\infty))\ge\varepsilon$ for all $n$. By compactness of $K$, a subsequence converges to some $\mathbf v^*\in K$. Thus, $\mathbf v^*\in\mathcal V(\infty)$, contradicting $\operatorname{dist}(\mathbf v_n,\mathcal V(\infty))\ge\varepsilon$ for large $n$.

\emph{\textbf{(c) Generic regularity and finiteness for a.e. $\beta$:}}
Fix a reduced tree structure $(S,\Omega)$ and a support set
$\Omega^+\subseteq\Omega$, interpreted as the fixed support of the menu
distribution. Let
\(
\Delta(\Omega^+)
:=
\{
p\in\mathbb R^\Omega:
p(\omega)>0\ \forall\ \omega\in\Omega^+,\
p(\omega)=0\ \forall\ \omega\notin\Omega^+,\
\sum_{\omega\in\Omega}p(\omega)=1
\},
\)
and let
\(
\Pi^+ := \mathbb R^{\{(s,\omega):\omega\in\Omega_s\cap\Omega^+\}}
\)
denote the space of supported class-menu payoff arrays. Define
\(
\widetilde F:
U\times(0,\infty)\times\Delta(\Omega^+)\times\Pi^+
\to \mathbb R^S,
\
\widetilde F(\mathbf v,\beta,p,\pi)
:= \mathbf v-g(\mathbf v;\beta,p,\pi),
\)
where \(U\subset\mathbb R^S\) is an open neighborhood of \(K\).
Since all denominators
\(
D_s(\mathbf v;\beta,p)
=
\sum_{\omega\in\Omega_s\cap\Omega^+}
p(\omega)\sigma^s_\omega(\mathbf v;\beta)
\)
are strictly positive for every \(\beta<\infty\), every
\(p\in\Delta(\Omega^+)\), and every \(s\in S\), the map
\(\widetilde F\) is \(C^\infty\). For each \(s\in S\),
\(
g_s(\mathbf v;\beta,p,\pi)
=
\sum_{\omega\in\Omega_s\cap\Omega^+}
w^s_\omega(\mathbf v,\beta,p)\pi_s(\omega), \) with \(
w^s_\omega(\mathbf v,\beta,p)
:=
\dfrac{p(\omega)\sigma^s_\omega(\mathbf v;\beta)}
{\sum_{\omega'\in\Omega_s\cap\Omega^+}
p(\omega')\sigma^s_{\omega'}(\mathbf v;\beta)}
>0.
\)
Hence,
\(
\dfrac{\partial \widetilde F_s}
{\partial \pi_{s'}(\omega)}
=
-\mathbf 1\{s=s'\}w^s_\omega(\mathbf v,\beta,p).
\)

For each coordinate \(s\), varying any one payoff entry
\(\pi_s(\omega)\), with \(\omega\in\Omega_s\cap\Omega^+\), changes only
\(\widetilde F_s\) with nonzero derivative. Thus, \(D_\pi\widetilde F(\mathbf v,\beta,p,\pi)\) is surjective at every point. Consequently \(\widetilde F\) is a submersion, and in particular
\(\widetilde F\pitchfork \{0\}\). By Thom's parametric transversality theorem, there is a residual (hence, dense) set of
parameters \((p,\pi)\in\Delta(\Omega^+)\times\Pi^+\) such that the slice map
\(
\widetilde F_{p,\pi}:U\times(0,\infty)\to\mathbb R^S,
\
(\mathbf v,\beta)\mapsto
\widetilde F(\mathbf v,\beta,p,\pi),
\)
is transverse to \(0\). Moreover, the same conclusion holds for a
full-measure set of parameters. Indeed, since \(\widetilde F\) is a
submersion, \(M:=\widetilde F^{-1}(0)\) is a smooth manifold. Consider the
projection
\(
P:M\to\Delta(\Omega^+)\times\Pi^+,\
P(\mathbf v,\beta,p,\pi)=(p,\pi).
\)
A parameter pair \((p,\pi)\) is a regular value of \(P\) if and only if
\(\widetilde F_{p,\pi}\pitchfork 0\). By Sard's theorem, the set of critical
values of \(P\) has Lebesgue measure zero. Thus, the set of parameter pairs
for which \(\widetilde F_{p,\pi}\pitchfork 0\) is both residual and of full
Lebesgue measure, after intersecting the residual set from Thom's theorem
with the full-measure set from Sard's theorem.

Fix such a \((p,\pi)\), and set
\(
Z:=\widetilde F_{p,\pi}^{-1}(0)\subset U\times(0,\infty).
\)
Then \(Z\) is a \(\mathcal C^1\) submanifold of dimension \(1\), and the projection
\(
\mathrm{pr}_\beta:Z\to(0,\infty)
\)
is \(\mathcal C^1\). By Sard's theorem, the set of critical values of \(\mathrm{pr}_\beta\) has Lebesgue measure zero. For any \(\beta\) which is a regular value of \(\mathrm{pr}_\beta\), every point \((\mathbf v,\beta)\in Z\) satisfies \(D_{\mathbf v}\widetilde F(\mathbf v,\beta,p,\pi)\) invertible, i.e.\ \(0\) is a regular value of the map
\(
\mathbf v\mapsto \widetilde F(\mathbf v,\beta,p,\pi).
\)
Hence all SVE at that \(\beta\) are isolated. Indeed, if \(D_{\mathbf v}\widetilde F(\mathbf v,\beta,p,\pi)\) were singular at some \((\mathbf v,\beta)\in Z\), then there would exist a nonzero vertical tangent vector \((u,0)\in T_{(\mathbf v,\beta)}Z\), forcing \(d(\mathrm{pr}_\beta)=0\) at \((\mathbf v,\beta)\), contrary to the regularity of \(\beta\). Finally, since \(\mathcal V(\beta)\subset K\) is compact and each SVE is isolated for a.e.\ \(\beta\), \(\mathcal V(\beta)\) is finite for a.e.\ \(\beta\).

\emph{\textbf{(d1-d2) Degree one and generic odd parity:}} For each $s$ and each $\mathbf{v}$, the weights
\(
w^s_\omega(\mathbf{v}):=\dfrac{p(\omega)\sigma^s_\omega(\mathbf{v};\beta)}{\sum_{\omega\in\Omega_s}p(\omega)\sigma^s_{\omega}(\mathbf{v};\beta)}
\)
are strictly positive on at least two menus $\omega,\omega'$ with $\pi_s(\omega)\neq\pi_s(\omega')$.
Hence $g_s(\mathbf{v})$ is a \emph{strict} convex combination of $\{\pi_s(\omega):\omega\in\Omega_s\}$, so
$m_s<g_s(\mathbf{v})<M_s$. Thus $g(K)\subset\operatorname{int}K$. By convexity of $K$, for $\mathbf{v}\in\partial K$, we have
$h(\mathbf{v})=\mathbf{v}-g(\mathbf{v})\neq 0$ and $h(\mathbf{v})\cdot n(\mathbf{v})>0$ with $n(\mathbf{v})$ the outward unit normal (by a supporting–hyperplane argument). Fix $\mathbf{c}\in\operatorname{int}K$ and consider the homotopy
$H_t(\mathbf{v}):=\mathbf{v}-\big((1-t)g(\mathbf{v})+t\mathbf{c}\big)$, $t\in[0,1]$.
Because $g(K)\subset\operatorname{int}K$ and $\mathbf{c}\in\operatorname{int}K$, $(1-t)g(\mathbf{v})+t\mathbf{c}\in\operatorname{int}K$ for all $\mathbf{v}\in\partial K$, hence $H_t(\mathbf{v})\neq 0$ on $\partial K$.
By homotopy invariance of degree,
$\deg(h,K,0)=\deg(H_0,K,0)=\deg(H_1,K,0)$. But $H_1(\mathbf{v})=\mathbf{v}-\mathbf{c}$ has a unique zero $\mathbf{v}=\mathbf{c}$ with index $+1$, so $\deg(h,K,0)=1$. For (d2), degree equals the sum of local indices of zeros of $h$ in $\operatorname{int}K$; thus the sum of indices is $+1$. For a.e. $\beta$, all zeros are non-degenerate, each has index $\pm1$, hence there are an odd number of SVE.

\emph{\textbf{(d3) Essential component:}}
Let \( \mathcal G := \{(\mathbf v,\theta)\in K\times\Theta_0: h(\mathbf v;\theta)=0\}, \) and \( \xi:\mathcal G\to\Theta_0,\ \xi(\mathbf v,\theta)=\theta\). Since \(h\) is continuous, \(\mathcal G\) is closed in \(K\times\Theta_0\). As \(K\) is compact, for every compact \(C\subset\Theta_0\) we have \(\xi^{-1}(C)=\mathcal G\cap(K\times C)\) compact; hence \(\xi\) is proper. By \textit{(d1)}, for every \(\theta\in\Theta_0\), \(\deg(h(\cdot;\theta),\mathcal K,0)=1\), where \(\mathcal K:=\operatorname{int}K\), so \(h(\cdot;\theta)\) has at least one
zero in \(\mathcal K\) and \(\xi\) is surjective. Fix \(\theta_a,\theta_b\in\Theta_0\). Since \(\Theta_0\) is path-connected, there exists a continuous path \(\gamma:[a,b]\to\Theta_0\) with \(\gamma(a)=\theta_a\) and  \(\gamma(b)=\theta_b\). Define \( F: K\times[a,b]\to\mathbb R^{|\mathcal S|},\ F(\mathbf v,t):=h(\mathbf v;\gamma(t)).\) By \textit{(d1)}, \(g(K;\theta)\subset \mathcal K\) for every \(\theta\in\Theta_0\), hence \(h(\mathbf v;\theta)=\mathbf v-g(\mathbf v;\theta)\neq 0,\ \forall\,\mathbf v\in \partial K = \partial \mathcal K,\ \forall\,\theta\in\Theta_0,\) equivalently, \( 0\notin F(\partial \mathcal K\times[a,b]). \) Thus, the Brouwer \(\deg(F(\cdot,t),\mathcal K,0)\) is well-defined and constant in \(t\in[a,b]\) (by extended homotopy invariance), and by \textit{(d1)} equals \(1\) for all \(t\). Let \(S_\gamma:=\{(\mathbf v,t)\in \mathcal K\times[a,b]: F(\mathbf v,t)=0\},\) and \( S_{\gamma,t}:=\{\mathbf v\in\mathcal K: F(\mathbf v,t)=0\}. \)

By the Leray-Schauder continuation theorem \citep[Thm.~2.1.2]{Dinca2021}, there exists a connected component $C_\gamma$ of $S_\gamma$ meeting both $S_{\gamma,a}\times\{a\}$ and $S_{\gamma,b}\times\{b\}$. Mapping $C_\gamma$ into $\mathcal G$ via $\phi(\mathbf v,t):=(\mathbf v,\gamma(t))$ yields a connected subset $\phi(C_\gamma)\subset\mathcal G$ meeting both endpoint fibers $\xi^{-1}(\theta_a)$ and $\xi^{-1}(\theta_b)$. Since $C_\gamma$ is connected and meets $\mathcal K\times\{a\}$ and $\mathcal K\times\{b\}$, its projection satisfies $\pi_t(C_\gamma)=[a,b]$, where \(\pi_t(\mathbf v,t):=t\). Noting that $\xi\circ\phi=\gamma\circ\pi_t$, we obtain $\xi(\phi(C_\gamma))=\gamma([a,b])$. Hence if $\Gamma_\gamma$ denotes the connected component of $\mathcal G$ containing $\phi(C_\gamma)$, then $\gamma([a,b])\subset\xi(\Gamma_\gamma)$. In particular, when $\Theta_0=[a,b]$ and $\gamma=\mathrm{id}$, the corresponding component is \textit{essential} with surjective projection, $\Gamma_{\mathrm{ess}}:=\Gamma_{\mathrm{id}}\subset\mathcal G$ satisfies $\xi(\Gamma_{\mathrm{ess}})=[a,b]$. Finally, at any non-degenerate \((\mathbf v^\ast,\theta^\ast)\in\mathcal G\), the implicit function theorem yields a locally unique real-analytic branch of the solution set as a graph \(\theta\mapsto \mathbf v(\theta)\) in a neighborhood of \(\theta^\ast\).

\emph{\textbf{(d4) Structure of SVE graph:}} Fix $(\mathcal S,\Omega,p,\pi)$ and consider the map $F(\mathbf v,\beta)=\mathbf v-g(\mathbf v;\beta)$ on $K\times[0,\infty)$. Since $g$ is obtained from finitely many algebraic operations and exponentials, $F$ is definable in the o-minimal structure $\mathbb R_{\exp}$, and so is its zero set \( \mathcal G:=\{(\mathbf v,\beta)\in K\times[0,\infty):F(\mathbf v,\beta)=0\}\).\footnote{\cite{LVD} show that the field of real numbers with exponentiation $\mathbb R_{\exp}$ is o-minimal. In o-minimal expansions of the real ordered field, the definable subsets of $\mathbb R^n$ share many of the nice structural properties of semi-algebraic sets. For e.g., definable subsets have only finitely many connected components, definable sets can be stratified and triangulated, and continuous definable maps are piecewise trivial.} In particular, by cell decomposition \citep{Dries1998}, $\mathcal G$ has only finitely many connected components, each of which is definable and path–connected. For generic \((p,\pi)\), part~\textup{(c)} implies that the generic fiber of the projection $\xi:\mathcal G\to[0,\infty)$ is finite; since $\xi(\mathcal G)=[0,\infty)$, the fiber-dimension theorem yields $\dim(\mathcal G)=1$. Consequently, each positive-dimensional connected component of $\mathcal G$ is a definably path-connected one-dimensional set, and admits a finite decomposition into definable $\mathcal C^1$ arcs and points.

\noindent\emph{\textbf{(d5) Principal component:}} At $\beta=0$, the choice probabilities are uniform on each menu, so $g(\cdot;0)$ is constant and there is a unique SVE $\mathbf v_0$. Moreover, $D_{\mathbf v}h(\mathbf v_0;0)=I$, so $(\mathbf v_0,0)$ is a regular point
of $\mathcal G$. By continuity, there exists $\beta^*>0$ such that $g(\cdot;\beta)$ is a strict contraction on $K$ for all $\beta\in[0,\beta^*]$. Hence, on $[0,\beta^*]$, the SVE is unique and
regular, and the component of $\mathcal G$ containing $(\mathbf v_0,0)$ is a
single real-analytic graph. Let $\Gamma_{\mathrm{prin}}$ denote the connected component of the global SVE graph \(\mathcal G:=\{(\mathbf v,\beta)\in K\times[0,\infty):\mathbf v=g(\mathbf v;\beta)\} \) that contains $(\mathbf v_0,0)$. We claim that $\xi(\Gamma_{\mathrm{prin}})=[0,\infty)$. Fix any $\bar\beta<\infty$ and set \( \mathcal G_{\bar\beta}:=\mathcal G\cap (K\times[0,\bar\beta]). \) By the continuation result from (d3), $\mathcal G_{\bar\beta}$
contains a connected component $\Gamma_{\bar\beta}$ whose projection onto $[0,\bar\beta]$ is surjective. In particular, $\Gamma_{\bar\beta}$ intersects the fiber $K\times\{0\}$. Since $\mathcal V(0)=\{\mathbf v_0\}$, this intersection must contain $(\mathbf v_0,0)$. Therefore
$\Gamma_{\bar\beta}\subseteq\Gamma_{\mathrm{prin}}$, because $\Gamma_{\mathrm{prin}}$ is the connected component of the global graph containing $(\mathbf v_0,0)$. It follows that \( [0,\bar\beta]\subseteq \xi(\Gamma_{\mathrm{prin}}). \) Since $\bar\beta<\infty$ was arbitrary, we obtain $\xi(\Gamma_{\mathrm{prin}})=[0,\infty)$. If another connected component $\Gamma$ of $\mathcal G$ had $\xi(\Gamma)=[0,\infty)$, then it would intersect the fiber $K\times\{0\}$. Since the SVE at $\beta=0$ is unique, this would
force $(\mathbf v_0,0)\in\Gamma$, and hence $\Gamma=\Gamma_{\mathrm{prin}}$. Thus $\Gamma_{\mathrm{prin}}$ is the unique connected component of $\mathcal G$ whose projection onto the $\beta$-axis is surjective.

\noindent\emph{\textbf{(d6) Generic smoothness of projection:}}
As in~\textup{(c)}, fix a support set \(\Omega^+\subseteq\Omega\) and regard the parameter pair \((p,\pi)\) as ranging over the finite-dimensional parameter space \( \Theta:=\Delta(\Omega^+)\times \Pi^+.\) Define
\(
\widetilde F(\mathbf v,\beta,p,\pi):=\mathbf v-g(\mathbf v;\beta,p,\pi),
\)
which is real-analytic in \((\mathbf v,\beta,p,\pi)\). Let
\(
\mathcal M:=\widetilde F^{-1}(0)\subset U\times(0,\infty)\times\Theta,
\
P:\mathcal M\to\Theta,\ (\mathbf v,\beta,p,\pi)\mapsto (p,\pi),
\)
and
\(
\Xi:\mathcal M\to\mathbb R,\ (\mathbf v,\beta,p,\pi)\mapsto \beta.
\)
By part~\textup{(c)}, there exists a comeager full-measure subset
\(\Theta_{\mathrm{reg}}\subset\Theta\) such that, for every \(\theta\in\Theta_{\mathrm{reg}}\),
the fiber
\(
\mathcal G_\theta=P^{-1}(\theta)
=
\{(\mathbf v,\beta)\in K\times(0,\infty):\widetilde F(\mathbf v,\beta,\theta)=0\}
\)
is a one-dimensional definable real-analytic embedded submanifold. Restricting \(P\) to
\(P^{-1}(\Theta_{\mathrm{reg}})\), it follows that
\(
P:P^{-1}(\Theta_{\mathrm{reg}})\to \Theta_{\mathrm{reg}}
\)
is a submersion. Since \(\mathcal G_\theta\)
is definable and \(\xi_\theta:\mathcal G_\theta\to(0,\infty), \
\xi_\theta(\mathbf v,\beta)=\beta\) is definable \(C^1\), the critical set
\(
C_\theta
:=
\operatorname{Crit}(\xi_\theta)
=
\{p\in\mathcal G_\theta:d\xi_{\theta,p}=0\}
\)
is definable. By o-minimal cell decomposition, \(C_\theta\) admits a finite
decomposition into points and \(C^1\) one-dimensional definable cells. On each
one-dimensional cell of \(C_\theta\), the restriction of \(\xi_\theta\) has
identically vanishing derivative $d\xi=0$, and hence is constant. The zero-dimensional
cells contribute only finitely many additional values. Therefore the set of
critical values
\(
\xi_\theta(C_\theta)
\)
is finite. Consequently, over any bounded interval in \(\beta\), each connected component
of \(\mathcal G_\theta\) is a finite union of real-analytic graphs over
sub-intervals of \(\beta\), separated by finitely many critical fibers. In
particular, if a connected component has unbounded projection, then there exists
\(B<\infty\) such that \(\xi_\theta\) has no critical values on \((B,\infty)\).

\emph{(e1) \textbf{Finite limit set of $\mathcal G$:}} For each $\beta\in[0,\infty)$, the SVE set $\mathcal V(\beta)$ is nonempty, so the graph $\mathcal G$ is unbounded in the $\beta$-direction. Pick any sequence $\beta_n\uparrow\infty$ and choose $\mathbf v_n\in\mathcal V(\beta_n)$, so $(\mathbf v_n,\beta_n)\in\mathcal G$. Because $K$ is compact, there exists a subsequence with $\mathbf v_n\to\mathbf v^*\in K$. By part~\textup{(b)}, any such limit $\mathbf v^*$ satisfies $\mathbf v^*\in\mathcal{VE} \cap \widehat{\mathcal{V}}(\infty)$. Hence $\mathcal V(\infty)$ is nonempty and $\mathcal V(\infty)\subseteq\mathcal{VE} \cap \widehat{\mathcal{V}}(\infty)$. Set $t:=1/(1+\beta)\in(0,1]$ and define
\(
\widehat{\mathcal G}
:=
\{(\mathbf v,t)\in K\times(0,1]: (\mathbf v,\beta)\in\mathcal G,\ t=1/(1+\beta)\}.
\)
The map
\(
(\mathbf v,\beta)\mapsto(\mathbf v,1/(1+\beta))
\)
is a definable homeomorphism between $\mathcal G$ and $\widehat{\mathcal G}$.
In generic trees, since $\mathcal G$ is one-dimensional and definable, so is $\widehat{\mathcal G}$.
Let $\overline{\widehat{\mathcal G}}$ denote the closure of $\widehat{\mathcal G}$ in the compact set $K\times[0,1]$.
In an o-minimal structure, a one-dimensional definable subset of a compact set has only finitely many connected components, each a finite union of $C^1$ arcs and points and each component has finitely many boundary points. 

In particular, the intersection with the boundary
\(
\mathcal L
:=
\{(\mathbf v,0)\in K\times\{0\}: (\mathbf v,0)\in\overline{\widehat{\mathcal G}}\}\)
is a definable set of dimension $0$, hence a finite set of points.\footnote{The set $\widehat{\mathcal G}$ is a one-dimensional definable subset of $K\times(0,1]$,
so its closure $\overline{\widehat{\mathcal G}}$ is definable and
\(
\dim\bigl(\overline{\widehat{\mathcal G}}\bigr)
=
\dim\bigl(\widehat{\mathcal G}\bigr)
=1.
\)
Consider the frontier
\(
\mathrm{Fr}(\widehat{\mathcal G})
:=
\overline{\widehat{\mathcal G}}\setminus\widehat{\mathcal G}.
\)
It is definable and, by o-minimality, satisfies
\(
\dim\bigl(\mathrm{Fr}(\widehat{\mathcal G})\bigr)
<
\dim\bigl(\widehat{\mathcal G}\bigr)
=1.
\)
Thus $\dim(\mathrm{Fr}(\widehat{\mathcal G}))\le 0$.
By construction,
\(
\mathcal L
=
\bigl\{(\mathbf v,0)\in K\times\{0\}: (\mathbf v,0)\in\overline{\widehat{\mathcal G}}\bigr\}
\subseteq \mathrm{Fr}(\widehat{\mathcal G}),
\)
so $\mathcal L$ is a definable set of dimension at most $0$.
Since $\mathcal L$ lies in the compact set $K\times\{0\}$, 
$\dim(\mathcal L)=0$ implies that $\mathcal L$ is a finite union of points.
Hence $\mathcal L$ is finite.} By construction, $\mathbf v^*\in \mathcal V(\infty)$ if and only if $(\mathbf v^*,0)\in \mathcal L$. Thus the global limit set of SVE as $\beta\uparrow\infty$, $\mathcal V(\infty)$, is a nonempty finite subset of $\mathcal{VE}$, even if $\mathcal{VE}$ may be infinite in some trees (see the example in~\ref{sec:limitcycle}). As a corollary, defining \(
\mathcal V_{\mathrm{prin}}(\infty)
:=
\{\mathbf v^*\in K:\ \exists\,\beta_n\uparrow\infty,\; (\mathbf v_n,\beta_n)\in\Gamma_{\mathrm{prin}},\; \mathbf v_n\to\mathbf v^*\}
\) as the limit set of the principal SVE branch, we get $\mathcal V_{\mathrm{prin}}(\infty) \subseteq \mathcal V(\infty)$.

\noindent\emph{\textbf{(e2) Unique LSVE along unbounded SVE tail branches:}}
Fix a generic parameter array $\theta\in\Theta_{\mathrm{reg}}$ so that the conclusions of~\textup{(d6)} hold: $\mathcal G=\mathcal G_\pi$ is a one-dimensional real-analytic embedded submanifold of $K\times[0,\infty)$ and the projection $\xi:\mathcal G\to[0,\infty)$ is definable \(C^1\). Let $\mathcal C$ be a connected component of $\mathcal G$ such that $\xi(\mathcal C)$ is unbounded above. As $\mathcal C$ is connected and $\xi$ is continuous and proper, $\xi(\mathcal C)$ is a connected subset of $[0,\infty)$ and therefore an interval. Since the set of critical values of $\xi$ is finite, there exists $B<\infty$ such that $\xi$ has no critical values in $(B,\infty)$ and $(B,\infty)\subseteq\xi(\mathcal C)$. Let $\Gamma$ be any connected component of \(\mathcal C\cap\bigl(K\times(B,\infty)\bigr). \) Then $\xi|_{\Gamma}:\Gamma\to(B,\infty)$ is a proper local diffeomorphism onto its image. Hence it is a covering map onto its image; since $\Gamma$ is connected, its image is a nonempty connected open-and-closed subset of the connected interval $(B,\infty)$, so necessarily \(\xi(\Gamma)=(B,\infty). \) Because $(B,\infty)$ is simply connected, the covering $\xi|_{\Gamma}$ is in fact a global real-analytic diffeomorphism.

Therefore there exists a unique definable real-analytic map \( \varphi_\Gamma:(B,\infty)\to K\) such that \( \Gamma=\{(\varphi_\Gamma(\beta),\beta):\ \beta>B\}. \) Since $\Gamma$ is definable, so is $\varphi_\Gamma$. Each coordinate of $\varphi_\Gamma$ is bounded (as $K$ is compact) and definable; by the o-minimal monotonicity theorem, each coordinate is eventually monotone (or constant), hence convergent. Thus the limit \( \mathbf v_\Gamma^*:=\lim_{\beta\to\infty}\varphi_\Gamma(\beta) \) exists and is unique. For any sequence $\beta_n\uparrow\infty$, setting $\mathbf v_n:=\varphi_\Gamma(\beta_n)$ yields $\mathbf v_n\in\mathcal V(\beta_n)$ and $\mathbf v_n\to\mathbf v_\Gamma^*$, so by part~\textup{(b)} \( \mathbf v_\Gamma^*\in\mathcal V(\infty)\subseteq\mathcal{VE}. \) Therefore every unbounded tail branch of every connected component $\mathcal C$ with unbounded projection converges to a unique point of $\mathcal V(\infty)$. 

By part (d1), every SVE lies in $\operatorname{int}K$. Hence $\Gamma_{\mathrm{prin}}\subset \operatorname{int}K\times[0,\infty)$ is a connected one-dimensional embedded manifold with boundary, and any boundary point of $\Gamma_{\mathrm{prin}}$ must lie in the fiber $\{\beta=0\}$. Since $\Gamma_{\mathrm{prin}}\cap \xi^{-1}(0)=\{(\mathbf v_0,0)\}$, the only boundary point of $\Gamma_{\mathrm{prin}}$ is $(\mathbf v_0,0)$. By the classification of connected one-dimensional manifolds with boundary, $\Gamma_{\mathrm{prin}}$ is homeomorphic to $[0,\infty)$. Since $\xi(\Gamma_{\mathrm{prin}})=[0,\infty)$ and $\xi$ is proper on $\Gamma_{\mathrm{prin}}$, the projection tends to infinity along the unique non-compact end of $\Gamma_{\mathrm{prin}}$. Moreover, by (d6), the set of critical values of $\xi|_{\Gamma_{\mathrm{prin}}}$ is finite. Therefore there exists $B<\infty$, chosen above all critical values and above all earlier
returns of the curve, such that \(\Gamma_{\mathrm{prin}}\cap (K\times(B,\infty))\) is connected and contains no critical point of $\xi$. For such a $B$, the restriction of $\xi$ to this tail is a proper local diffeomorphism onto $(B,\infty)$. Since $(B,\infty)$ is simply connected, this
restriction is a global real-analytic diffeomorphism. Thus the principal component is eventually a single real-analytic graph over $(B,\infty)$ and therefore converges to a unique limit in $\mathcal V(\infty)$.
\end{proof}

\vspace{-0.2in}
\subsection{Theorem \ref{th:convergence2sim}}
\label{sec:proof2}
\begin{proof}

\begin{figure}[ht]
    \centering
    \begin{tikzpicture}[scale=0.8, font=\footnotesize]
    \tikzstyle{solid node}=[circle,draw,inner sep=1.2,fill=black]
    \tikzstyle{hollow node}=[circle,draw,inner sep=1.2]
    \tikzstyle{level 1}=[level distance=10mm,sibling distance=3.5cm]
    \tikzstyle{level 2}=[level distance=10mm,sibling distance=1.5cm]
    \tikzstyle{level 3}=[level distance=10mm,sibling distance=1cm]
    \node(0)[solid node,label=above:{$r'$}]{}
    child{node[solid node,label=above left:{$a$}]{}
    child{node[hollow node,label=below:{$\pi_{ii}$}]{} edge from parent node[left]{$i_a$}}
    edge from parent node[right, yshift = 12] {$p_{ii}$}
    }
    child{node[solid node,label=above left:{$b$}]{}
    child{node[hollow node,label=below:{$\pi_{jj}$}]{} edge from parent node[left]{$j_b$}}
    edge from parent node[right,xshift=0, yshift = 1] {$p_{jj}$}
    }
    child{node[solid node,label=above right:{$c$}]{}
    child{node[hollow node,label=below:{$\pi_{ij}$}]{} edge from parent node[left]{$i_c$}}
    child{node[hollow node,label=below:{$\pi_{ji}$}]{} edge from parent node[right]{$j_c$}}
    edge from parent node[left,xshift=4, yshift = 12]{$p_{ij}$}
    };
    \end{tikzpicture}
    \caption{Reduced Decision Tree $\mathcal{T}_{2}^{'}$ with Two Similarity Classes}
    \label{fig:2sim}  
\end{figure}

\vspace{-0.1in}

The alternatives are partitioned into two similarity classes $i = \{i_a,i_c\}$, $j = \{j_b,j_c\}$. By Lem.~\ref{lem:translation-reduction}, it is without loss to work with the relative valuation \(x:=v_i-v_j\).\footnote{Writing $y:=\tfrac12(v_i+v_j)$, the change of variables
$(v_i,v_j)\leftrightarrow (x,y)$ is invertible and triangularizes the system.} Denoting \(\sigma(x)= \exp(\beta x)/ (1+\exp(\beta x))\) (logit choice probability of class \(i\) at the unique binary menu),
\[
g_i(x)=\frac{p_{ii}\pi_{ii}+p_{ij}\sigma(x)\pi_{ij}}{p_{ii}+p_{ij}\sigma(x)},\quad
g_j(x)=\frac{p_{jj}\pi_{jj}+p_{ij}(1-\sigma(x))\pi_{ji}}{p_{jj}+p_{ij}(1-\sigma(x))}.
\]
The relative valuation satisfies the scalar ODE
\(
\dot x = f_\beta(x):= \big(g_i(x)-g_j(x)\big)-x,
\)
and the mean valuation coordinate satisfies \(\dot y=\tfrac12\big(g_i(x)+g_j(x)\big)-y\). In particular, once \(x(t)\) converges,
the forcing term \(\tfrac12(g_i(x(t))+g_j(x(t)))\) converges and hence \(y(t)\) converges as well, yielding convergence in the full space. Since there are finitely many finite payoffs  there exist $M_1,M_2$ with $-M_1\le g_i(x)-g_j(x)\le M_2$, hence outside $[-M_1,M_2]$ the flow points inward; thus all forward trajectories enter and remain in a compact interval. In one-dimensional smooth autonomous dynamical system, there are no periodic orbits; flow is monotone and bounded within $[-M_1,M_2]$. Thus, $\omega$-limit sets are equilibria $x^*$ with $f_\beta(x^*)=0$. By the analytic identity theorem, the set of equilibria of the one-dimensional (relative) CQL system is finite.

By Theorem~\ref{th:correspondence}, for sufficiently large $\beta$ every SVE lies near a VE. With two classes there are at most three VE: two strict pure (if they exist) and possibly one mixed (indifference). Thus, for $\beta\ge\widehat\beta$ the equilibria of $f_\beta$ lie in small neighborhoods of these VE.
Differentiate $f_\beta$:
\[
f_\beta'(x)\;=\;\sigma'(x)\cdot p_{ij}\!\left(\frac{p_{ii}(\pi_{ij}-\pi_{ii})}{(p_{ii}+p_{ij}\sigma(x))^2}
+\frac{p_{jj}(\pi_{ji}-\pi_{jj})}{(p_{jj}+p_{ij}(1-\sigma(x)))^2}\right)\;-\;1,
\quad
\sigma'(x)=\beta\,\sigma(x)\big(1-\sigma(x)\big).
\]
If a strict pure VE exists at some \(x^{\infty}\neq 0\), then by Thm.~\ref{th:correspondence} the corresponding SVE \(x^\ast(\beta)\) satisfies \(x^\ast(\beta)\to x^{\infty}\) as \(\beta\to\infty\); hence \(|x^\ast(\beta)|\ge \eta>0\) for all
\(\beta\) large enough, which implies \(\sigma'(x^\ast(\beta))=\beta\sigma(x^\ast(\beta))(1-\sigma(x^\ast(\beta)))\to 0\). Therefore \(f_\beta'(x^\ast(\beta))\to -1\), and for all \(\beta\ge\widehat\beta\) sufficiently large, \(f_\beta'(x^\ast(\beta))<0\),
so the corresponding SVE is locally asymptotically stable.

If both strict pure VE exist (one with \(x^{\infty}>0\), one with \(x^{\infty}<0\)), then for all \(\beta\ge\widehat\beta\)
there are two corresponding locally asymptotically stable SVE \(x_i^\ast(\beta)>0\) and \(x_j^\ast(\beta)<0\) as above. Moreover, by Thm.~\ref{th:correspondence} there is a unique additional equilibrium \(x_m^\ast(\beta)\) in a small neighborhood of the mixed VE.
In a one-dimensional smooth autonomous ODE, the sign of \(f_\beta\) must alternate across successive simple zeros; hence the middle equilibrium \(x_m^\ast(\beta)\)
is unstable (equivalently \(f_\beta'(x_m^\ast(\beta))>0\)). The basins of attraction of \(x_j^\ast(\beta)\) and \(x_i^\ast(\beta)\) are therefore separated
by \(x_m^\ast(\beta)\), and every trajectory converges to one of the equilibria. 

If no strict pure VE exists, there is a unique mixed VE at $x^{\infty}=0$; for $\beta\ge\widehat\beta$ there is a unique nearby SVE \(x^\ast(\beta)\) with \( x^\ast(\beta) \to 0\ \text{as }\beta\to\infty.\) Since $f_\beta$ is continuous and points inward outside a compact interval containing the unique root near the origin, it must be that $f_\beta(x)>0$ for $x<x^*$ and $f_\beta(x)<0$ for $x>x^*$. Thus $f_\beta'(x^*)<0$ and the SVE is \emph{locally} asymptotically stable. If the SVE is unique, define $V(x)=\tfrac12(x-x^*)^2$. Then $\dot V=(x-x^*)f_\beta(x)<0$ for $x\neq x^*$ by the sign pattern shown above, so $x^*$ is \emph{globally} asymptotically stable. In all cases, every trajectory monotonically converges to an equilibrium.
\end{proof}

\vspace{-0.2in}
\subsection{Theorem \ref{th:strictpurestable}}
\label{sec:proofstrict}
\begin{proof}
Let $\mathbf v^\infty$ be a \emph{strict pure} LSVE, i.e.\ there exist $\beta_n\uparrow\infty$ and $\mathbf v_n\in\mathcal V(\beta_n)$ with $\mathbf v_n\to \mathbf v^\infty$, and for every $\omega\in\Omega$ the restriction of $\mathbf v^\infty$ to $\omega$ has a unique maximizer $s^\star(\omega):=\arg\max_{s\in\omega} v_s^\infty$, so the greedy policy induced by $\vvec^\infty$ is pure. For a never-winning class \(i\), the closest-loss winner is unique as
\(
\arg\min_{\omega\in\Omega_i\cap\operatorname{supp}(p)}
\bigl(v^\infty_{s^\star(\omega)}-v^\infty_i\bigr)
=
\arg\min_{\omega\in\Omega_i\cap\operatorname{supp}(p)}
v^\infty_{s^\star(\omega)},
\)
and the winner classes are strictly ordered. Denote this closest-loss winner of class $i$ by \(t_i\). Indeed, this is a generic property when the co-occurrence graph $G$ is connected as we've assumed.\footnote{Given strict choices in every menu, a strict total order among winners holds generically for a connected $G$.} For a class $j$ that is maximal in some menu, i.e.\ $j = s^\star(\omega)$ for some $\omega \in \Omega_j \cap \mathrm{supp}(p)$, we define $t_j = j$. Fix such a strict pure LSVE $\mathbf v^\infty\in\mathcal V(\infty)$. For $\omega\in\Omega$ and $s\in\omega$, define the loss gap
\(
\Delta_s(\omega):=v^\infty_{s^\star(\omega)}-v^\infty_s\ \ge 0,\) such that \(
\Delta_s(\omega)=0\iff s=s^\star(\omega).
\)
For $s\in\mathcal S$, recall the following useful expressions:
\[
D_s(\mathbf v):=\sum_{\omega\in\Omega_s} p(\omega)\,\sigma^s_\omega(\mathbf v),\quad
w^s_\omega(\mathbf v):=\frac{p(\omega)\,\sigma^s_\omega(\mathbf v)}{D_s(\mathbf v)}\in\Delta(\Omega_s), \quad g_s(\mathbf v)=\sum_{\omega\in\Omega_s} w^s_\omega(\mathbf v)\,\pi_s(\omega).
\]
For each class $i$, define
\(
\Omega_i^\star
:=
\{\omega\in\Omega_i\cap\mathrm{supp}(p): s^\star(\omega)=t_i\},\) and \(
\Delta_i
:=
v^\infty_{t_i}-v_i^\infty.
\)
The exact high-sensitivity limit map associated with this strict pure policy is
\[
\bar g_i(\mathbf v^\infty)
:=
\frac{\sum_{\omega\in\Omega_i^\star} p(\omega)\,\pi_i(\omega)}
     {\sum_{\omega\in\Omega_i^\star} p(\omega)},
\qquad i\in\mathcal S.
\]
We first identify what strict pure LSVE means as a refinement of strict VE.
Since $\mathbf v^\infty$ is an LSVE, there exist $\beta_n\uparrow\infty$ and
$\mathbf v_n\in\mathcal V(\beta_n)$ such that $\mathbf v_n\to\mathbf v^\infty$.
By the finite-sum Laplace principle used in the proof of Theorem~\ref{th:correspondence}\textup{(b)},
for each class $i$ the conditional menu weights concentrate on $\Omega_i^\star$, and because generically all menus in $\Omega_i^\star$ have the same winner class $t_i$, the asymptotic weights inside
$\Omega_i^\star$ are pinned down by $p$:
\[
w^i_\omega(\mathbf v_n;\beta_n)\longrightarrow
\bar w^i_\omega
:=
\dfrac{p(\omega)}{\sum_{\omega'\in\Omega_i^\star}p(\omega')}
\mathbf 1\{\omega\in\Omega_i^\star\}.
\]
Passing to the limit in
\(
v_{n,i}=g_i(\mathbf v_n;\beta_n)=\sum_{\omega\in\Omega_i} w^i_\omega(\mathbf v_n;\beta_n)\,\pi_i(\omega)
\)
gives
\(
v_i^\infty=\bar g_i(\mathbf v^\infty),\ \forall\,i\in\mathcal S.
\)
Hence $\mathbf v^\infty$ is a strict VE satisfying the closest-loss consistency
condition $\mathbf v^\infty=\bar g(\mathbf v^\infty)$.
\begin{lemma}
\label{lem:laplace}
Fix $s\in\mathcal S$ and a sequence $\mathbf v^{(\beta)}\to \mathbf v^\infty$ as $\beta\to\infty$.
For each $\omega\in\Omega_s=\{\omega\in\Omega: s\in\omega\}$, define the associated loss gap $\Delta_s(\omega)=v^\infty_{s^\star(\omega)}-v^\infty_s\ge 0$, and let
$\Delta_s:=\min_{\omega\in\Omega_s\cap\mathrm{supp}(p)}\Delta_s(\omega)$, and
\(
\Omega_s^\star
:=
\bigl\{\omega\in\Omega_s: \Delta_s(\omega)=\Delta_s\bigr\}
\) be the set of closest-loss menus. For sufficiently large $\beta$,
there exist constants $0<\underline c\le \overline c<\infty$ (independent of $\beta$) s.t.,
\begin{align*}
\text{\emph{(a)}}\quad
\sigma^s_\omega\big(\mathbf v^{(\beta)}\big)
&= \exp\!\big(-\beta\,\Delta_s(\omega)\big)\cdot \Theta_\omega^{(\beta)},
\qquad \underline c \le \Theta_\omega^{(\beta)} \le \overline c,\\[4pt]
\text{\emph{(b)}}\quad
D_s\big(\mathbf v^{(\beta)}\big)
&= \exp\!\big(-\beta\,\Delta_s\big)\,\big(C_s^{(\beta)}+o(1)\big),
\qquad \underline c \le C_s^{(\beta)} \le \overline c,\\[4pt]
\text{\emph{(c)}}\quad
\sum_{\omega\notin \Omega_s^\star}
w^s_\omega\big(\mathbf v^{(\beta)}\big)
&= O\!\big(\exp(-\beta\,\eta)\big)\quad\text{for some }\eta>0.
\end{align*}
In particular, if $\Omega_s^\star=\{\bar\omega\}$ is a singleton, then
$w^s_{\bar\omega}\big(\mathbf v^{(\beta)}\big)\to 1$ and
$g_s\big(\mathbf v^{(\beta)}\big)\to \pi_s(\bar\omega)$.
\end{lemma}
\begin{proof}
Identical to Step 1 of Thm.~\ref{th:correspondence}(b) (see the \say{high-sensitivity limit} part of \ref{sec:prooflemma2}).
\end{proof}
By Lemma~\ref{lem:laplace}, there exist $\eta_i>0$ and $C_i<\infty$ such that, as $\beta\to\infty$,
\(
\sum_{\omega\notin\Omega_i^\star} w^i_\omega(\mathbf v^\infty;\beta)=O(e^{-\beta\eta_i}),\) and \(
w^i_\omega(\mathbf v^\infty;\beta)
=
\dfrac{p(\omega)}{\sum_{\omega'\in\Omega_i^\star}p(\omega')}
+
O(e^{-\beta\eta_i})\) for \(\omega\in\Omega_i^\star.\) Hence,
\[
g_i(\mathbf v^\infty;\beta)
=
\sum_{\omega\in\Omega_i^\star}
\frac{p(\omega)}{\sum_{\omega'\in\Omega_i^\star}p(\omega')}\,\pi_i(\omega)
+
O(e^{-\beta\eta_i})
=
\bar g_i(\mathbf v^\infty)+O(e^{-\beta\eta_i})
=
v_i^\infty+O(e^{-\beta\eta_i}).
\]
Taking $\eta:=\min_i\eta_i$, we obtain
\(
\|g(\mathbf v^\infty;\beta)-\mathbf v^\infty\|\le C e^{-\beta\eta}\) for all sufficiently large $\beta$. 

Differentiate $g_i=N_i/D_i$ with
$N_i(\mathbf v)=\sum_{\omega\in\Omega_i}p(\omega)\,\sigma^i_\omega(\mathbf v)\,\pi_i(\omega)$ and
$D_i(\mathbf v)=\sum_{\omega\in\Omega_i}p(\omega)\,\sigma^i_\omega(\mathbf v)$. Using $\partial \sigma^i_\omega/\partial v_j=\beta\,\sigma^i_\omega(\mathbf v)\big(\mathbf 1\{i=j\}-\sigma^j_\omega(\mathbf v)\big)$ and rearranging yields the centered-weights:
\begin{equation}\label{eq:centered}
\frac{\partial g_i}{\partial v_j}(\mathbf v)
=\beta\sum_{\omega\in\Omega_i} w^i_\omega(\mathbf v)\,\big(\mathbf 1\{i=j\}-\sigma^j_\omega(\mathbf v)\big)\,\Big(\pi_i(\omega)-g_i(\mathbf v)\Big).
\end{equation}
In particular, the Jacobian of the CQL vector field $F(\mathbf v)=g(\mathbf v)-\mathbf v$ has entries
\begin{align}
J_{ii}(\mathbf v)&=\beta\,\sum_{\omega\in\Omega_i} w^i_\omega(\mathbf v)\,\big(1-\sigma^i_\omega(\mathbf v)\big)\,\Big(\pi_i(\omega)-g_i(\mathbf v)\Big)\;-\;1,\label{eq:Jdiag}\\
J_{ik}(\mathbf v)&=-\beta\sum_{\omega\in\Omega_i\cap\Omega_k} w^i_\omega(\mathbf v)\,\sigma^k_\omega(\mathbf v)\,\Big(\pi_i(\omega)-g_i(\mathbf v)\Big),\qquad k\neq i.\label{eq:Joff}
\end{align}

\begin{lemma}
\label{lem:Dg-small}
In a strict pure LSVE, the following strict gap condition holds in the limit:
\begin{equation}\label{eq:gap}
\exists\ m>0\ \text{s.t.}\ \forall\,\omega\in\Omega\ \text{with}\ p(\omega)>0,\ 
\forall j\in\omega\setminus\{s^\star(\omega)\}:\quad v^\infty_{s^\star(\omega)}-v^\infty_j\ge m.
\end{equation}
Thus, along any SVE sequence $\mathbf v^{(\beta)}\to\mathbf v^\infty$, there exist constants $C<\infty$ and $\eta>0$, independent of $\beta$, such that
\(
\Big\|Dg\big(\mathbf v^{(\beta)}\big)\Big\|\ \le\ C\,\beta\,e^{-\beta \eta}\ \longrightarrow\ 0\qquad(\beta\to\infty).
\)
\end{lemma}
\begin{proof}
Since $\Omega$ is finite and $s^\star(\omega)$ is a strict maximizer for every $\omega$ with $p(\omega)>0$,
the set $\{v^\infty_{s^\star(\omega)}-v^\infty_j:\ p(\omega)>0,\ j\in\omega\setminus\{s^\star(\omega)\}\}$ is a finite
set of strictly positive numbers; hence its minimum $m$ exists and is strictly positive. Fix a sequence of SVE $\mathbf v^{(\beta)}\to\mathbf v^\infty$ and choose $\beta$ large enough that the ranking in each menu is the same as at the strict pure limit, with a uniform gap. Since $\Omega\cap\mathrm{supp}(p)$ is finite and $\mathbf v^\infty$ is strict pure, the uniform gap
\(
m'':=\min_{\substack{\omega\in\Omega\cap\mathrm{supp}(p)\\ j\in\omega\setminus\{s^\star(\omega)\}}}
\bigl(v^\infty_{s^\star(\omega)}-v^\infty_j\bigr)>0
\)
is well-defined, where $s^\star(\omega)$ denotes the unique maximizer of $\mathbf v^\infty$ on $\omega$.
Since $\mathbf v^{(\beta)}\to\mathbf v^\infty$ as $\beta\to\infty$, choose $\beta_0$ such that
$\|\mathbf v^{(\beta)}-\mathbf v^\infty\|_\infty\le m''/4$ for all $\beta\ge\beta_0$.
Then for every $\omega\in\Omega\ \cap\ \mathrm{supp}(p)$ and every $j\in\omega\setminus\{s^\star(\omega)\}$,
\(
v^{(\beta)}_{s^\star(\omega)}-v^{(\beta)}_j
\;\ge\;
\bigl(v^\infty_{s^\star(\omega)}-v^\infty_j\bigr)-2\|\mathbf v^{(\beta)}-\mathbf v^\infty\|_\infty
\;\ge\; m''/2.
\)
Hence all strict-maximizer gaps at $\mathbf v^\infty$ persist uniformly for all $\beta\ge\beta_0$. We will show that for each pair $(i,j)$ there is a constant $C_{ij}$ and $\eta_{ij}>0$ such that
\(
\bigl|\partial g_i/\partial v_j\bigl(\mathbf v^{(\beta)}\bigr)\bigr|
\;\le\;
C_{ij}\,\beta\,e^{-\beta\eta_{ij}}
\)
for all sufficiently large $\beta$. Taking $\eta:=\min_{i,j}\eta_{ij}>0$ and $C:=\max_{i,j}C_{ij}$ yields the desired bound.

\noindent\emph{Step 1: Bounds on softmax derivatives.}
Fix $\beta\ge\beta_0$ and $\omega$ with $p(\omega)>0$. Let $t:=s^\star(\omega)$ denote the unique maximizer in $\omega$.
By the uniform gap condition, for all $j\in\omega\setminus\{t\}$,
\(
v^{(\beta)}_{t}-v^{(\beta)}_{j}\ \ge\ m''/2 := m'.
\)
Write \[
\sigma^j_\omega(\mathbf v)
=\dfrac{\exp(\beta v_j)}{\sum_{\ell\in\omega}\exp(\beta v_\ell)}
=\dfrac{\exp\{-\beta(v_t-v_j)\}}{1+\sum_{\ell\in\omega\setminus\{t\}}\exp\{-\beta(v_t-v_\ell)\}}.
\]
Hence, for every $j\in\omega\setminus\{t\}$,
\(
\sigma^j_\omega\big(\mathbf v^{(\beta)}\big)\ \le\ \exp(-\beta m').
\)
Summing over $j\neq t$ gives
\(
1-\sigma^t_\omega\big(\mathbf v^{(\beta)}\big)
=\sum_{j\in\omega\setminus\{t\}}\sigma^j_\omega\big(\mathbf v^{(\beta)}\big)
\ \le\ (|\omega|-1)\exp(-\beta m').
\)
Since $\Omega$ is finite, define the uniform menu-size constant
\(
B\;:=\;\max_{\omega:\,p(\omega)>0} (|\omega|-1)\ <\ \infty.
\)
Then, for every $\omega$ with $p(\omega)>0$,
\(
\sigma^j_\omega\big(\mathbf v^{(\beta)}\big)\ \le\ e^{-\beta m'}\quad (j\neq t),
\ \text{and} \
1-\sigma^t_\omega\big(\mathbf v^{(\beta)}\big)\ \le\ B\,e^{-\beta m'}.
\)

Now use
\(
\dfrac{\partial\sigma^i_\omega}{\partial v_j}
=\beta\,\sigma^i_\omega(\mathbf v)\big(\mathbf 1\{i=j\}-\sigma^j_\omega(\mathbf v)\big).
\)
We bound $\big|\mathbf 1\{i=j\}-\sigma^j_\omega(\mathbf v^{(\beta)})\big|\le 1$ and distinguish whether $i$ is maximal in $\omega$.
If $i\neq t$, then $\sigma^i_\omega(\mathbf v^{(\beta)})\le e^{-\beta m'}$, so
\(
\Big|\dfrac{\partial\sigma^i_\omega}{\partial v_j}\big(\mathbf v^{(\beta)}\big)\Big|
\le \beta\,e^{-\beta m'}.
\)
If $i=t$, then for $j=t$ we have
\(
\Big|\dfrac{\partial\sigma^t_\omega}{\partial v_t}\big(\mathbf v^{(\beta)}\big)\Big|
=\beta\,\sigma^t_\omega(\mathbf v^{(\beta)})\big(1-\sigma^t_\omega(\mathbf v^{(\beta)})\big)
\le \beta\,B\,e^{-\beta m'},
\)
while for $j\neq t$,
\(
\Big|\dfrac{\partial\sigma^t_\omega}{\partial v_j}\big(\mathbf v^{(\beta)}\big)\Big|
=\beta\,\sigma^t_\omega(\mathbf v^{(\beta)})\,\sigma^j_\omega(\mathbf v^{(\beta)})
\le \beta\,e^{-\beta m'}.
\)
Combining these bounds and enlarging constants if needed, we obtain the uniform estimate: for all $i,j$ and all $\omega$,
\begin{equation}
\label{eq:softmax-deriv-bound}
\Big|\frac{\partial\sigma^i_\omega}{\partial v_j}\bigl(\mathbf v^{(\beta)}\bigr)\Big|
\;\le\;
C_2\,\beta\,e^{-\beta m'},
\qquad
\text{with }C_2:=\max\{1,B\}.
\end{equation}
\noindent\emph{Step 2: Classes that are maximal in some menu.}
Fix $i$ such that $i=s^\star(\omega)$ for some $\omega\in\Omega_i$ with $p(\omega)>0$.
Then for all large $\beta$,
\(
D_i\bigl(\mathbf v^{(\beta)}\bigr)
=
\sum_{\omega'\in\Omega_i}p(\omega')\,\sigma^i_{\omega'}\bigl(\mathbf v^{(\beta)}\bigr)
\;\ge\;
p(\omega)\,\sigma^i_{\omega}\bigl(\mathbf v^{(\beta)}\bigr)
\;\ge\;
c_i>0
\)
for some constant $c_i$ (say, $c_i := p(\omega)/2$) independent of $\beta$ since $\sigma^i_{\omega}\bigl(\mathbf v^{(\beta)}\bigr)\to1$. Recall
\[
\frac{\partial g_i}{\partial v_j}(\mathbf v)
=
\frac{1}{D_i(\mathbf v)}
\sum_{\omega\in\Omega_i}
p(\omega)\,
\frac{\partial\sigma^i_\omega}{\partial v_j}(\mathbf v)\,
\bigl(\pi_i(\omega)-g_i(\mathbf v)\bigr).
\]
Since $|\pi_i(\omega)-g_i(\mathbf v)|$ is uniformly bounded on the compact set $K$, say by $C_\pi$, for each $\beta\ge\beta_0$:
\[
\Big|\frac{\partial g_i}{\partial v_j}\bigl(\mathbf v^{(\beta)}\bigr)\Big|
\;\le\;
\frac{C_\pi}{c_i}
\sum_{\omega\in\Omega_i} p(\omega)\,
\Big|\frac{\partial\sigma^i_\omega}{\partial v_j}\bigl(\mathbf v^{(\beta)}\bigr)\Big|
\;\le\;
C_{ij}\,\beta\,e^{-\beta m'},
\]
where we used \eqref{eq:softmax-deriv-bound} and absorbed $c_i$, $C_\pi$, and the finite sum over $\omega$ into $C_{ij}$. This already gives the desired exponential decay for any class $i$ that is maximal in some menu.

\noindent\emph{Step 3: Classes that are never maximal in any menu.}
Fix a class $i$ such that $i\neq s^\star(\omega)$ for all $\omega\in\Omega_i$ with $p(\omega)>0$.
Define the \emph{closest-loss winner} class \(t_i := s^{\star}(\omega^{\star}_i)\) where
\(
\omega^{\star}_i \in \arg\min_{\omega\in\Omega_i\cap\mathrm{supp}(p)} v^\infty_{s^\star(\omega)}.
\)
By assumption, $t_i$ is unique. Let
\(
\Omega_i^\star :=\{\omega\in\Omega_i\ \cap\ \mathrm{supp}(p):\ s^\star(\omega)=t_i\}, \) and \(
\Delta_i :=\ v^\infty_{t_i}-v^\infty_i.
\) We define \[
\eta_i
:= 
\begin{cases}
\displaystyle \min_{\omega\in\Omega_i\cap\mathrm{supp}(p)\setminus\Omega_i^\star}
\bigl(v^\infty_{s^\star(\omega)}-v^\infty_{t_i}\bigr),
& \text{if }\Omega_i\cap\mathrm{supp}(p)\setminus\Omega_i^\star\neq\varnothing,\\[10pt]
+\infty,
& \text{if }\Omega_i^\star=\Omega_i\cap\mathrm{supp}(p).
\end{cases}
\]
where $\eta_i>0$ by uniqueness of $t_i$. As a convention, we work with $\eta_i\in(0,\infty]$ and interpret $e^{-\beta\cdot\infty}=0$ for $\beta>0$. By Lemma~\ref{lem:laplace} applied to class $i$, the propensity weights concentrate on $\Omega_i^\star$:
\(
\sum_{\omega\notin\Omega_i^\star} w^i_\omega\big(\mathbf v^{(\beta)}\big)
=O\!\big(e^{-\beta\eta_i}\big).
\)
Now use the centered-weights representation \eqref{eq:centered} at $\mathbf v^{(\beta)}$:
\[
\frac{\partial g_i}{\partial v_j}\big(\mathbf v^{(\beta)}\big)
=\beta\sum_{\omega\in\Omega_i} w^i_\omega\big(\mathbf v^{(\beta)}\big)\,
\big(\mathbf 1\{i=j\}-\sigma^j_\omega(\mathbf v^{(\beta)})\big)\,
\big(\pi_i(\omega)-g_i(\mathbf v^{(\beta)})\big).
\]
Split the sum over $\Omega_i^\star$ and its complement. Since $|\pi_i(\omega)-g_i(\mathbf v^{(\beta)})|$
is uniformly bounded on $K$, and $\sum_{\omega\notin\Omega_i^\star} w^i_\omega(\mathbf v^{(\beta)})
=O(e^{-\beta\eta_i})$ when $0<\eta_i<\infty$ (with the convention $e^{-\beta\eta_i}=0$ when $\eta_i=+\infty$), the contribution of
$\Omega_i\setminus\Omega_i^\star$ is $O\!\big(\beta e^{-\beta\eta_i}\big)$, with this term vanishing when $\eta_i=+\infty$. On $\Omega_i^\star$, the winner is $t_i$,
so by the uniform within-menu gap in Eq.~\eqref{eq:softmax-deriv-bound} there exists $m'>0$ and $C<\infty$ such that for all
$\omega\in\Omega_i^\star$ and all large $\beta$,
\(
1-\sigma^{t_i}_\omega(\mathbf v^{(\beta)})\le C e^{-\beta m'},
\ \text{and} \
\sigma^j_\omega(\mathbf v^{(\beta)})\le C e^{-\beta m'}
\ \ \text{for all } j\in\omega\setminus\{t_i\}.
\)
We consider three cases.

\smallskip
\emph{Case 3.1: $j=t_i$.}
Then $\mathbf 1\{i=j\}-\sigma^{t_i}_\omega=-\sigma^{t_i}_\omega=-1+\bigl(1-\sigma^{t_i}_\omega\bigr)$.
Using $\sum_{\omega\in\Omega_i} w^i_\omega(\pi_i(\omega)-g_i)=0$ (by definition of $g_i$) and splitting over $\Omega_i^\star$ and its complement, we get
\[
\sum_{\omega\in\Omega_i^\star} w^i_\omega(-1)\bigl(\pi_i(\omega)-g_i\bigr)
=
-\sum_{\omega\in\Omega_i^\star} w^i_\omega\bigl(\pi_i(\omega)-g_i\bigr)
=
\sum_{\omega\notin\Omega_i^\star} w^i_\omega\bigl(\pi_i(\omega)-g_i\bigr),
\]
whose magnitude is $O(e^{-\beta\eta_i})$. The remaining part involves
$1-\sigma^{t_i}_\omega=O(e^{-\beta m'})$. Hence the $\Omega_i^\star$ contribution is
$O\!\big(\beta e^{-\beta\eta_i}\big)+O\!\big(\beta e^{-\beta m'}\big)$.

\smallskip
\emph{Case 3.2: $j=i$.}
Here $\mathbf 1\{i=j\}-\sigma^i_\omega=1-\sigma^i_\omega$.
Decompose $1-\sigma^i_\omega = 1 + r_\omega$, where $|r_\omega|\le C e^{-\beta m'}$ on $\Omega_i^\star$
(because $i$ is not the winner on $\Omega_i^\star$).
Then
\[
\sum_{\omega\in\Omega_i^\star} w^i_\omega(1-\sigma^i_\omega)(\pi_i(\omega)-g_i)
=
\sum_{\omega\in\Omega_i^\star} w^i_\omega(\pi_i(\omega)-g_i)
+
\sum_{\omega\in\Omega_i^\star} w^i_\omega r_\omega(\pi_i(\omega)-g_i).
\]
For the first term, use centering on $\Omega_i$:
\(
\sum_{\omega\in\Omega_i^\star} w^i_\omega(\pi_i(\omega)-g_i)
=
-\sum_{\omega\notin\Omega_i^\star} w^i_\omega(\pi_i(\omega)-g_i),
\)
whose magnitude is $O(e^{-\beta\eta_i})$ since the mass outside $\Omega_i^\star$ is $O(e^{-\beta\eta_i})$
and $|\pi_i(\omega)-g_i|$ is uniformly bounded on $K$.
For the second term, $|r_\omega|=O(e^{-\beta m'})$ and the weights sum to at most $1$, hence it is
$O(e^{-\beta m'})$.
Multiplying by the pre-factor $\beta$ yields an $O(\beta e^{-\beta\min\{\eta_i,m'\}})$ bound.

\smallskip
\emph{Case 3.3: $j\notin\{i,t_i\}$.}
Then $\mathbf 1\{i=j\}-\sigma^j_\omega=-\sigma^j_\omega$, and on $\Omega_i^\star$ we have
$\sigma^j_\omega=O(e^{-\beta m'})$ if $j\in\omega$, and $\sigma^j_\omega\equiv 0$ if $j\notin\omega$.
Hence the $\Omega_i^\star$ contribution is $O(\beta e^{-\beta m'})$.
Combining the complement bound $O(\beta e^{-\beta\eta_i})$ with the $\Omega_i^\star$ bounds yields
\[
\Big|\frac{\partial g_i}{\partial v_j}\big(\mathbf v^{(\beta)}\big)\Big|
\;\le\;
C_{ij}\,\beta\Big(e^{-\beta\eta_i}+e^{-\beta m'}\Big)
\;\le\;
\widetilde C_{ij}\,\beta\,e^{-\beta\,\widetilde\eta_i},
\]
where $\widetilde\eta_i:=\min\{\eta_i,m'\}>0$. This gives the desired exponential decay for all $j$. Let $\eta:=\min_{i\in\mathcal S}\widetilde\eta_i>0$. Combining Steps~2 and~3 over all classes $i$ and coordinates $j$ gives the existence of constants $C<\infty$ and $\eta>0$ such that
\(
\big\|Dg\big(\mathbf v^{(\beta)}\big)\big\|
\;\le\;
C\,\beta\,e^{-\beta\eta}
\;\xrightarrow{\beta\to\infty}\;0.
\)
\end{proof}

Next, choose $r>0$ small enough that for every $\mathbf v\in B_r(\mathbf v^\infty)$: (i) the unique winner in each menu is still $s^\star(\omega)$, and (ii) for each class $i$, the closest-loss winner class is still $t_i$. This is possible because both the menu-wise winner gaps
\(
m:=\min_{\omega:\,p(\omega)>0}\ \min_{j\in\omega\setminus\{s^\star(\omega)\}}
\bigl(v^\infty_{s^\star(\omega)}-v^\infty_j\bigr)>0
\)
and the closest-loss-winner gaps
\(
\delta_i
:=
\min_{\omega\in\Omega_i\cap\mathrm{supp}(p)\setminus\Omega_i^\star}
\bigl(v^\infty_{s^\star(\omega)}-v^\infty_{t_i}\bigr)>0
\)
are strictly positive. We claim that there exist constants $C<\infty$, $\eta>0$, and $\widehat\beta_0<\infty$ such that
for all $\beta\ge \widehat\beta_0$ and all $\mathbf v\in B_r(\mathbf v^\infty)$,
\(\|D_{\mathbf v}g(\mathbf v;\beta)\|\le C\,\beta e^{-\beta\eta},
\ \text{and} \
\|g(\mathbf v^\infty;\beta)-\mathbf v^\infty\|\le C e^{-\beta\eta}.
\) By continuity, shrinking $r$ if necessary, we may ensure that for every $\mathbf v\in B_r(\mathbf v^\infty)$ the menu-wise winner identities $s^\star(\omega)$ and, for every never-winning class $i$, the closest-loss winner class $t_i$ are unchanged. Hence all relevant valuation gaps are bounded below uniformly on $B_r(\mathbf v^\infty)$ by positive constants. Therefore the same finite-sum Laplace estimates, centered-weights cancellations, and softmax-derivative bounds used in Lemma~\ref{lem:Dg-small} hold uniformly for all $\mathbf v\in B_r(\mathbf v^\infty)$, yielding constants $C<\infty$, $\eta>0$, and $\widehat\beta_0<\infty$ such that, for all $\beta\ge\widehat\beta_0$,
\begin{equation}\label{eq:uniform-Dg-small}
\sup_{\mathbf v\in B_r(\mathbf v^\infty)}\|D_{\mathbf v}g(\mathbf v;\beta)\|\le C\,\beta e^{-\beta\eta},
\qquad
\|g(\mathbf v^\infty;\beta)-\mathbf v^\infty\|\le C e^{-\beta\eta}.
\end{equation}
Choose $\widehat\beta\ge \widehat\beta_0$ so large that
\(
q_\beta:=C\,\beta e^{-\beta\eta}<\frac12,\) and \(\varepsilon_\beta:=C e^{-\beta\eta}<\frac r2
\) for all \(\beta\ge\widehat\beta.\) Then for every $\beta\ge\widehat\beta$ and every $\mathbf v\in B_r(\mathbf v^\infty)$,
\(
\|g(\mathbf v;\beta)-\mathbf v^\infty\|
\le
\|g(\mathbf v;\beta)-g(\mathbf v^\infty;\beta)\|
+
\|g(\mathbf v^\infty;\beta)-\mathbf v^\infty\|
\le
q_\beta \|\mathbf v-\mathbf v^\infty\|+\varepsilon_\beta
\le
\frac12 r+\frac12 r=r.
\)
Hence $g(\cdot;\beta)$ maps the closed ball $\overline{B_r(\mathbf v^\infty)}$ into itself.
Moreover, by \eqref{eq:uniform-Dg-small},
\(
\|g(\mathbf v;\beta)-g(\mathbf u;\beta)\|
\le q_\beta \|\mathbf v-\mathbf u\|,
\
\forall\,\mathbf u,\mathbf v\in \overline{B_r(\mathbf v^\infty)},
\)
so $g(\cdot;\beta)$ is a contraction on $\overline{B_r(\mathbf v^\infty)}$.
By Banach's fixed-point theorem, for every $\beta\ge\widehat\beta$ there exists a unique
\(
\mathbf v^{(\beta)}\in \overline{B_r(\mathbf v^\infty)}
\)
such that
\(
\mathbf v^{(\beta)}=g(\mathbf v^{(\beta)};\beta).
\)
Thus, for every sufficiently large $\beta$, there exists a unique SVE in a neighborhood of the strict pure LSVE. Finally, let
\(
J(\mathbf v^{(\beta)})=D_{\mathbf v}g(\mathbf v^{(\beta)};\beta)-I
\)
be the Jacobian of the CQL vector field $F(\mathbf v)=g(\mathbf v;\beta)-\mathbf v$ at this SVE.
By \eqref{eq:uniform-Dg-small},
\(
\|D_{\mathbf v}g(\mathbf v^{(\beta)};\beta)\|\le q_\beta<\frac12.
\)
Hence every eigenvalue $\lambda$ of $D_{\mathbf v}g(\mathbf v^{(\beta)};\beta)$ satisfies
$|\lambda|<1/2$, so every eigenvalue of
$J(\mathbf v^{(\beta)})=D_{\mathbf v}g(\mathbf v^{(\beta)};\beta)-I$
has real part at most $-1/2$. Therefore $J(\mathbf v^{(\beta)})$ is Hurwitz. By the linearization theorem for $\mathcal C^1$ vector fields, $\mathbf v^{(\beta)}$ is a hyperbolic rest point and is locally exponentially, hence locally asymptotically, stable.
\end{proof}

\vspace{-0.25in}
\subsection{Theorem \ref{th:cycle}: Remainder of the proof}
\label{sec:[proofremainder]}
\begin{lemma}
\label{lem:RPS-VE-LSVE}
Consider the RPS decision tree with classes $\mathcal S=\{R,P,S\}$, binary menus
\(
\omega_1=\{R,P\},\ \omega_2=\{P,S\},\ \omega_3=\{R,S\},
\)
unary menus
\(
\omega_4=\{R\},\ \omega_5=\{P\},\ \omega_6=\{S\},
\)
uniform probabilities $p(\omega)=1/6$ for $\omega\in\{\omega_1,\dots,\omega_6\}$, and expected payoffs
\(
\pi_R(\omega_1)=-1,\ \pi_P(\omega_1)=1,\
\pi_P(\omega_2)=-1,\ \pi_S(\omega_2)=1,\
\pi_R(\omega_3)=1,\ \pi_S(\omega_3)=-1,
\)
\(
\pi_R(\omega_4)=\pi_P(\omega_5)=\pi_S(\omega_6)=z,\ z\in(-1,0).
\)
Then, every VE valuation profile equalizes valuations across classes:
if $\mathbf v\in \mathcal{VE}$, then
\(
v_R=v_P=v_S.
\)
Equivalently, in relative coordinates
\(
x:=v_R-v_S,\ y:=v_P-v_S,
\)
the only VE valuation profile is $(x,y)=(0,0)$. Moreover, the set of LSVE
\(
\mathcal V(\infty)=\{(0,0)\}.
\)
\end{lemma}
\begin{proof}
We work in relative coordinates $x:=v_R-v_S$ and $y:=v_P-v_S$.

\noindent
\emph{Step 1: no VE lies in a strict order cone.}
There are six strict order cones, corresponding to the six strict rankings of $(v_R,v_P,v_S)$.
By cyclic symmetry it suffices to treat two.

If $R\succ P\succ S$ (equivalently $x>y>0$), then the binary best replies are $(R,P,R)$, so
\[
g_R=\frac{-1+1+z}{3}=\frac z3,\qquad
g_P=\frac{-1+z}{2},\qquad
g_S=z.
\]
Hence
\(
g_R-g_S=-\dfrac{2z}{3}>0,\ \text{and} \
g_P-g_S=-\dfrac{1+z}{2}<0,
\)
contradicting $y>0$.

If $P\succ R\succ S$ (equivalently $y>x>0$), then the binary best replies are $(P,P,R)$, so
\[
g_R=\frac{1+z}{2},\qquad
g_P=\frac z3,\qquad
g_S=z.
\]
Hence
\(
g_R-g_S=\dfrac{1-z}{2}>0,\ \text{and} \
g_P-g_S=-\dfrac{2z}{3}>0,
\) 
but
\(
(g_R-g_S)-(g_P-g_S)=\dfrac{3+z}{6}>0,
\)
contradicting $y>x$.
By cyclic relabeling, all six strict cones are excluded.

\smallskip
\noindent
\emph{Step 2: no VE lies on a tie face away from the origin.}
It remains to rule out $\{x=0\}\setminus\{(0,0)\}$, $\{y=0\}\setminus\{(0,0)\}$, and
$\{x=y\}\setminus\{(0,0)\}$. By symmetry it suffices to treat $x=0$, $y\neq 0$.

Suppose first that $x=0$ and $y>0$. Then $\omega_1$ and $\omega_2$ have unique best reply $P$, while $\omega_3$ is a tie menu. Let $\alpha\in[0,1]$ denote the probability with which $R$ is chosen in $\omega_3$. Then
\[
g_R=\frac{\alpha+z}{1+\alpha},
\qquad
g_S=\frac{z-1+\alpha}{2-\alpha},
\]
so
\(
g_R-g_S
=
\dfrac{1+z+2\alpha(1-z-\alpha)}{(1+\alpha)(2-\alpha)}.
\)
Since $z\in(-1,0)$ and $\alpha\in[0,1]$, the numerator is strictly positive, hence
\(
g_R-g_S>0,
\)
contradicting $x=0$.

Now suppose $x=0$ and $y<0$. Then $\omega_1$ has unique best reply $R$, $\omega_2$ has unique best reply $S$, and $\omega_3$ is again a tie menu. With the same notation,
\[
g_R=\frac{z-1+\alpha}{2+\alpha},
\qquad
g_S=\frac{z+\alpha}{3-\alpha},
\]
hence
\(
g_R-g_S
=
\dfrac{z-3+2\alpha(1-z-\alpha)}{(2+\alpha)(3-\alpha)}.
\)
Using
\(
2\alpha(1-z-\alpha)\le \dfrac{(1-z)^2}{2},
\)
we obtain
\[
z-3+2\alpha(1-z-\alpha)\le z-3+\frac{(1-z)^2}{2}=\frac{z^2-5}{2}<0,
\]
so
\(
g_R-g_S<0,
\)
again contradicting $x=0$. Thus no VE lies on $\{x=0\}\setminus\{(0,0)\}$. By cyclic relabeling, the same holds on
$\{y=0\}\setminus\{(0,0)\}$ and $\{x=y\}\setminus\{(0,0)\}$, so the only VE valuation profile is $(x,y)=(0,0)$. Finally, $\mathcal V(\infty) \subseteq \mathcal{VE}$ by Thm.~\ref{th:correspondence}\textup{(b)}. Since the only VE valuation profile in reduced coordinates is $(0,0)$, it follows that
\(
\mathcal V(\infty)=\{(0,0)\}.
\)
\end{proof}

\begin{lemma}[Eventual uniqueness of SVE]
\label{lem:RPS-eventual-unique}
Fix \(z\in(-1,0)\). Then there exist constants \(\rho>0\) and \(\bar\beta<\infty\) such that
\(
\mathcal V(\beta)\subset B_{\rho/\beta}(0)
\ \text{and}\
\mathcal V(\beta)=\{(0,0)\}, \forall\,\beta\ge \bar\beta.
\)
\end{lemma}
\begin{proof}
Write
\[
F_\beta(x,y):=
\binom{(g_R-g_S)(x,y;\beta)-x}{(g_P-g_S)(x,y;\beta)-y},
\qquad
h_\beta:=-F_\beta.
\]
Let \((x_n,y_n)\in\mathcal V(\beta_n)\) with \(\beta_n\uparrow\infty\). Since every accumulation point of
\((x_n,y_n)\) is an LSVE, Lemma~\ref{lem:RPS-VE-LSVE} implies
\(
(x_n,y_n)\to(0,0).
\)
Define scaled variables
\(
X_n:=\beta_n x_n,\  Y_n:=\beta_n y_n,
\)
and
\[
a_n:=\ell(X_n-Y_n),\qquad b_n:=\ell(Y_n),\qquad c_n:=\ell(X_n),
\qquad
\ell(t):=\frac{1}{1+e^{-t}}.
\]
Then the equilibrium conditions are equivalent to
\(
x_n=\Phi_1(X_n,Y_n),\ y_n=\Phi_2(X_n,Y_n),
\)
where
\[
\Phi_1(X,Y):=\frac{-\ell(X-Y)+\ell(X)+z}{\ell(X-Y)+\ell(X)+1}
-\frac{\ell(X)-\ell(Y)+z}{3-\ell(Y)-\ell(X)},
\]
\[
\Phi_2(X,Y):=\frac{1-\ell(X-Y)-\ell(Y)+z}{2-\ell(X-Y)+\ell(Y)}
-\frac{\ell(X)-\ell(Y)+z}{3-\ell(Y)-\ell(X)}.
\]

We claim that \((X_n,Y_n)\) must remain bounded. Suppose not. Passing to a subsequence, each of
\(X_n\), \(Y_n\), and \(X_n-Y_n\) converges in \([-\infty,\infty]\), with at least one infinite limit. If all three quantities have eventually fixed nonzero signs, then the limiting binary best-reply pattern
lies in one of the six strict order cones. Consequently
\(
(\Phi_1(X_n,Y_n),\Phi_2(X_n,Y_n))\to (\bar\Phi_1,\bar\Phi_2)\neq (0,0)
\)
by Step~1 of Lemma~\ref{lem:RPS-VE-LSVE}, contradicting
\(
(x_n,y_n)=(\Phi_1(X_n,Y_n),\Phi_2(X_n,Y_n))\to(0,0).
\)
Otherwise, exactly one of \(X_n\), \(Y_n\), and \(X_n-Y_n\) has a finite limit; this corresponds to one of the three tie faces. Let \(\alpha\in[0,1]\) denote the limiting tie-breaking probability induced by that finite limit. Then Step~2 of Lemma~\ref{lem:RPS-VE-LSVE} shows that the corresponding limit of
\((\Phi_1,\Phi_2)\) cannot equal \((0,0)\), again a contradiction. Hence \((X_n,Y_n)\) is bounded for every sequence of SVE with \(\beta_n\uparrow\infty\). Equivalently, there exist \(\rho>0\) and \(\beta_0<\infty\) such that
\(
\mathcal V(\beta)\subset B_{\rho/\beta}(0),
\ \forall\,\beta\ge\beta_0.
\)

We now prove eventual uniqueness. Since each \(g_i(\cdot;\beta)\) is a convex combination of the payoffs in
\(\{-1,1,z\}\), we have \(|g_i|\le 1\) and therefore
\(
\|(g_R-g_S,\ g_P-g_S)\|_\infty\le 2,
\ \forall\,(x,y),\ \forall\,\beta\ge 0.
\)
Choose \(R>2\sqrt2\). For \(u\in\partial B_R(0)\), define
\(
H_t(u):=u-t\,(g_R-g_S,\ g_P-g_S)(u;\beta),\ t\in[0,1].
\)
Then \(\|u\|_2=R\) while \(\|t(g_R-g_S,\ g_P-g_S)\|_2\le 2\sqrt2<R\), so \(H_t(u)\neq 0\) on
\(\partial B_R(0)\). Hence, by homotopy invariance,
\(
\deg(h_\beta,B_R(0),0)=\deg(\mathrm{id},B_R(0),0)=1,
\ \forall\,\beta\ge 0.
\)
Next, by linearization at the origin (see Section~\ref{sec:prooftheoremcycle}),
\[
Dh_\beta(0,0)=-A(\beta,z),
\qquad
\det Dh_\beta(0,0)=\det A(\beta,z)
=
1+\frac{3z}{8}\beta+\frac{9z^2+12}{256}\beta^2.
\]
Since \(z\in(-1,0)\), there exist \(\kappa>0\) and \(\beta_1<\infty\) such that
\(
\det Dh_\beta(0,0)\ge \kappa \beta^2,
\ \forall\,\beta\ge\beta_1.
\)
Also, because the components of \(h_\beta\) are rational combinations of binary logits and all denominators are uniformly bounded away from zero, there exists \(C_1<\infty\) such that
\(
\sup_{u\in\mathbb R^2}\|D^2 h_\beta(u)\|\le C_1\beta^2,
\ \forall\,\beta\ge 1.
\)
Hence, for \(u\in B_{\rho/\beta}(0)\),
\(
\|Dh_\beta(u)-Dh_\beta(0)\|
\le C_1\beta^2\|u\|
\le C_1\rho\,\beta.
\)
Since the determinant is locally Lipschitz on bounded sets of \(2\times2\) matrices, there exists \(C_2<\infty\) such that
\(
|\det Dh_\beta(u)-\det Dh_\beta(0)|
\le C_2\,\rho\,\beta^2,
\ \forall\,u\in B_{\rho/\beta}(0).
\)
Choose \(\rho>0\) smaller if necessary so that \(C_2\rho<\kappa/2\). Then
\(
\det Dh_\beta(u)\ge \dfrac{\kappa}{2}\beta^2>0,
\ \forall\,u\in B_{\rho/\beta}(0),\ \forall\,\beta\ge\beta_1.
\) Thus every zero of \(h_\beta\) in \(B_{\rho/\beta}(0)\) is regular and has local index \(+1\).

Fix \(\beta\ge \bar\beta:=\max\{\beta_0,\beta_1\}\). Since all zeros lie in \(B_{\rho/\beta}(0)\), excision gives
\(
\deg(h_\beta,B_{\rho/\beta}(0),0)=\deg(h_\beta,B_R(0),0)=1.
\)
But every zero in \(B_{\rho/\beta}(0)\) has local index \(+1\), so
\[
1=\deg(h_\beta,B_{\rho/\beta}(0),0)
=\sum_{u\in\mathcal V(\beta)} \operatorname{ind}_u(h_\beta)
=\#\mathcal V(\beta).
\]
Since \((0,0)\) is always an SVE, it follows that
\(
\mathcal V(\beta)=\{(0,0)\},
\ \forall\,\beta\ge \bar\beta.
\)
\end{proof}

\begin{lemma}[Existence of VE]
\label{lem:existenceVE}
Let $K:=\prod_{s\in\mathcal S}[m_s,M_s]$, where 
$m_s=\min_{\omega\in\Omega_s}\pi_s(\omega)$ and $M_s=\max_{\omega\in\Omega_s}\pi_s(\omega)$ with $\Omega_s = \{\omega \in \Omega : s \in \omega\}$. Define the set-valued map $G_\infty:K\rightrightarrows K$ as
\(
\big[G_{\infty}(\mathbf v)\big]_s= 
\operatorname{co}\bigl\{\pi_s(\omega):\ \omega\in\Omega_s,\ \omega\in \mathrm{supp}(p),\ s\in\arg\max_{j\in\omega} v_{j}\bigr\}.
\)
If Asm.~\ref{as:singleton-support} holds, then there exists $\mathbf v^*\in K$ with
$\mathbf v^*\in G_\infty(\mathbf v^*)$; i.e. the set of valuation equilibria $\mathcal {VE}$ is nonempty.

\begin{proof}
For each $s\in\mathcal{S}$, $G_s(\mathbf v;\infty)$ is a nonempty, compact, convex subset of $[m_s,M_s]$:
nonempty because the $\arg\max$ set in each $\omega$ is nonempty and by Asm.~\ref{as:singleton-support} there exists some $\omega\in\Omega_s$ with $p(\omega)>0$ such that $s \in \arg\max_{j\in\omega} v_{j}$;
compact and convex because we take the convex hull of a finite set of expected payoffs.
Upper-hemicontinuity of $G_\infty$ follows from upper hemicontinuity of the $\arg\max$ correspondence
and continuity of the payoff array $\{\pi_s(\omega)\}$ and probabilities $p$.
Thus $G_\infty:K\rightrightarrows K$ is nonempty, convex, compact-valued and is upper hemicontinuous. Kakutani’s fixed-point theorem yields $\mathbf v^*\in K$ with $\mathbf v^*\in G_\infty(\mathbf v^*)$.
\end{proof}
\end{lemma}

\subsection{Theorem \ref{th:generallow}}
\label{sec:prooftheorem5}

\begin{proof}

\noindent\emph{\textbf{(i) Existence of strict pure LSVE:}}
Let $K:=\prod_{s\in\mathcal S}[m_s (z) , M_s (z)]\subset\mathbb R^{\mathcal S}$, $\Omega^\ast:=\{\omega\in\Omega:\ p(\omega)>0\}$, and $(\pi_s(\{s\}))_{s\in\mathcal S}$ denote singleton expected payoffs (before the $z$-shift).

\begin{definition}\label{def:adaptive-priority}
Define inductively an ordered list $(s_1,\ldots,s_n)$ and a selector $\phi:\Omega^{\ast}\to\mathcal S$ as follows.
Set $\mathcal S_0:=\varnothing$ and $\Omega^{(1)}:=\Omega^{\ast}$.
For each stage $r=1,\ldots,n$ and each class $k\in\mathcal S\setminus \mathcal S_{r-1}$ define the residual mixed mass and the residual singleton share respectively by,
\[
D_k^{(r)} := \sum_{\omega\in\Omega^{(r)}:\ k\in\omega,\ |\omega|\ge 2}\ p(\omega),
\quad \text{and} \quad
\lambda_k^{(r)}\ :=\ \frac{p(\{k\})}{p(\{k\})+D_k^{(r)}}\ \in (0,1].
\]
Choose $s_r$ as a minimizer of $\lambda_k^{(r)}$ over $k\in\mathcal S\setminus \mathcal S_{r-1}$; breaking ties using any fixed deterministic rule.
Define $\mathcal S_r:= \mathcal S_{r-1}\cup\{s_r\}$ and update
\(
\Omega^{(r+1)} := \bigl\{\omega\in\Omega^{(r)}: s_r\notin\omega\bigr\}.
\)
Given the resulting order $(s_1,\ldots,s_n)$, define
\( \phi(\omega) := s_r, \) where \( r=\min\{t\in\{1,\ldots,n\}:\ s_t\in\omega\},
\ \omega\in\Omega^\ast. \)
\end{definition}

\begin{assumption}\label{ass:genericlow}
The probability vector $p$ lies outside a finite union of algebraic hypersurfaces in the simplex; equivalently, along the adaptive construction,
whenever at some stage $r$ two distinct remaining classes $i,j\in\mathcal S\setminus \mathcal S_{r-1}$ both satisfy $D_i^{(r)}>0$ and $D_j^{(r)}>0$, their residual singleton shares satisfy $\lambda_i^{(r)}\neq \lambda_j^{(r)}$.\footnote{If $D_k^{(r)}=0$ then $\lambda_k^{(r)}=1$, so equal residual shares can occur structurally only among classes with zero residual mixed mass; such ties are resolved by the fixed tie-break rule. When $D_i^{(r)},D_j^{(r)}>0$, the identity $\lambda_i^{(r)}=\lambda_j^{(r)}$ is equivalent to the polynomial equality $p(\{i\})D_j^{(r)}=p(\{j\})D_i^{(r)}$, excluded under Asm.~\ref{ass:genericlow}.}
\end{assumption}
 Define the $\phi$-selection set of non-singleton menus and the induced singleton share, $\forall k\in\mathcal S$,
\[
\Omega_k^{\phi}\ :=\ \{\omega\in\Omega^\ast:\ k\in\omega,\ |\omega|\ge 2,\ \phi(\omega)=k\},
\qquad
\lambda_k^{\phi}\ :=\ \frac{p(\{k\})}{p(\{k\})+\sum_{\omega\in\Omega_k^{\phi}}p(\omega)}\ \in (0,1].
\]

\vspace{-0.3in}
\begin{definition}[Frozen-$\phi$ drift]\label{def:frozen-map}
Fix $z\in\mathbb R$ and define the frozen-selector drift $G_{\infty}^{\phi}:K\to K$,
\[
\big[G_{\infty}^{\phi}(\mathbf v)\big]_k
:= \frac{p(\{k\})\,(\pi_k(\{k\})+z)\ +\ \sum_{\omega\in\Omega_k^{\phi}}p(\omega)\,\pi_k(\omega)}%
{p(\{k\})\ +\ \sum_{\omega\in\Omega_k^{\phi}}p(\omega)}\,.
\]
\end{definition}

\vspace{-0.2in}
\begin{lemma}\label{lem:affine-fixed-point}
For any $z\in\mathbb R$, the map $G_{\infty}^{\phi}$ is constant on $K$ and has a unique fixed point $\tilde{\mathbf v}^{\,\phi}(z)\in K$ with coordinates affine in $z$,
\[
\tilde v_k^{\,\phi}(z)=a_k^\phi+\lambda_k^\phi z,
\qquad
a_k^{\phi}:=\lambda_k^{\phi}\,\pi_k(\{k\})\ +\ \frac{\sum_{\omega\in\Omega_k^{\phi}}p(\omega)\,\pi_k(\omega)}%
{p(\{k\})+\sum_{\omega\in\Omega_k^{\phi}}p(\omega)}.
\]
\end{lemma}

\vspace{-0.2in}
\begin{lemma}\label{lem:slope-separation}
Under Asm.~\ref{ass:genericlow}, for every $\omega\in\Omega^\ast$ and every $s\in\omega\setminus\{\phi(\omega)\}$, \( \lambda_{\phi(\omega)}^\phi\ <\ \lambda_s^\phi. \)
\end{lemma}

\vspace{-0.2in}
\begin{proof}
Fix $\omega\in\Omega^\ast$ and write $\phi(\omega)=s_r$, i.e.\ $r$ is the smallest index with $s_r\in\omega$. Then $\omega\in\Omega^{(r)}$ by construction, and every $s\in\omega\setminus\{s_r\}$ satisfies $s\notin \mathcal S_{r-1}$. First, note that by the priority definition, $s_r$ is selected by $\phi$ in every non-singleton menu of $\Omega^{(r)}$
that contains $s_r$. Hence the non-singleton menus contributing to $s_r$ under $\phi$ are exactly those counted in $D_{s_r}^{(r)}$, so
\(
\lambda_{s_r}^\phi=\lambda_{s_r}^{(r)}.
\)
Next, for any $s\notin \mathcal S_{r-1}$, $\phi$ selects $s$ only on a subset of the non-singleton menus in $\Omega^{(r)}$ that contain $s$,
since some of those menus may also contain classes with higher priority than $s$.
Therefore, the total mixed mass contributing to $s$ under $\phi$ is weakly smaller than $D_s^{(r)}$, implying \(\lambda_s^\phi\ \ge\ \lambda_s^{(r)}. \)
Finally, fix $s\in\omega\setminus\{s_r\}$. Since $\omega\in\Omega^{(r)}$ and contains both $s_r$ and $s$,
it follows that $\omega$ is a residual non-singleton menu for each of these classes at stage $r$. In particular, \( D_{s_r}^{(r)} \ge p(\omega) > 0 \) and \(D_s^{(r)} \ge p(\omega) > 0. \)
By construction of the adaptive order, $s_r$ minimizes $\lambda_k^{(r)}$ over $k\in\mathcal S\setminus \mathcal S_{r-1}$, so $\lambda_{s_r}^{(r)}\le \lambda_s^{(r)}$. Since $D_{s_r}^{(r)}>0$ and $D_s^{(r)}>0$, Asm.~\ref{ass:genericlow} rules out equality, hence
\( \lambda_{s_r}^{(r)}\ < \lambda_s^{(r)}. \) Together, \( \lambda_{s_r}^\phi=\lambda_{s_r}^{(r)} < \lambda_s^{(r)} \le \lambda_s^\phi, \) which proves $\lambda_{\phi(\omega)}^\phi<\lambda_s^\phi$ for all $s\in\omega\setminus\{\phi(\omega)\}$.
\end{proof}
\vspace{-0.1in}
\begin{lemma}\label{lem:adaptive-cutoffs}
Under Asm.~\ref{ass:genericlow}, there exists $\tilde z^\phi\in\mathbb R$ such that for every $z<\tilde z^\phi$ and every
$\omega\in\Omega^\ast$,
\(
\ \tilde v_{\phi(\omega)}^{\,\phi}(z)\ >\ \tilde v_s^{\,\phi}(z),
\ \forall s\in\omega\setminus\{\phi(\omega)\}.
\)
In particular, $\phi(\omega)$ is the unique maximizer of $\tilde{\mathbf v}^{\,\phi}(z)$ in every supported menu $\omega\in\Omega^\ast$.
\end{lemma}
\begin{proof}
Fix $\omega\in\Omega^\ast$ and $s\in\omega\setminus\{\phi(\omega)\}$.
Since $\tilde v_k^{\,\phi}(z)=a_k^\phi+\lambda_k^\phi z$,
\(
\tilde v_{\phi(\omega)}^{\,\phi}(z)-\tilde v_s^{\,\phi}(z)
=(a_{\phi(\omega)}^\phi-a_s^\phi)+(\lambda_{\phi(\omega)}^\phi-\lambda_s^\phi)\ z.
\)
By Lem.~\ref{lem:slope-separation}, $\lambda_{\phi(\omega)}^\phi-\lambda_s^\phi<0$, hence the right-hand side tends to $+\infty$
as $z\to-\infty$. Therefore there exists a finite cutoff $z_{\omega,s}\in\mathbb R$ such that the strict inequality holds
for all $z<z_{\omega,s}$. Taking
\(
\tilde z^\phi:=\min_{\omega\in\Omega^\ast}\ \min_{s\in\omega\setminus\{\phi(\omega)\}} z_{\omega,s}
\)  yields the claim.
\end{proof}
\noindent Fix any $z<\tilde z^\phi$ and define a pure selector $\sigma^\phi=(\sigma^\phi_\omega)_{\omega\in\Omega^\ast}$ by
\(
\sigma^\phi_\omega(k):=\mathbf 1\{k=\phi(\omega)\},\ \omega\in\Omega^\ast,\ k\in\mathcal S.
\)
By Lemma~\ref{lem:adaptive-cutoffs}, for every supported menu $\omega\in\Omega^\ast$,
$\phi(\omega)$ is the unique maximizer of $\tilde{\mathbf v}^{\,\phi}(z)$ on $\omega$. Hence
$\sigma^\phi$ is a greedy policy at $\tilde{\mathbf v}^{\,\phi}(z)$. Moreover, for each class $k\in\mathcal S$, the set of supported menus in which $k$ is chosen
under $\sigma^\phi$ is exactly
\(
\{\{k\}\}\cup \Omega_k^\phi .
\)
Therefore,
\[
\tilde v_k^{\,\phi}(z)
=
\frac{p(\{k\})\,(\pi_k(\{k\})+z)+\sum_{\omega\in\Omega_k^\phi}p(\omega)\pi_k(\omega)}
{p(\{k\})+\sum_{\omega\in\Omega_k^\phi}p(\omega)}
=
\frac{\sum_{\omega\in\Omega^\ast} p(\omega)\sigma^\phi_\omega(k)\,\pi_k(\omega;z)}
{\sum_{\omega\in\Omega^\ast} p(\omega)\sigma^\phi_\omega(k)}.
\]
Thus $(\sigma^\phi,\tilde{\mathbf v}^{\,\phi}(z))$ is a valuation equilibrium. Since every
supported menu has a unique maximizer under $\tilde{\mathbf v}^{\,\phi}(z)$, this VE is strict pure. It remains to show that $\tilde{\mathbf v}^{\,\phi}(z)$ is a strict pure LSVE. Since
$\tilde{\mathbf v}^{\,\phi}(z)$ is strict pure and $\Omega^\ast$ is finite, there exist $m>0$ and
$r>0$ such that on the closed ball
\(
B_r:=\overline{B_r\!\big(\tilde{\mathbf v}^{\,\phi}(z)\big)}\subset K
\)
the unique maximizer in each supported menu remains $\phi(\omega)$ and the corresponding
valuation gap is at least $m$. Hence, uniformly for $\mathbf v\in B_r$, every non-maximizer
$j\in\omega\setminus\{\phi(\omega)\}$ has logit share
\(
\sigma^j_\omega(\mathbf v;\beta)=O(e^{-\beta m}),
\)
and therefore all softmax derivatives satisfy
\(
\Bigl|\dfrac{\partial \sigma^i_\omega}{\partial v_j}(\mathbf v;\beta)\Bigr|
=O(\beta e^{-\beta m}).
\)
Since every singleton menu is in support, for each class $s$ and every $\mathbf v\in B_r$,
\(
D_s(\mathbf v;\beta)\ge p(\{s\})>0.
\)
Using the centered-weights formula \eqref{eq:centered} and boundedness of $\pi$ on $K$, it follows that
\(
\sup_{\mathbf v\in B_r}\|D_{\mathbf v}g(\mathbf v;\beta,z)\|
\le C\,\beta e^{-\beta m}
\)
for some constant $C<\infty$. In particular, for all sufficiently large $\beta$, the map
$g(\cdot;\beta,z)$ is a contraction on $B_r$. Let $\bar G^\phi_{\infty,z}$ denote the strict-pure consistency map induced by $\phi$, i.e.
\[
[\bar G^\phi_{\infty,z}(\mathbf v)]_k
:=
\frac{p(\{k\})\,(\pi_k(\{k\})+z)+\sum_{\omega\in\Omega_k^\phi}p(\omega)\pi_k(\omega)}
{p(\{k\})+\sum_{\omega\in\Omega_k^\phi}p(\omega)}.
\]
This map is constant and satisfies
\(
\bar G^\phi_{\infty,z}\!\big(\tilde{\mathbf v}^{\,\phi}(z)\big)=\tilde{\mathbf v}^{\,\phi}(z).
\)
Because the maximizing class in each supported menu is fixed throughout $B_r$, the conditional
menu weights induced by logit converge uniformly on $B_r$ to the corresponding $p$-weights
normalized on $\{\{k\}\}\cup\Omega_k^\phi$. Hence
\(
\sup_{\mathbf v\in B_r}\|g(\mathbf v;\beta,z)-\bar G^\phi_{\infty,z}(\mathbf v)\|\to 0
\ \text{as }\beta\to\infty.
\)
Therefore, after possibly shrinking $r$, for all sufficiently large $\beta$ we have
$g(B_r;\beta,z)\subseteq B_r$. By Banach's fixed-point theorem, there exists a unique
SVE $\mathbf v^{(\beta)}\in B_r$ satisfying
\(
\mathbf v^{(\beta)}=G(\mathbf v^{(\beta)};\beta,z),
\)
and necessarily $\mathbf v^{(\beta)}\to \tilde{\mathbf v}^{\,\phi}(z)$ as $\beta\to\infty$.
Thus $\tilde{\mathbf v}^{\,\phi}(z)$ is a strict pure LSVE. The local asymptotic stability claim now follows directly from Thm.~\ref{th:strictpurestable}: there exists $\hat\beta<\infty$ such that, for every $\beta\ge \hat\beta$, the unique SVE in a
neighborhood of $\tilde{\mathbf v}^{\,\phi}(z)$ is locally asymptotically stable for the CQL dynamics.

\noindent\emph{\textbf{(ii) Multiplicity of VE:}}
Assume in addition that all binary menus are in support, i.e.\ $\{\omega\in\Omega:|\omega|=2\}\subset\mathrm{supp}(p)$.
Fix $s\in\mathcal S$. We will construct, for all sufficiently small $z$, a VE at which $s$ is the unique worst-ranked class. Let $K^{(-s)}(z):=\prod_{k\neq s}[m_k(z),M_k(z)]$ and define the frozen-$s$ best response correspondence
$G^{(-s)}_{\infty,z}:K^{(-s)}(z)\rightrightarrows K^{(-s)}(z)$ by
\[
G^{(-s)}_{k,\infty,z}(\mathbf v^{-s})
\;\in\;
\operatorname{co}\Big\{\pi_k(\omega;z):\ \omega\in\Omega,\ \omega\in\mathrm{supp}(p),\ k\in\omega,\ 
k\in\arg\max_{j\in\omega\setminus\{s\}} v_j\Big\},
\qquad k\neq s,
\]
where $\pi_k(\{k\};z):=\pi_k(\{k\})+z$ and $\pi_k(\omega;z):=\pi_k(\omega)$ for $|\omega|\ge2$.
Thus $z$ shifts all singleton payoffs uniformly, and in menus containing $s$ we ignore $s$ in the maximization. Because the co-occurrence graph is complete and all singleton menus are in support, the reduced $(n\!-\!1)$-class problem (in which $s$ is ignored) satisfies the same standing assumptions as the original tree. Hence the frozen-priority (strict-pure) construction from part (i) applies verbatim to this frozen-$s$ problem:\footnote{Strictly speaking, one must also exclude the exceptional parameter values for which the generic separation condition in Asm.~\ref{ass:genericlow} fails in one of the frozen-$s$ reduced problems. For each fixed $s$, the equal-residual-share conditions arising in the frozen-$s$ construction define a finite union of algebraic hypersurfaces in the simplex of menu probabilities. Since there are only finitely many classes $s$, their union over $s\in\mathcal S$ is still a finite union of algebraic hypersurfaces. Hence, after excluding this enlarged exceptional set, the generic separation property holds simultaneously for the original problem and for every frozen-$s$ reduced problem.} there exist $\bar z^{(-s)}\in\mathbb R$ and a \emph{fixed} strict selector $\varphi^{(-s)}$ (independent of $z$) such that, for every $z<\bar z^{(-s)}$, the correspondence $G^{(-s)}_{\infty,z}$ admits a fixed point $\tilde{\mathbf v}^{\,(-s)}(z)\in K^{(-s)}(z)$ implementing $\varphi^{(-s)}$. Conditional on $\varphi^{(-s)}$, Lemma~\ref{lem:affine-fixed-point} implies that $\tilde{\mathbf v}^{\,(-s)}(z)$ is affine in $z$: \( \tilde v^{\,(-s)}_k(z)=a^{(-s)}_k+\lambda^{(-s)}_k\,z, \ \text{for} \ k\neq s, \) with $\lambda^{(-s)}_k\in(0,1]$ equal to the conditional weight of singleton menus in $k$'s consistency calculation under $\varphi^{(-s)}$. Moreover, $\lambda^{(-s)}_k<1$ for every $k\neq s$.
Indeed, since all binary menus are in support, $\{s,k\}\in\mathrm{supp}(p)$; as $s$ is ignored in the maximization, the selector must choose $k$ in the non-singleton menu $\{s,k\}$, so $k$ is selected in at least one non-singleton menu under $\varphi^{(-s)}$, which forces $\lambda^{(-s)}_k<1$.

Define $\mathbf v^{(s)}(z)\in K$ by
\(
v^{(s)}_s(z):=\pi_s(\{s\})+z,
\)
and
\(
v^{(s)}_k(z):=\tilde v^{\,(-s)}_k(z)=a^{(-s)}_k+\lambda^{(-s)}_k z,
\quad k\neq s.
\)
For each $k\neq s$,
\(
v^{(s)}_k(z)-v^{(s)}_s(z)
=\big(a^{(-s)}_k-\pi_s(\{s\})\big)+\big(\lambda^{(-s)}_k-1\big)z,
\)
and since $\lambda^{(-s)}_k-1<0$, the right-hand side tends to $+\infty$ as $z\to-\infty$.
Therefore there exists a finite threshold
\(
\tilde z_s
\;:=\;
\min_{k\neq s}\dfrac{a^{(-s)}_k-\pi_s(\{s\})}{1-\lambda^{(-s)}_k}
\quad\in\mathbb R
\)
such that $v^{(s)}_s(z)<v^{(s)}_k(z)$ for all $k\neq s$ whenever $z<\tilde z_s$.
Fix $z<\min\{\bar z^{(-s)},\tilde z_s\}$, so that $s$ is uniquely the least-valued class at $\mathbf v^{(s)}(z)$. We now verify that $\mathbf v^{(s)}(z)\in G_{\infty,z}(\mathbf v^{(s)}(z))$, i.e.\ $\mathbf v^{(s)}(z)$ is a VE.
For any menu $\omega$ with $s\notin\omega$, the best-response sets are the same as in the frozen-$s$ problem.
For any non-singleton menu $\omega\ni s$, since $s$ is strictly least-valued, the best response lies in $\omega\setminus\{s\}$,
so again the best-response set coincides with that in the frozen-$s$ problem. Hence for each $k\neq s$,
\(
g_{k,\infty,z}\big(\mathbf v^{(s)}(z)\big)
=
G^{(-s)}_{k,\infty,z}\big(\tilde{\mathbf v}^{\,(-s)}(z)\big),\)
and \(v^{(s)}_k(z)=\tilde v^{\,(-s)}_k(z)\in G^{(-s)}_{k,\infty,z}\big(\tilde{\mathbf v}^{\,(-s)}(z)\big). \)
Therefore $v^{(s)}_k(z)\in g_{k,\infty,z}\big(\mathbf v^{(s)}(z)\big)$ for all $k\neq s$.
Finally, since $\{s\}\in\mathrm{supp}(p)$ and $s$ is not a best response in any non-singleton menu at $\mathbf v^{(s)}(z)$,
the $s$-coordinate of the best-response correspondence reduces to its singleton payoff:
\(
g_{s,\infty,z}\big(\mathbf v^{(s)}(z)\big)=\pi_s(\{s\})+z=v^{(s)}_s(z).
\)
Thus $\mathbf v^{(s)}(z)\in G_{\infty,z}(\mathbf v^{(s)}(z))$, and $\mathbf v^{(s)}(z)$ is a VE with $s$ uniquely worst-ranked. Since $s\in\mathcal S$ was arbitrary, letting
\(
\tilde z'\ :=\ \min_{s\in\mathcal S}\min\{\bar z^{(-s)},\tilde z_s\},
\)
we conclude that for all $z<\tilde z'$ there are at least $|\mathcal S|$ distinct valuation equilibria, each with a different uniquely worst-ranked class.
\end{proof}

\vspace{-0.3in}
\subsection{Proposition \ref{th:fullymixedstable}}
\label{sec:proofprop1}

\begin{proof}
By Theorem~\ref{th:uniquemixedstable}, under Assumption~\ref{as:singleton-support} there exists
$\hat z<\infty$ such that, for every $z\ge \hat z$ and every $\beta\ge 0$, the CQL dynamics admit a unique SVE, and this SVE is globally asymptotically stable. Moreover, by Theorem~\ref{th:correspondence}, every accumulation point of this unique SVE branch as $\beta\to\infty$ is a mixed VE. It therefore remains only to show that, under Assumption~\ref{assm:monotonicity}, every VE must in fact be \emph{fully mixed}, i.e.\ must equalize valuations across all classes. Fix a VE $(\sigma^*,v^*)$ of the $z$-shifted tree. By Assumption~\ref{assm:monotonicity}, all singleton menus have the same probability; write
\(
q:=p(\{s\})>0,\ s\in\mathcal S.
\)
For each class $s\in\mathcal S$, let
\(
D_s:=\sum_{\omega\ni s} p(\omega)\,\sigma_\omega^*(s)
\)
denote the total selection mass of class $s$ under the VE. Since $\sigma_{\{s\}}^*(s)=1$, we have
$D_s\ge q$. Writing the singleton payoff as $\pi_s(\{s\})+z$, valuation consistency gives
\[
v_s^*
=
\frac{q(\pi_s(\{s\})+z)+\sum_{\omega\ni s,\ |\omega|>1} p(\omega)\sigma_\omega^*(s)\pi_s(\omega)}{D_s}
=
\lambda_s z+b_s,
\]
where
\(
\lambda_s:=\dfrac{q}{D_s}\in(0,1]
\)
and
\(
b_s:=
\dfrac{q\,\pi_s(\{s\})+\sum_{\omega\ni s,\ |\omega|>1} p(\omega)\sigma_\omega^*(s)\pi_s(\omega)}{D_s}.
\)
Because the reduced tree is finite, there exists $M<\infty$ such that $|b_s|\le M$ for every $s\in\mathcal S$. We claim that if $v_i^*<v_j^*$ for some $i,j\in\mathcal S$, then $D_j\ge D_i+q$, and hence
\[
\lambda_i-\lambda_j
=
q\Bigl(\frac1{D_i}-\frac1{D_j}\Bigr)
=
\frac{q(D_j-D_i)}{D_iD_j}
\ge q^2.
\]
To prove the claim, let
\(
\mathcal M_i:=\{\omega\in\Omega:\ i\in\omega,\ \sigma_\omega^*(i)>0\}.
\)
If $\omega\in\mathcal M_i$, then $i$ is a best reply in $\omega$, so every class in $\omega$ has valuation at most $v_i^*$. Since $v_j^*>v_i^*$, necessarily $j\notin \omega$. Hence in the enlarged menu $\omega\cup\{j\}$, class $j$ is the unique best reply, so $\sigma_{\omega\cup\{j\}}^*(j)=1$. Moreover, the map
\(
\omega\longmapsto \omega\cup\{j\}
\)
is injective on $\mathcal M_i$, and Assumption~\ref{assm:monotonicity} implies
\(
p(\omega\cup\{j\})\ge p(\omega).
\)
Therefore,
\(
D_j
\;\ge\;
q+\sum_{\omega\in\mathcal M_i} p(\omega\cup\{j\})
\;\ge\;
q+\sum_{\omega\in\mathcal M_i} p(\omega)
\;\ge\;
q+\sum_{\omega\in\mathcal M_i} p(\omega)\sigma_\omega^*(i)
\;=\;
q+D_i,
\)
which proves the claim.
Now suppose, toward a contradiction, that $v_i^*<v_j^*$ for some $i,j$. Then
\[
v_i^*-v_j^*
=
(\lambda_i-\lambda_j)z+(b_i-b_j)
\ge q^2 z-2M.
\]
Hence, if
\(
z>\bar z:=\dfrac{2M}{q^2},
\)
we obtain $v_i^*-v_j^*>0$, a contradiction. Therefore, for every $z\ge \bar z$, no strict valuation gap can exist in any VE. It follows that every VE satisfies
\(
v_i^*=v_j^*\ \forall\, i,j\in\mathcal S,
\)
so every VE is fully mixed in the sense of total indifference among all similarity classes.
Finally, let
\(
\hat z_1:=\max\{\hat z,\bar z\}.
\)
For every $z\ge \hat z_1$ and every $\beta\ge 0$, existence, uniqueness, and global asymptotic stability of the SVE follow from Theorem~\ref{th:uniquemixedstable}. Since the SVE is unique for every $\beta$, its high-sensitivity limit selects a unique VE along the principal branch, and the previous argument shows that this VE must be fully mixed. This proves the proposition.
\end{proof}

\vspace{-0.28in}
\subsection{Proposition \ref{th:multiplicitypureVE}}
\label{sec:proofprop2}

\begin{proof}
Fix an arbitrary strict total order on $\mathcal S$, written as
\(
\pi(1)\prec \pi(2)\prec \cdots \prec \pi(n),
\)
where $\pi(1)$ is the lowest-ranked class and $\pi(n)$ the highest-ranked class.
Under Assumptions~\ref{as:singleton-support} and \ref{assm:monotonicity}, every nonempty menu is in support and all singleton menus have the same probability; write
\(
q:=p(\{s\})>0,\ s\in\mathcal S.
\)
For this order, define the pure greedy selector $\sigma^\pi$ by choosing in each menu the
$\pi$-highest available class:
\(
\sigma^\pi_\omega\bigl(\pi(r)\bigr)=1
\ \Longleftrightarrow\ 
\pi(r)\in\omega
\ \text{and}\
\omega\cap\{\pi(r+1),\dots,\pi(n)\}=\varnothing.
\)
Equivalently, class $\pi(r)$ is selected exactly on the menu family
\(
\Omega_r^\pi
:=
\Bigl\{
\omega\in\Omega:
\pi(r)\in\omega,\ 
\omega\cap\{\pi(r+1),\dots,\pi(n)\}=\varnothing
\Bigr\}.
\)
Let
\(
D_r^\pi:=\sum_{\omega\in\Omega_r^\pi} p(\omega),
\ \lambda_r^\pi:=\dfrac{q}{D_r^\pi}\in(0,1],
\)
and
\[
b_r^\pi
:=
\dfrac{
q\,\pi_{\pi(r)}(\{\pi(r)\})+
\sum_{\omega\in\Omega_r^\pi,\ |\omega|>1} p(\omega)\,\pi_{\pi(r)}(\omega)
}{D_r^\pi}.
\]
Then the valuation-consistency equation induced by $\sigma^\pi$ is
\(
v_{\pi(r)}^\pi(z)=b_r^\pi+\lambda_r^\pi z,
\ r=1,\dots,n.
\)
We claim that
\(
\lambda_1^\pi>\lambda_2^\pi>\cdots>\lambda_n^\pi.
\)
Fix $r<n$. If $\omega\in\Omega_r^\pi$, then $\pi(r+1)\notin\omega$ and
\(
T_r(\omega):=\omega\cup\{\pi(r+1)\}\in \Omega_{r+1}^\pi.
\)
The map $T_r:\Omega_r^\pi\to\Omega_{r+1}^\pi$ is injective, and by
Assumption~\ref{assm:monotonicity},
\(
p\bigl(T_r(\omega)\bigr)\ge p(\omega).
\)
Since $\{\pi(r+1)\}\in\Omega_{r+1}^\pi$ and is not in the image of $T_r$, we obtain
\(
D_{r+1}^\pi
\ge
q+\sum_{\omega\in\Omega_r^\pi} p\bigl(T_r(\omega)\bigr)
\ge
q+\sum_{\omega\in\Omega_r^\pi} p(\omega)
=
q+D_r^\pi.
\)
Hence
\[
\lambda_r^\pi-\lambda_{r+1}^\pi
=
q\Bigl(\frac1{D_r^\pi}-\frac1{D_{r+1}^\pi}\Bigr)
=
\frac{q(D_{r+1}^\pi-D_r^\pi)}{D_r^\pi D_{r+1}^\pi}
\ge q^2,
\]
because $D_r^\pi,D_{r+1}^\pi\le 1$. Thus the singleton coefficient on $z$ is strictly larger for lower-ranked classes.
Now let
\(
M:=\max_{s\in\mathcal S,\ \omega\ni s} |\pi_s(\omega)|<\infty.
\)
Since each $b_r^\pi$ is a convex combination of primitive payoffs, $|b_r^\pi|\le M$ for every
$r$ and every $\pi$. Therefore, for $r<n$,
\(
v_{\pi(r+1)}^\pi(z)-v_{\pi(r)}^\pi(z)
=
(\lambda_{r+1}^\pi-\lambda_r^\pi)z+(b_{r+1}^\pi-b_r^\pi)
\ge
q^2|z|-2M.
\)
Choose
\(
\tilde z:=-\dfrac{2M+1}{q^2}.
\)
Then for every $z\le \tilde z$ and every $r<n$,
\(
v_{\pi(r+1)}^\pi(z)-v_{\pi(r)}^\pi(z)\ge 1>0.
\)
Hence
\(
v_{\pi(1)}^\pi(z)<v_{\pi(2)}^\pi(z)<\cdots<v_{\pi(n)}^\pi(z).
\)

It follows that, for every menu $\omega$, the unique maximizer of $v^\pi(z)$ on $\omega$ is
the $\pi$-highest element of $\omega$, i.e.\ exactly the class selected by $\sigma^\pi$.
Since the coordinates of $v^\pi(z)$ were defined by the consistency equations induced by
$\sigma^\pi$, the pair $(\sigma^\pi,v^\pi(z))$ is a strict pure VE. Because $\pi$ was arbitrary,
every strict total order is admissible in some VE; hence there are $n!$ strict pure VE.
Finally, fix $z\le \tilde z$. The last step in the proof of Theorem~\ref{th:generallow}
applies verbatim to each strict pure VE constructed above: since every class is selected at least
at its singleton menu, each such strict pure VE is a strict pure LSVE, and therefore, by
Theorem~\ref{th:strictpurestable}, for each order $\pi$ there exists
$\hat\beta_\pi(z)<\infty$ such that for all $\beta\ge \hat\beta_\pi(z)$, the unique SVE in a
neighborhood of $v^\pi(z)$ is locally asymptotically stable.
Because there are only finitely many permutations, setting
\(
\hat\beta(z):=\max_\pi \hat\beta_\pi(z)<\infty
\)
completes the proof.
\end{proof}

\vspace{-0.25in}
\subsection{Equivalence of the Myopic and the Forward-Looking CQL dynamics (modulo translations)}
\begin{lemma}
\label{lem:translation-reduction}

Let $\mathcal S=\{1,\dots,n\}$ index similarity classes and consider the CQL
ODE
\[
\dot{\mathbf v}
\;=\;
\hat f (\mathbf v)
\;:=\;
g(\mathbf v)
+\gamma\, C(\mathbf v)\,\mathbf 1
-\mathbf v,
\qquad
\sigma^s_{\omega}(\mathbf v)=\frac{\exp(\beta v_s)}{\sum_{j\in\omega}\exp(\beta v_j)}.
\]
Assume that the continuation term satisfies the shift property
\(
C(\mathbf v+c\mathbf 1)=C(\mathbf v)+c,
\ \forall\,c\in\mathbb R,
\)
which is satisfied by the $\max$ (Q-learning), LogSumExp (dynamic logit), and on-policy (SARSA) continuation operators. Then the following statements hold for $\gamma\in[0,1)$:

\begin{enumerate}
\item[(i)]
The softmax is translation-invariant, hence $g(\mathbf v+c\mathbf 1)=g(\mathbf v)$ for all
$c\in\mathbb R$. Consider the forward-looking CQL ODE:
\(
\dot{\mathbf v}
=
g(\mathbf v)+\gamma C(\mathbf v)\mathbf 1-\mathbf v,
\)
where $g(\mathbf v+c\mathbf 1)=g(\mathbf v)$ and
$C(\mathbf v+c\mathbf 1)=C(\mathbf v)+c$ for all $c\in\mathbb R$.
Let $\eta(t)$ solve
\(
\dot\eta(t)=\gamma\,C\big(\mathbf v(t)\big)-\eta(t),
\ \eta(0)\in\mathbb R,
\)
and define $\tilde{\mathbf v}(t):=\mathbf v(t)-\eta(t)\mathbf 1$.
Then $\tilde{\mathbf v}$ solves the myopic mean-field ODE
\(
\dot{\tilde{\mathbf v}}=g(\tilde{\mathbf v})-\tilde{\mathbf v}.
\)

\item[(ii)]  Fix a pivot class $q\in\mathcal S$ and define
\( u_s = v_s - v_q, \forall\ s\in\mathcal S,\ \text{and} \ U=\{u\in\mathbb{R}^{\mathcal S}:u_q=0\}\ \cong\ \mathbb{R}^{n-1}. \) The relative valuations $\mathbf u$ evolve according to
the $(n\!-\!1)$-dimensional ODE on $U$:
\(
\dot u_s \;=\; \tilde g_s(\mathbf u)\;-\;u_s,\ s\in\mathcal S\setminus\{q\},\quad u_q\equiv 0,\)
where
\(
\tilde g_s(\mathbf u)\;:=\; g_s(\mathbf u+c\mathbf 1)\;-\; g_q(\mathbf u+c\mathbf 1)
\)
is well-defined and independent of $c$ by translation invariance.

\item[(iii)]
If $\mathbf v^*$ is a rest point of the full ODE system $\hat f$, then
$\mathbf u^*=\mathbf v^*-v_q^{*}\mathbf 1\in U$ solves $\tilde g(\mathbf u^*)= \mathbf u^*$.
Conversely, if $\mathbf u^\dagger\in U$ satisfies $\tilde g(\mathbf u^\dagger)=\mathbf u^\dagger$,
then
\(
c^\dagger
:=\dfrac{g_q(\mathbf u^\dagger)+\gamma C(\mathbf u^\dagger)}{1-\gamma}
\)
is well-defined and $\mathbf v^\dagger=\mathbf u^\dagger+c^\dagger\mathbf 1$ is a rest point
of the full system.

\item[(iv)]
For any $\mathbf v$, $Dg(\mathbf v)\,\mathbf 1=\mathbf 0$. The shift property of $C$
implies $DC(\mathbf v)\,\mathbf 1=1$, hence
\(
D\hat f(\mathbf v)\,\mathbf 1 = -(1-\gamma)\,\mathbf 1.
\)
Fix a pivot $q$ and consider the linear change of variables
\(
(\mathbf u,c)=\Phi(\mathbf v)
\)
given by $u_s=v_s-v_q$ for $s\neq q$, $u_q=0$, and $c=v_q$ (so $\mathbf v=\Psi(\mathbf u,c)=\mathbf u+c\mathbf 1$).
In these coordinates the reduced dynamics for $\mathbf u$ are autonomous and the $c$-direction satisfies
\(
\dot c=\dot v_q = \hat f_q(\mathbf u+c\mathbf 1).
\)
Thus, the Jacobian of the full system at a rest point is similar to a block lower-triangular matrix
whose diagonal blocks are the Jacobian of the reduced system on $U$ and the scalar $-(1-\gamma)$.
Consequently, the eigenvalues of the reduced Jacobian coincide with the eigenvalues of $D\hat f(\mathbf v)$
other than $-(1-\gamma)$, and local asymptotic stability of hyperbolic rest points is preserved under reduction.
\end{enumerate}
\end{lemma}

\begin{proof}
(i)  For any $c\in\mathbb{R}$, $g(\mathbf v+c\mathbf 1)=g(\mathbf v)$ because for any $\omega \in \Omega$,
\[
\sigma^s_{\omega}(\mathbf v+c\mathbf 1)
=\dfrac{\exp(\beta(v_s+c))}{\sum_{j\in\omega}\exp(\beta(v_j+c))}
= \sigma^s_{\omega}(\mathbf v).\]
Differentiate $\tilde{\mathbf v}=\mathbf v-\eta\mathbf 1$:
\(
\dot{\tilde{\mathbf v}}
=
\dot{\mathbf v}-\dot\eta\,\mathbf 1
=
\big(g(\mathbf v)+\gamma C(\mathbf v)\mathbf 1-\mathbf v\big)
-\big(\gamma C(\mathbf v)-\eta\big)\mathbf 1
=
g(\mathbf v)-(\mathbf v-\eta\mathbf 1).
\)
By translation invariance of $g$, $g(\mathbf v)=g(\tilde{\mathbf v})$, hence
$\dot{\tilde{\mathbf v}}=g(\tilde{\mathbf v})-\tilde{\mathbf v}$.

(ii) Differentiating $u_s=v_s-v_q$ gives
\(
\dot u_s = \dot v_s-\dot v_q = \big(g_s(\mathbf v)+\gamma C(\mathbf v)-v_s\big)-\big(g_q(\mathbf v)+\gamma C(\mathbf v)-v_q\big)
= \big(g_s(\mathbf v)-g_q(\mathbf v)\big) - u_s.
\)
Since $g_s(\mathbf v)-g_q(\mathbf v)$ is unchanged by adding $c\mathbf 1$ to $\mathbf v$, we may write it as $\tilde g_s(\mathbf u)$, yielding the closed ODE on $U$, \( \dot u_s \;=\; \tilde g_s(\mathbf u)\;-\;u_s, \forall\ s \in \mathcal S \setminus \{q\}\) with $u_q \equiv 0$.

(iii) If $\hat f(\mathbf v^*)=\mathbf 0$, then
\(
0=\hat f_s(\mathbf v^*)-\hat f_q(\mathbf v^*)
=\big(g_s(\mathbf v^*)-g_q(\mathbf v^*)\big)-(v_s^*-v_q^*),
\)
so $\tilde g(\mathbf u^*)=\mathbf u^*$. Conversely, suppose
$\tilde g(\mathbf u^\dagger)=\mathbf u^\dagger$. For any $c$,
translation invariance implies $g_q(\mathbf u^\dagger+c\mathbf 1)=g_q(\mathbf u^\dagger)$,
and the shift property gives $C(\mathbf u^\dagger+c\mathbf 1)=C(\mathbf u^\dagger)+c$.
Setting
\(
c^\dagger=\dfrac{g_q(\mathbf u^\dagger)+\gamma C(\mathbf u^\dagger)}{1-\gamma}
\)
yields
\(
g(\mathbf u^\dagger+c^\dagger\mathbf 1)+\gamma C(\mathbf u^\dagger+c^\dagger\mathbf 1)\mathbf 1
=\mathbf u^\dagger+c^\dagger\mathbf 1,
\)
so $\mathbf v^\dagger:=\mathbf u^\dagger+c^\dagger\mathbf 1$ is a rest point.

(iv) To compare spectra with the reduced system, use the linear coordinates from (ii)--(iii).
Let $\Phi:\mathbb R^{\mathcal S}\to U\times \mathbb R$ be given by
$u_s=v_s-v_q$ for $s\neq q$, $u_q=0$, and $c=v_q$, so that
$\Psi(\mathbf u,c):=\mathbf u+c\mathbf 1$ satisfies $\Psi\circ\Phi=\mathrm{id}$.
By (ii), the $\mathbf u$-dynamics are autonomous: $\dot{\mathbf u}=F(\mathbf u):=\tilde g(\mathbf u)-\mathbf u$.
Moreover, along $\mathbf v=\Psi(\mathbf u,c)$ we have
\(
\dot c=\dot v_q=g_q(\mathbf u+c\mathbf 1)+\gamma C(\mathbf u+c\mathbf 1)-c
= g_q(\mathbf u)+\gamma C(\mathbf u)+(\gamma-1)c,
\)
where we used translation invariance of $g$ and the shift property of $C$. Hence, in $(\mathbf u,c)$ coordinates the vector field takes the triangular form
\[
\begin{pmatrix}\dot{\mathbf u}\\ \dot c\end{pmatrix}
=
\begin{pmatrix}F(\mathbf u)\\ h(\mathbf u)+(\gamma-1)c\end{pmatrix},
\qquad
h(\mathbf u):=g_q(\mathbf u)+\gamma C(\mathbf u).
\] At a rest point $(\mathbf u^*,c^*)$, the Jacobian therefore has the block lower-triangular form
\[
D(\Phi\circ \hat f\circ \Psi)(\mathbf u^*,c^*)
=
\begin{pmatrix}
DF(\mathbf u^*) & 0\\
Dh(\mathbf u^*) & \gamma-1
\end{pmatrix}.
\]
Since $\Phi$ is invertible, this matrix is similar to $D\hat f(\mathbf v^*)$.
Thus the spectrum of $D\hat f(\mathbf v^*)$ consists of the eigenvalues of the reduced Jacobian
$DF(\mathbf u^*)$ together with the strictly negative additional eigenvalue $\gamma-1=-(1-\gamma) < 0$, proving the claim.
\end{proof}

\end{document}